% =============================================================================
% Final .tex File for RFS Submission
% =============================================================================

% DOCUMENT CLASS & FONT
% Sets the document to a single column, 12pt font, which meets RFS guidelines.
\documentclass[12pt, english]{article}

% ENCODING & LANGUAGE
\usepackage[T1]{fontenc}
\usepackage[utf8]{inputenc}
\usepackage{babel}

% PAGE LAYOUT & SPACING (CRITICAL FOR RFS)
\usepackage[margin=1in]{geometry} % Standard 1-inch margins
\usepackage{setspace}
\doublespacing % The non-negotiable double-spacing command.

% BIBLIOGRAPHY & CITATIONS
% natbib is used for the required author-year citation style.
\usepackage{natbib}
\setcitestyle{authoryear,open={(},close={)}}

% MATH & THEOREMS
\usepackage{amsmath}
\usepackage{amssymb}
\usepackage{amsthm}
\usepackage{bm} % For \boldsymbol

% USER-REQUESTED PACKAGES (RETAINED FROM YOUR PREAMBLE)
\usepackage{multirow}      % For tables with multi-row cells
\usepackage{algorithm}     % For algorithm environments
\usepackage{algpseudocode} % For pseudocode within algorithms
\usepackage{placeins}      % For \FloatBarrier
\usepackage{seqsplit}      % For splitting long character sequences
\usepackage{ragged2e}      % For advanced text alignment

% FIGURES & TABLES
\usepackage{graphicx}
\usepackage{booktabs}    % For professional tables (\toprule, etc.)
\usepackage{rotating}    % For landscape tables/figures
\usepackage{caption}
\usepackage{subfigure}  % Modern replacement for the old 'subfigure' package
\usepackage{tabularx}

% UTILITY & TYPOGRAPHY
\usepackage{microtype}   % Improves justification and spacing
\usepackage{hyperref}    % Creates PDF hyperlinks (for references, etc.)
\hypersetup{
    colorlinks=true,
    linkcolor=blue,
    filecolor=magenta,      
    urlcolor=cyan,
    citecolor=blue,
}

\usepackage{authblk}

% --- Custom Environments ---
\newtheorem{prop}{Proposition}
\newtheorem{assumption}{Assumption}

% =============================================================================
% DOCUMENT START
% =============================================================================
\begin{document}

\def\bibland{and}

\title{Predicting Market Troughs: A Machine Learning Approach with Causal Interpretation}

\author[1]{Peilin Rao\thanks{\texttt{jackrao@g.ucla.edu}}}
\author[1]{Randall R. Rojas\thanks{\texttt{rrojas@econ.ucla.edu}}}

\affil[1]{Department of Economics, University of California, Los Angeles, US}

\renewcommand\Authands{ and } % To use "and" instead of a comma between authors

\date{Sept 4, 2025}
\maketitle

% Note for corresponding author can be added in the abstract or as a separate footnote
% For example, you could add this right after \maketitle:
\begingroup\def\thefootnote{}\footnote{Corresponding author: Peilin Rao.}\endgroup

% ABSTRACT
% Must be 100 words or less.
\begin{abstract}
%% Text of abstract
\noindent This paper provides robust, new evidence on the causal drivers of market troughs. We demonstrate that conclusions about these triggers are critically sensitive to model specification, moving beyond restrictive linear models with a flexible DML average partial effect causal machine learning framework. Our robust estimates identify the volatility of options-implied risk appetite and market liquidity as key causal drivers—relationships misrepresented or obscured by simpler models. These findings provide high-frequency empirical support for intermediary asset pricing theories. This causal analysis is enabled by a high-performance nowcasting model that accurately identifies capitulation events in real-time.

\end{abstract}

\clearpage

%% main text
\section{Introduction}
Understanding the triggers of stock market troughs is of great economic significance: policymakers can conduct interventions to alleviate market capitulation and panick, and investors can make better informed asset allocation decisions. However, being able to move from pure prediction to credible causal inference for market troughs is none-trivial. The complex, nonlinear, and high dimensional nature of financial markets make causal inference susceptible to spurious conclusions with simplified models. This paper tackles the challenge by asking: What are the robust, causal drivers of market troughs, and how do our conclusions depend on the assumptions of the econometric models we use?

The rise of the "credibility revolution" in financial econometrics, empowered by methods such as Double/Debiased Machine Learning (DML) \citet{Chernozhukov2018}, provides a path to this challenge. DML establishes a framework of robust methods to obtain statistically significant causal estimates even with high-dimensional confounding. While these methods begin to gain traction in finance on research topics like asset-pricing factors \citep{Feng2020}, the use of DML for macro-finance questions like market timing is still nascent, with most recent contributions primarily focusing on prediction rather than formal causal inference \citep{Gu2020}. It is also important to highlight that any conclusions drawn from these complex causal inference methods are highly sensitive to econometric model specifications, an issue exacerbated by the high dimensional unobserved confounding \citep{Chernozhukov2022}.

Our primary contribution is a novel, comparative causal framework designed to  address this challenge by testing the robustness of economic conclusions to model specification. We achieve this objective in two stages. To illustrate the model specification sensitivity and build our case for a flexible approach, we first establish a baseline by using DML to the canonical partially linear model (PLR), which is widely understood and would serve as a benchmark. It allowed us to examine and isolate the impact of its inherent linearity assumption. We understand that PLR's linearity assumption is ill-posed for our binary, interactive market capitulation problem; therefore, we implement a more complex DML framework to estimate Average Partial Effect (APE), which explicitly model non-linear interactions. Comparing the causal interpretation of the two models is our core contribution, and we demonstrates that the more flexible APE model is necessary for credibility inferences. It corrects spurious causal interpretation from the linear model, sometimes providing estimates with reversed signs, and unveils new causal channels for market troughs, particularly the role of volatility in risk appetite and liquidity.

This flexible DML causal analysis is made possible through our robust and high-performing predictive pipeline, developed to nowcast the probability of a market trough. A central methodological challenge of identifying market capitulation is that the label for trough always depends on future data. Algorithms like \citet{Bry1971} inevitably rely on observing data from the future, thereby creating data leakage. Our solution to this paradox is to frame our prediction objective in the form of nowcasting: estimating in real-time the probability that the current period eventually be identified as a trough. This nowcast method generates timely market trough signal before the event is confirmed. Our main predictive model, a Support Vector Machine (SVM), is trained on a full set of over 200 features, constructed from options, futures and macroeconomic data to capture market microstructure, dealer positioning and sentiment. Our model has remarkable out-of-sample performance (ROC AUC of 0.89), and we demonstrate its economic significance through a stylized backtest, utilizing the model signal as a capitulation detector for future trading.

This paper brings together the strengths of three streams of literature. It addresses the problems in the traditional linear prediction literature \citet{Goyal2008}, the subsequent attempts to restore predictability through economic restrictions \citep{Campbell2008} and theoretical driven variables \citep{Lettau2001}, and extends the nonlinear approaches in the machine learning literature \citep{Gu2020} by  applying the rigor of modern causal inference with high dimensional confounders. Our contributions are:

\begin{itemize}

    \item First, and most importantly, we are the first to conduct a formal, comparative causal analysis framework of drivers of market trough. We have shown that transitioning from a DML-PLR to a DML-APE model is necessary for credible causal conclusions. The APE model provides causal evidence that the volatility of options-implied risk and market liquidity are important triggers, providing high frequency empirical support for modern intermediary asset pricing theories \citep{He2013}. All causal assertions are supported by formal sensitivity analysis \citep{Cinelli2020}.

    \item Second, we have created a transparent, high performance nowcasting model of market troughs with outstanding out-of-sample accuracy. We have interpreted its "black box" mechanics using SHAP (SHapley Additive exPlanations)\citep{Lundberg2017} and characterized its signal, providing a valuable early-warning system before market capitulation.

    \item Third, we have framed market trough prediction as a rare-event classification problem and have curated the most comprehensive feature set to date for the problem, moving beyond the traditional equity premium predictors to a comprehensive consideration of indicators of market structure and sentiment.

\end{itemize}

By bridging from prediction to modern, robust causality, we are providing not only a useful nowcasting tool, but also a rigorous economic understanding of the forces that shape market bottoms. The rest of the paper is structured as follows. Section 2 describes the data and feature engineering. Section 3 describes the predictive modelling framework. Section 4 provides the predictive results and interpretation. Section 5 provides a robustness analysis. Section 6 assesses economic significance. Section 7 describes our comparative causal methodology and provides the robust causal estimates. Section 8 concludes.

\section{Data and Feature Engineering}

We begin by incorporating diverse raw financial data, which is used to define our target variable and engineer a comprehensive set of features for market trough prediction.

\subsection{Data Sources}

We gather data from several high-quality sources from April 2013 to June 2025. Table \ref{tab:data_sources} lists the main data feeds, the series identifiers, and the time periods.

\begin{table}[t]
\centering
\caption{Data Sources and Characteristics}
\label{tab:data_sources}
\resizebox{\textwidth}{!}{
\begin{tabular}{lllll}
\toprule
\textbf{Source} & \textbf{Series Identifier} & \textbf{Description} & \textbf{Time Period} & \textbf{Native Freq.} \\
\midrule
Databento & \seqsplit{SPX.EOD} & SPX End-of-Day Option Chains & Apr 2013 - Jun 2025 & Daily \\
Databento & \seqsplit{CBOE.SPX.OPNINT} & SPX Option Open Interest & Apr 2013 - Jun 2025 & Daily \\
Databento & \seqsplit{CME.ES.OHLCV.1M} & E-mini S\&P 500 Futures OHLCV & Apr 2013 - Jun 2025 & 1-Minute \\
Databento & \seqsplit{CME.ES.BBO.1S} & E-mini S\&P 500 Futures BBO & Apr 2013 - Jun 2025 & 1-Second \\
CME DataMine & \seqsplit{ZQ} & 30-Day Fed Funds Futures & Apr 2013 - Jun 2025 & Daily \\
CME DataMine & \seqsplit{6E, 6J} & EUR/USD, JPY/USD Futures & Apr 2013 - Jun 2025 & Daily \\
FRED & \seqsplit{BAMLH0A0HYM2EY} & ICE BofA US High Yield Index Yield & Apr 2013 - Jun 2025 & Daily \\
FRED & \seqsplit{DGS1MO} & 1-Month Treasury Rate & Apr 2013 - Jun 2025 & Daily \\
FRED & \seqsplit{EFFR} & Effective Federal Funds Rate & Apr 2013 - Jun 2025 & Daily \\
Shiller Data & \seqsplit{S\&P Composite} & S\&P Composite Dividend Data & Apr 2013 - Jun 2025 & Monthly \\
Yahoo Finance & \textasciicircum VIX & CBOE Volatility Index (VIX) & Apr 2013 - Jun 2025 & Daily \\
\bottomrule
\end{tabular}
}
\justify
\small{\textit{Notes:} This table details the raw data sources used for feature engineering in this study. The overall sample period for the analysis runs from April 2013 to June 2025. "Native Freq." refers to the highest frequency at which the data is natively available from the source before any aggregation or resampling.}
\end{table}

%\FloatBarrier

\subsection{Trough Definition and Labelling}

We label significant market turning points using a variation of the \citet{Bry1971} Algorithm. The Bry-Boschan (BB) algorithm is a rule-based process to date business cycles. We adapt its methodology to identify significant peaks and troughs in the daily S\&P 500 log adjusted closed daily price series ($P_t$) by systematiclly removing minor price movements. The overall high-level procedure is outlined in Algorithm \ref{alg:bry_boschan_main}. The main procedure consists of first identifying all potential turning points of the price series, and then applying censoring rules. The complete implementation, including all helper functions, are provided in Algorithm \ref{alg:bry_boschan_full} in \ref{sec:appendix_bb_algo}.

The BB algorithm has a well-known property of being "backward looking", which means confirming a turning point at time $t$ requires future price series for a window after $t$. This presents a problem for identifying turning points near the end of the price series. To address the issue rigorously and make sure the labeled troughs are both complete and determined algorithmically, we apply the BB algorithm to the S\&P 500 price series extending to August 12, 2025. This lengthened data window is designed to provide the BB algorithm enough data so it can algorithmically identify all turning points till the end of the primary sample period of our study (June 2025). Any data after June 2025 is only  used for labeling trough and is excluded from the feature engineering, model training, or performance evaluation of our analysis. Through that,  we are able to develop a complete and objective set of trough labels that is rule based and without the reliance of any subjective judgement.

\begin{algorithm}[htbp]
\caption{High-Level Bry-Boschan Algorithm}
\label{alg:bry_boschan_main}
\begin{algorithmic}[1]
\Require Log price series $P_t$, window \protect\seqsplit{order}, \protect\seqsplit{min\_phase}, \protect\seqsplit{min\_cycle}.
\Ensure A DataFrame \protect\seqsplit{turns} of significant peaks and troughs.
\Procedure{IdentifyTurns}{$P_t$, \protect\seqsplit{order}, \protect\seqsplit{min\_phase}, \protect\seqsplit{min\_cycle}}
    \State Initialize \protect\seqsplit{turns} with all local peaks and troughs from $P_t$, sorted by date.
    \State \protect\seqsplit{turns} $\gets$ \Call{EnforceAlternation}{\protect\seqsplit{turns}} \Comment{Enforce P-T-P-T sequence.}
    \State \protect\seqsplit{turns} $\gets$ \Call{CensorPhases}{\protect\seqsplit{turns}, \protect\seqsplit{min\_phase}} \Comment{Censor short phases.}
    \State \protect\seqsplit{turns} $\gets$ \Call{CensorCycles}{\protect\seqsplit{turns}, \protect\seqsplit{min\_cycle}, $P_t$} \Comment{Censor short cycles.}
    \State \Return \protect\seqsplit{turns}
\EndProcedure
\end{algorithmic}
\end{algorithm}

Utilizing the modified BB algorithm, We identify turning points with economic significance. We show in Table \ref{tab:turning_points} the peaks and troughs that are identified in the sample period. The troughs in this table form the positive class labels for the market trough prediction model. In Figure \ref{fig:trough_viz}, we illustrated these troughs compared against the S\&P 500 price series, and show how the algorithms marks distinct main market troughs effectively.

An important methodological concern is that the backward looking nature of BB Algorithm creates data-leakage paradox for pure prediction tasks. To prevent leakage, we explicitly framed our objective as a nowcasting problem, analogous to the real-time detection of economic recessions, whose official labels from bodies like the NBER are also confirmed with a long ex-post lag. Thus, our model's goal is not to forecast trough probability, but to estimate, with only the features available up to day t ($\mathbf{x}_t$), the probability that that day $t$ eventually be labelled as a trough in the future. That distinction is critical in preventing data leakage: while the labels ($\mathbf{y}_t$) are defined in hindsight, the predictive features are inherently historical. Thus, our approach obeys the cardinal rule of time series analysis, providing a timely signal while maintaining the intended goal of detecting market capitulation.

It is necessary to highlight the nature of this nowcasting objective. The predictive features $\mathbf{x}_t$ are strictly historical, but the target label $\mathbf{y}_t$, by construction, originates by applying the BB algorithm, which needs to be matched to future price data. Thus, we must think of our model as a device to detect real-time, contemporaneously signature of the market state that is ex-post labelled as trough, and not treated as a direct forecast of future prices.

\begin{table}[b!]
\centering
\caption{Identified S\&P 500 Turning Points (2013-2025)}
\label{tab:turning_points}
\begin{tabular}{lcc}
\toprule
\textbf{Date} & \textbf{Log Price} & \textbf{Type} \\
\midrule
2013-04-18 & 7.3406 & Trough \\
2015-05-21 & 7.6643 & Peak   \\
2016-02-11 & 7.5116 & Trough \\
2018-09-20 & 7.9830 & Peak   \\
2018-12-24 & 7.7626 & Trough \\
2020-02-19 & 8.1274 & Peak   \\
2020-03-23 & 7.7131 & Trough \\
2022-01-03 & 8.4757 & Peak   \\
2022-10-12 & 8.1823 & Trough \\
2023-07-31 & 8.4314 & Peak   \\
2023-10-27 & 8.3230 & Trough \\
2025-02-19 & 8.7233 & Peak   \\
2025-04-08 & 8.5137 & Trough \\
\bottomrule
\end{tabular}
\flushleft % Aligns the notes to the left under the table
\small{\textit{Notes:} This table lists the economically significant market peaks and troughs identified in the daily S\&P 500 log price series. The turning points are determined using a modified version of the \citet{Bry1971} algorithm, as described in Section 2.2, with no manual intervention. The trough dates are used to generate the positive labels ($\mathbf{y}_t=1$) for the classification model.}
\end{table}

\begin{figure}[htbp]
    \centering
    \includegraphics[width=\textwidth]{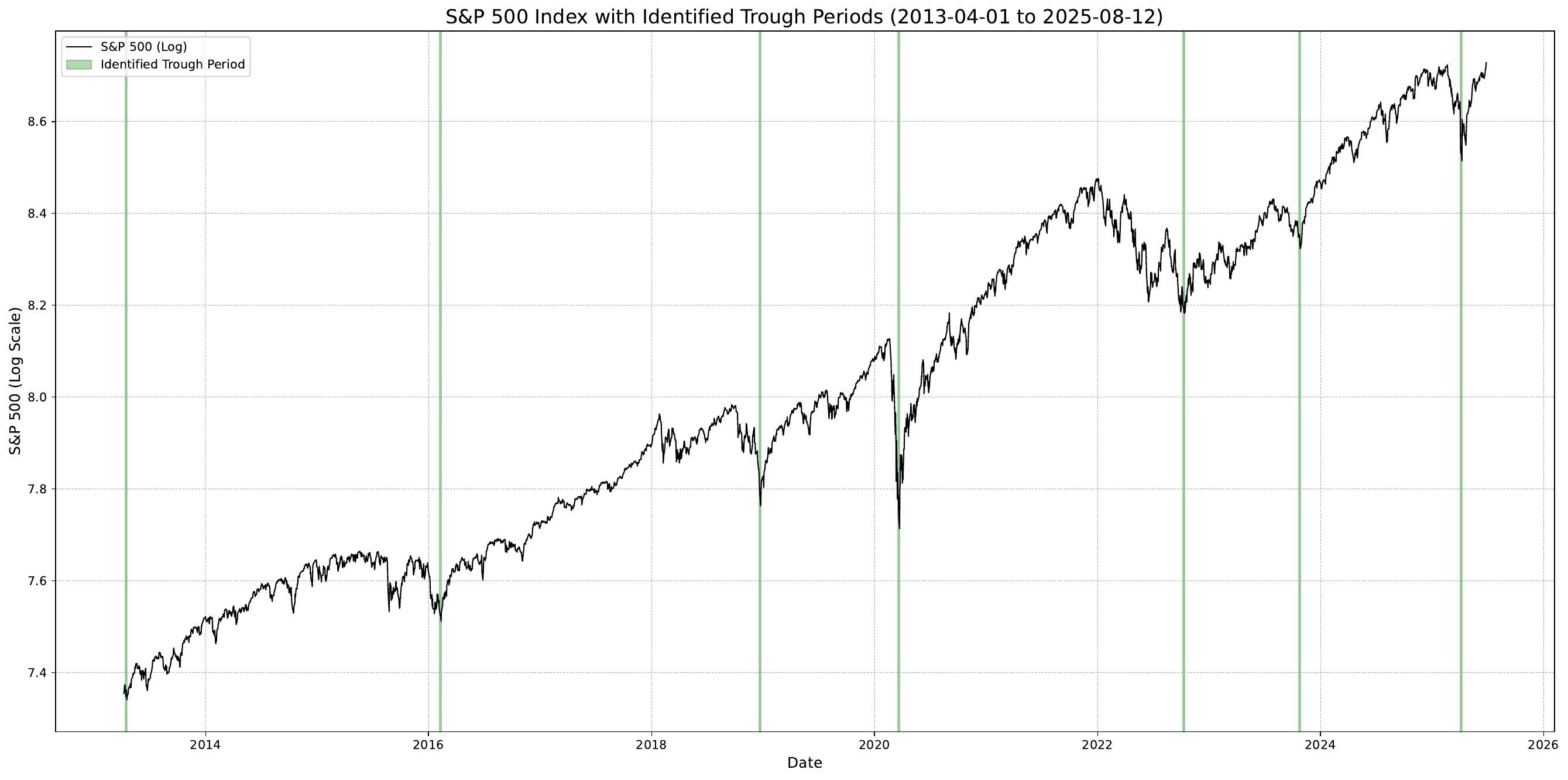}
    \caption{S\&P 500 Log Price and Identified Market Troughs (April 2013 - June 2025)}
    \label{fig:trough_viz}
    \justify
    \small{\textit{Notes:} This figure plots the daily log price of the S\&P 500 index over our sample period. The vertical green lines indicate the dates of significant market troughs. These troughs are identified using a pure implementation of the modified Bry-Boschan algorithm, as detailed in Section 2.2. To ensure all turning points within the sample period are identified algorithmically without end-of-sample ambiguity, the algorithm is applied to a data series extending beyond June 2025. The S\&P 500 price data is from Databento and Shiller's public database.}
\end{figure}

\subsection{Indicator Formation}

Using raw financial data, we construct a large and diverse set of over 200 possible predictors. To assist our analysis and economic interpretation, we group the indicators into two broad categories. The first group, which we call physical or structural indicators ($\mathbf{z}_t$), are intended to identify the underlying mechanics of the market (e.g., dealer positioning, monetary condition, liquidity). They represent actual flow and constraints in the financial system. Table \ref{tab:physical_features} details the definitions, mathematical expressions and economic rationale for the most relevant structural indicators. In addition to the indicators established from the literature, we also formulate a number of new metrics to capture economic characteristics not fully explained by standard measures. For instance, to measure the persistence and uni-directional nature of recent order flow, which is a potential sign of capitulation, we define a Flow Concentration measure. Similarly, we construct a measure for Unrealized Profits to measure the financial stress of recent market participants. The complete definitions can be found in Table \ref{tab:physical_features}. The second group of indicators, psychological or sentiment indicators ($\mathbf{u}_t$), is designed to quantify market fear, risk-seeking, and panic, that frequently reach extremes before market troughs. We present the complete definitions, mathematical expressions, and economic rationale for each of these indicators in Table \ref{tab:psych_features}.

To enhance the robustness of our study, we conduct a systematic treatment of outliers. We remove any value of Gamma Exposure indicators ($GEX_{OI}$ and $GEX_{V}$) exceeding the 99.9th percentile threshold. In the case of the open interest-based Put/Call Ratio ($PCR_{OI}$), we treat any observation of exactly zero as data artifacts and remove it from the analysis.

\begin{table}[tbp]
\centering
\caption{Physical/Structural Indicators ($\mathbf{z}_t$) and Economic Rationale}
\label{tab:physical_features}
\resizebox{\textwidth}{!}{%
\begin{tabular}{@{}lp{5.5cm}p{9cm}l@{}}
\toprule
\textbf{Name} & \textbf{Mathematical Definition} & \textbf{Economic Intuition} & \textbf{Reference(s)} \\
\midrule
GEX (OI) & $\sum_{i} (\Gamma_{C,i} \cdot \mathbf{OI}_{C,i} - \Gamma_{P,i} \cdot \mathbf{OI}_{P,i}) \times 100$ & Measures dealer gamma exposure from open positions. High positive GEX may suppress volatility, while low or negative GEX can amplify it. Capitulation troughs often occur in negative gamma regimes. & SqueezeMetrics \\
GEX (Volume) & $\sum_{i} (\Gamma_{C,i} \cdot \mathbf{V}_{C,i} - \Gamma_{P,i} \cdot \mathbf{V}_{P,i}) \times 100$ & Measures dealer gamma exposure from the day's trading volume, capturing intraday hedging pressures. &  \\
Delta Exposure & $\sum_{i} (\Delta_{C,i} \cdot \mathbf{OI}_{C,i} + \Delta_{P,i} \cdot \mathbf{OI}_{P,i}) \times 100$ & Measures net market delta positioning. Extremely low or negative values indicate bearish positioning and potential for short covering, often seen near troughs. & SqueezeMetrics \\
Order Flow Imbalance & \begin{tabular}[c]{@{}l@{}}$\sum_{k=1}^{N} \text{sign}(\mathbf{C}_k - \mathbf{O}_k) \cdot \text{Vol}_k$\\ \small{over 1-min bars}\end{tabular} & Proxy for net buying/selling pressure. Sustained, large negative OFI indicates aggressive selling that may precede seller exhaustion at a trough. & \citet{Easley2012} \\
Flow Concentration & $(\sum_{i=0}^{9} \mathbf{OFI}_{t-i}) \cdot \frac{\|\sum_{i=0}^{9} \mathbf{OFI}_{t-i}\|}{\sum_{i=0}^{9} \|\mathbf{OFI}_{t-i}\|}$ & Measures the persistence and unidirectionality of order flow. High negative values suggest sustained, concentrated selling, a hallmark of capitulation. &  \\
Unrealized Profit & $\frac{\mathbf{P}_t - \text{VWAP}_{63d}}{\text{VWAP}_{63d}}$ & Gauges the average unrealized profit/loss of market participants over a quarter. Large negative values mean recent participants are heavily underwater, increasing the odds of forced selling and a climax low. &  \\
Credit Spread & $\mathbf{Yld}_{HY} - \mathbf{Yld}_{RF}$ & The premium for bearing credit risk. A widening spread signals deteriorating economic conditions and heightened risk aversion, which peaks near market troughs. & \citet{Fama1989} \\
Amihud Illiquidity & $\frac{\|\mathbf{R}_{daily}\|}{\mathbf{V}_{\text{\$, daily}}}$ & Measures price impact. High values indicate illiquidity, as small volumes cause large price changes. Liquidity often vanishes near troughs. & \citet{Amihud2002} \\
FFR Slope & $\mathbf{P}_{C1} - \mathbf{P}_{C3}$ & Spread between 1st and 3rd Fed Funds futures. A steepening (more positive slope) can signal expectations of easier future policy, often a response to market stress. &  \\
FFR Basis & $(100 - \mathbf{P}_{C1}) - \text{EFFR}$ & The spread between the front-month implied Fed Funds rate and the spot effective rate. A positive basis indicates expected rate hikes or funding stress. &  \\
\bottomrule
\end{tabular}%
}
\justify
\small{\textit{Notes:} This table details a selection of the key physical and structural indicators used as predictive features in the analysis. These indicators are engineered to capture market microstructure, dealer positioning, and macroeconomic conditions that are less directly tied to immediate sentiment. All indicators are computed on a daily frequency for the full sample period from April 2013 to June 2025. The final features used in the model are transformations of these parent indicators, as described in Section 2.5.}
\end{table}

\begin{table}[tbp]
\centering
\caption{Psychological/Sentiment Indicators ($\mathbf{u}_t$) and Economic Rationale}
\label{tab:psych_features}
\resizebox{\textwidth}{!}{%
\begin{tabular}{@{}lp{5.5cm}p{9cm}l@{}}
\toprule
\textbf{Name} & \textbf{Mathematical Definition} & \textbf{Economic Intuition} & \textbf{Reference(s)} \\
\midrule
Realized Volatility & $\sqrt{252 \cdot (\sum_{i=1}^{M-1} r_{i,intra}^2 + r_{overnight}^2)}$ & Historical volatility from high-frequency data. Spikes in RV indicate panic and forced liquidation, which characterize market bottoms. & \citet{Andersen2003} \\
VIX & CBOE VIX Index methodology & Market's expectation of 30-day implied volatility. High VIX signals fear and demand for portfolio insurance, peaking at market troughs. & \citet{Whaley2000} \\
Volatility Risk Premium & $\mathbf{VIX}_t - \mathbf{RV}_t$ & The premium investors pay for protection against volatility. A negative VRP (realized > implied) often signals panic and deleveraging, a common feature of troughs. & \citet{Bollerslev2009} \\
PCR (OI) & $\frac{\sum \text{Put } \mathbf{OI}}{\sum \text{Call } \mathbf{OI}}$ & Ratio of open put to call contracts. High values indicate extreme bearish sentiment and hedging, which often precedes a market reversal. & \citet{Billingsley1988} \\
PCR (Volume) & $\frac{\sum \text{Put } \mathbf{Volume}}{\sum \text{Call } \mathbf{Volume}}$ & Ratio of traded put to call volume. Spikes indicate intense intraday fear and panic buying of puts, characteristic of capitulation lows. & \citet{Pan2006} \\
RN Skewness & $\mathbb{E}_{Q}[(\frac{K - \mu_K}{\sigma_K})^3]$ & Third moment of the risk-neutral distribution. Highly negative skew indicates high demand for OTM puts (crash protection), which is most pronounced at bottoms. & \citet{Bakshi2003} \\
RN Kurtosis & $\mathbb{E}_{Q}[(\frac{K - \mu_K}{\sigma_K})^4]$ & Fourth moment of the risk-neutral distribution. High kurtosis ("fat tails") indicates the market is pricing in a high probability of extreme moves. & \citet{Bakshi2003} \\
FX Momentum (EUR) & $\frac{\mathbf{P}_t - \mathbf{P}_{t-21}}{\mathbf{P}_{t-21}}$ & 21-day change in EUR/USD futures. Strong negative momentum (dollar strength) can reflect a "flight to safety" that accompanies equity market stress. & \citet{Asness2013} \\
FX Momentum (JPY) & $\frac{\mathbf{P}_t - \mathbf{P}_{t-21}}{\mathbf{P}_{t-21}}$ & 21-day change in JPY/USD futures. Strong positive momentum (yen strength) often reflects risk-off sentiment and carry trade unwinds during market turmoil. & \citet{Asness2013} \\
FX RV (EUR) & $\text{StDev}(\text{log}(\mathbf{P}_t/\mathbf{P}_{t-1}))_{21d} \cdot \sqrt{252}$ & 21-day rolling realized volatility of EUR/USD futures. Elevated volatility in major currency pairs often coincides with broad market deleveraging. & \citet{Andersen2003} \\
FX RV (JPY) & $\text{StDev}(\text{log}(\mathbf{P}_t/\mathbf{P}_{t-1}))_{21d} \cdot \sqrt{252}$ & 21-day rolling realized volatility of JPY/USD futures. Spikes in yen volatility are strongly associated with global risk-off events. & \citet{Andersen2003} \\
\bottomrule
\end{tabular}%
}
\justify
\small{\textit{Notes:} This table details a selection of the key psychological and sentiment indicators engineered for the predictive model. These indicators are designed to capture market fear, risk appetite, and panic, which often reach extreme levels near market troughs. All indicators are computed on a daily frequency for the full sample period from April 2013 to June 2025. The final features used in the model are transformations of these parent indicators, as described in Section 2.5.}
\end{table}

\subsection{Descriptive Statistics}

Table \ref{tab:desc_stats_parent} highlights summary statistics for the primary engineered parent indicators, and they provide insights that are important in our model design. Many series reveal non-normality; for example, the kurtosis of Gamma Exposure ({$gex_{oi}$}) is 1790.7, and the kurtosis for Realized Volatility ($RV$) is 32.4. Additionally, many series, like the credit spread and the VIX, are persistent: the first-order autocorrelation coefficients ($\rho(1)$) are very high, at 0.998 and 0.970 respectively. The fat tails and strong persistence properties of raw financial series motivate us to apply non-parametric scaling and non-linear machine models later.

\begin{table}[tbp]
\centering
\caption{Descriptive Statistics for Parent Indicators (2013-2025)}
\label{tab:desc_stats_parent}
\resizebox{\textwidth}{!}{
\begin{tabular}{lrrrrrrr}
\toprule
\textbf{Indicator} & \textbf{Mean} & \textbf{Std. Dev.} & \textbf{Skewness} & \textbf{Kurtosis} & \textbf{Min} & \textbf{Max} & \textbf{$\rho(1)$} \\
\midrule
\multicolumn{8}{l}{\textit{Panel A: Physical/Structural}} \\
\text{gex\_oi} & 6.65e+04 & 2.21e+06 & 41.417 & 1790.665 & -8.14e+06 & 1.03e+08 & 0.682 \\
\text{gex\_volume} & 6.43e+04 & 1.81e+06 & 30.648 & 1001.631 & -6.54e+06 & 6.47e+07 & 0.013 \\
\text{dex\_oi} & 3.06e+07 & 7.07e+07 & -0.795 & 5.632 & -4.45e+08 & 4.36e+08 & 0.943 \\
\text{ofi} & -4373.653 & 9.46e+04 & -0.582 & 3.428 & -6.70e+05 & 3.79e+05 & 0.054 \\
\text{credit\_spread} & 0.049 & 0.016 & 0.258 & 0.024 & 0.014 & 0.114 & 0.998 \\
\text{amihud\_illiquidity} & 9.94e-12 & 3.47e-11 & 0.000 & 0.000 & 0.000 & 1.17e-09 & -0.049 \\
\text{ffr\_slope} & 0.066 & 0.237 & 1.807 & 6.425 & -0.615 & 1.203 & 0.995 \\
\text{ffr\_basis} & 0.003 & 0.044 & 11.005 & 152.079 & -0.128 & 0.715 & 0.900 \\
\midrule
\multicolumn{8}{l}{\textit{Panel B: Psychological/Sentiment}} \\
\text{RV} & 12.700 & 9.840 & 4.105 & 32.356 & 0.758 & 133.842 & 0.669 \\
\text{VIX} & 17.812 & 6.942 & 2.730 & 13.973 & 9.140 & 82.690 & 0.970 \\
\text{VRP} & 3.336 & 5.042 & -3.574 & 30.486 & -58.725 & 16.729 & 0.662 \\
\text{PCR\_OI} & 1.819 & 0.185 & 0.178 & -0.554 & 1.389 & 2.489 & 0.987 \\
\text{PCR\_V} & 1.390 & 0.318 & 0.356 & 0.335 & 0.533 & 3.092 & 0.729 \\
\bottomrule
\end{tabular}
}
\justify
\small{\textit{Notes:} This table reports summary statistics for the untransformed "parent" indicators at a daily frequency for the sample period April 2013 to June 2025. The final column, $\rho(1)$, reports the first-order autocorrelation coefficient. The pronounced non-normality (e.g., kurtosis of 1790 for \text{gex\_oi}) and high persistence (e.g., $\rho(1) > 0.9$ for VIX and credit spreads) motivate the feature transformations and use of nonlinear models detailed in Sections 2.5 and 3.}
\end{table}

\subsection{Advanced Feature Engineering and Scaling}

The features generated are subject to further transformation. For any input time series $\mathbf{X}_t$, we calculate the Rate-of-Change (ROC)\footnote{The acronym ROC is used throughout this paper in our feature names (e.g., `\_roc63\_`) to indicate Rate-of-Change. It should be noted that this acronym should not be confused with the Receiver Operating Characteristic (ROC) curve for evaluating models, which we refer to as the ROC AUC.}, Trend Z-Score, and Wavelet Decomposition components. Lastly, we apply a rolling percentile rank transformation across a 252-day window to all features, which converts the value of each feature into the interval $[-1,1]$. Thus, we have a robust, non-parametric representation of each feature's value with respect to its recent history.

\section{Predictive Modeling Framework}

\subsection{Problem Formulation and Labeling}

We formulate the task as a binary classification objective, and we illustrate the mechanism in Figure \ref{fig:labeling_methodology}. For any trough date $T$, a positive label ($\mathbf{y}_t=1$) is assigned to all time steps $t$ in the $W_L$-day labeling window immediately before the trough, such that $t \in [T-W_L, T]$. All other days are assigned a negative label ($\mathbf{y}_t=0$). Thus, the objective for the model is to predict $P(\mathbf{y}_t=1 | \mathcal{X}_t)$, i.e. the probability that day $t$ is an ex-post confirmed trough timeperiod, using only the historical feature data that exists in $\mathcal{X}_t$. In order to incorporate the temporal dynamics, the input for a prediction is a tensor $\mathcal{X}_t \in \mathbb{R}^{L \times D}$, corresponding to the $D$ feature vectors from the last $L$ time steps, which assembles to the lookback window.

\begin{figure}[htbp]
    \centering
    \includegraphics[width=\columnwidth]{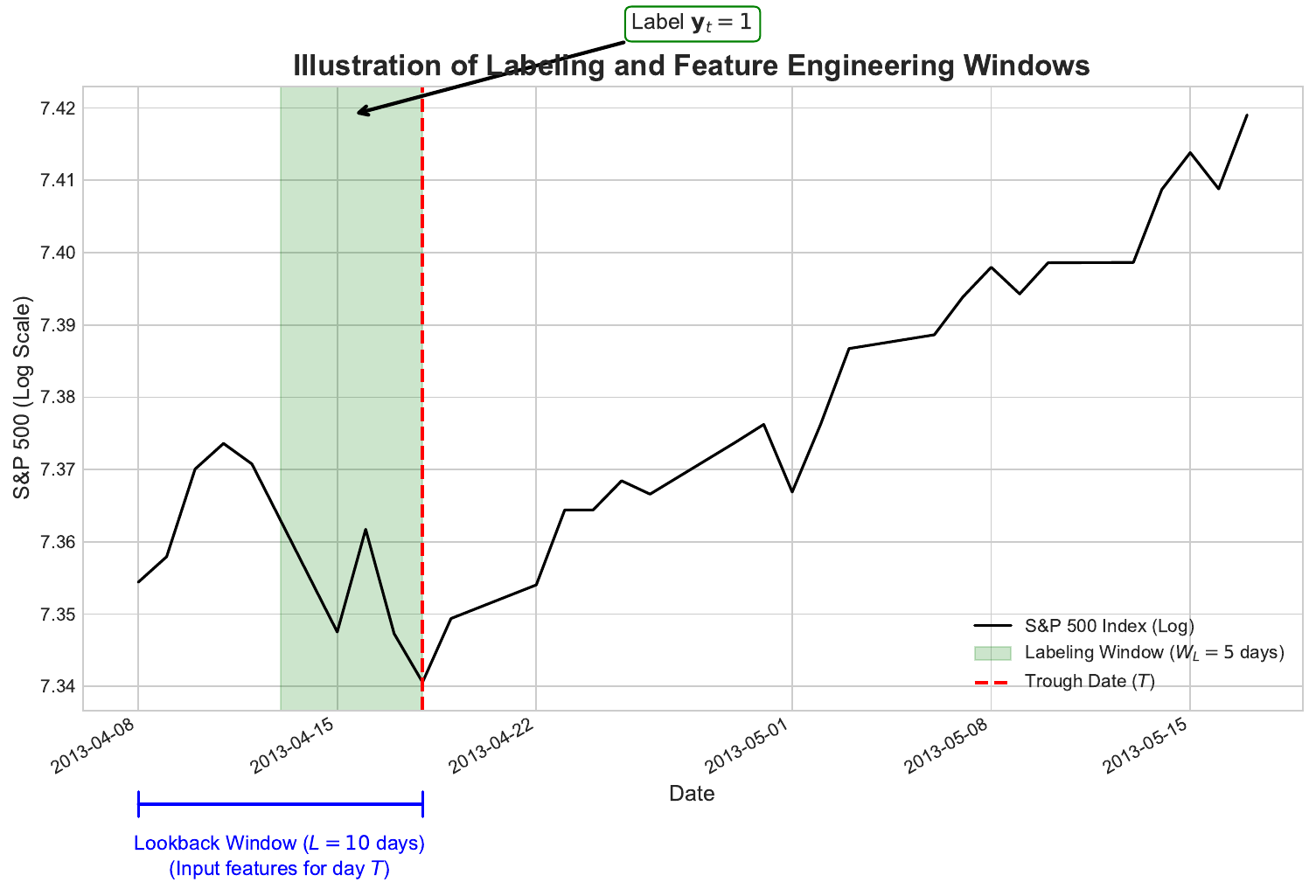}
    \caption{Illustration of the Labeling and Feature Engineering Methodology. 
    The figure plots the daily log price of the S\&P 500 Index for an illustrative period around the April 18, 2013 market trough. The trough date is denoted as $T$. For our classification task, a positive label ($\mathbf{y}_t=1$) is assigned to all days within the shaded green \textit{labeling window} ($W_L=5$ days), defined as the period $t \in [T-W_L, T]$. The model's prediction for any given day uses input features derived from the data in the preceding blue \textit{lookback window} ($L=10$ days).}
    \label{fig:labeling_methodology}
\end{figure}

\subsection{Feature Aggregation and Stationarity}

The sequence tensor $\mathcal{X}_t$ is aggregated to a feature vector $\mathbf{x}_t \in \mathbb{R}^{4D}$ by calculating four statistics for each of the $D$ features over the lookback window $L$: Mean, Standard Deviation, Trend (slope of linear regression), and Last Value. This is important as it converts non-stationary indicators into stationary features. An Augmented Dickey-Fuller (ADF) test demonstrated that the percentage of stationary features increased from 90.6\% in the parent set to 100\% in the final aggregated set, increasing the stability of the model.

\subsection{Model Training and Hyperparameter Tuning}

We used a nested cross-validation pipeline on a `TimeSeriesSplit` of the data to select the best model while avoiding data leakage. The inner loop is for hyperparameter tuning, while the outer loop provides an unbiased estimate of generalization performance. The pipeline within each fold is:

\begin{enumerate}

    \item \textbf{Data Augmentation}: We use SMOTE (Synthetic Minority Over-sampling Technique) to the training set, given the severe class imbalance.

    \item \textbf{Feature Scaling}: We fit a `StandardScaler` only on augmented training data.

    \item \textbf{Feature Selection}: We train a Random Forest classifier, and selected the top $N$ features from the training data based on Gini importance.

    \item \textbf{Model Fitting}: We train an SVM to the final processed training data.

\end{enumerate}

Table \ref{tab:hyperparams} presents the search space and final values determined through these steps. We selected hyperparameters using the one-standard-error rule\footnote{The one-standard-error rule is a heuristic for choosing a parsimonius model, which has statistically similar predictive performance to the most optimal model determined through cross-validation. The procedures for the one-standard-error rule are as follows. The candidate model with the maximum mean score is identified and its standard error is computed. From all candidate models whose mean score is within one standard error of the best mean score, we choose the simplest model. In our case, the prefered SVM model is specifically the one with the least features (linear kernel, and lowest C, or regularization parameter.)} to balance predictive performance against model complexity. For the final production model (trained on the entire main dataset for evaluation using the hold-out test set), the pipeline resulted in identifying the 15 features listed in Table \ref{tab:final_features}.

\begin{table}[htbp]
\centering
\caption{Hyperparameter Tuning and Final Values}
\label{tab:hyperparams}
\resizebox{\columnwidth}{!}{%
\begin{tabular}{lcc}
\toprule
\textbf{Hyperparameter} & \textbf{Search Space} & \textbf{Final Value} \\
\midrule
Labeling Window $W_L$ (days) & \{5, 10, 15, 20, 25, 30\} & 5 \\
Lookback Window $L$ (days) & \{5, 10, 15, 20, 25, 30\} & 10 \\
Number of Features $N_{features}$ & \{10, 15, 20, 25, 30\} & 15 \\
Augmentation Method & \{none, SMOTE, jitter, mixup\} & SMOTE \\
Oversample Factor & \{0.5, 1.0, 1.5\} & 1.0 \\
SVM Kernel & \{linear, rbf\} & linear \\
SVM C (Regularization) & \{0.01, 0.1, 1.0\} & 0.01 \\
\bottomrule
\end{tabular}%
}
\flushleft
\small{\textit{Notes:} This table presents the hyperparameter search space and the optimal values selected for the primary SVM classification model. The selection is performed using a nested time-series cross-validation procedure on the training data. The final values are chosen to maximize the out-of-fold ROC AUC score, applying the one-standard-error rule to select the simplest model within one standard error of the best-performing model.}
\end{table}

\begin{table}[htbp]
\centering
\caption{Final 15 Features for the Production Model}
\label{tab:final_features}
\begin{tabular}{l}
\toprule
\textbf{Feature Name} \\
\midrule
\seqsplit{vix\_wave\_cA3\_scaled\_last} \\
\seqsplit{gex\_oi\_wave\_cA3\_scaled\_mean} \\
\seqsplit{upg\_63d\_scaled\_last} \\
\seqsplit{credit\_spread\_roc63\_scaled\_std} \\
\seqsplit{dex\_oi\_wave\_cA3\_scaled\_mean} \\
\seqsplit{dex\_oi\_wave\_cA3\_scaled\_last} \\
\seqsplit{realized\_volatility\_wave\_cA3\_scaled\_last} \\
\seqsplit{upg\_63d\_wave\_cA3\_scaled\_last} \\
\seqsplit{vix\_scaled\_mean} \\
\seqsplit{credit\_spread\_scaled\_last} \\
\seqsplit{gex\_oi\_roc63\_scaled\_std} \\
\seqsplit{realized\_volatility\_wave\_cA3\_scaled\_mean} \\
\seqsplit{pcr\_volume\_scaled\_std} \\
\seqsplit{upg\_63d\_wave\_cA3\_scaled\_mean} \\
\seqsplit{vix\_scaled\_last} \\
\bottomrule
\end{tabular}
\flushleft
\small{\textit{Notes:} This table lists the 15 features selected by a Random Forest classifier based on Gini importance. This selection is performed when the final production model is trained on the entire main dataset (Apr 2013 - Jun 2023), using the optimal hyperparameters found during cross-validation. The feature selection step is performed dynamically within each CV fold for model evaluation, meaning the feature sets used during CV vary. The list shown here represents the specific inputs for the final model that is evaluated on the hold-out test set.}
\end{table}

\subsection{Out-of-Sample Probability Calibration}

SVM raw scores are not well calibrated probabilities. To tackle this, we calibrate the scores into reliable probabilities using an IsotonicRegression model, which is trained on the out-of-sample predictions and true labels from each of the cross-validation folds, so as to prevent any data leakage during the calibration step. The upper panel in Figure \ref{fig:main_results} demonstrates the good calibration of the resulting nowcast on the hold-out test set.

\section{Empirical Results and Interpretations}

\begin{figure}[!t]
    \centering
    \includegraphics[width=0.85\textwidth]{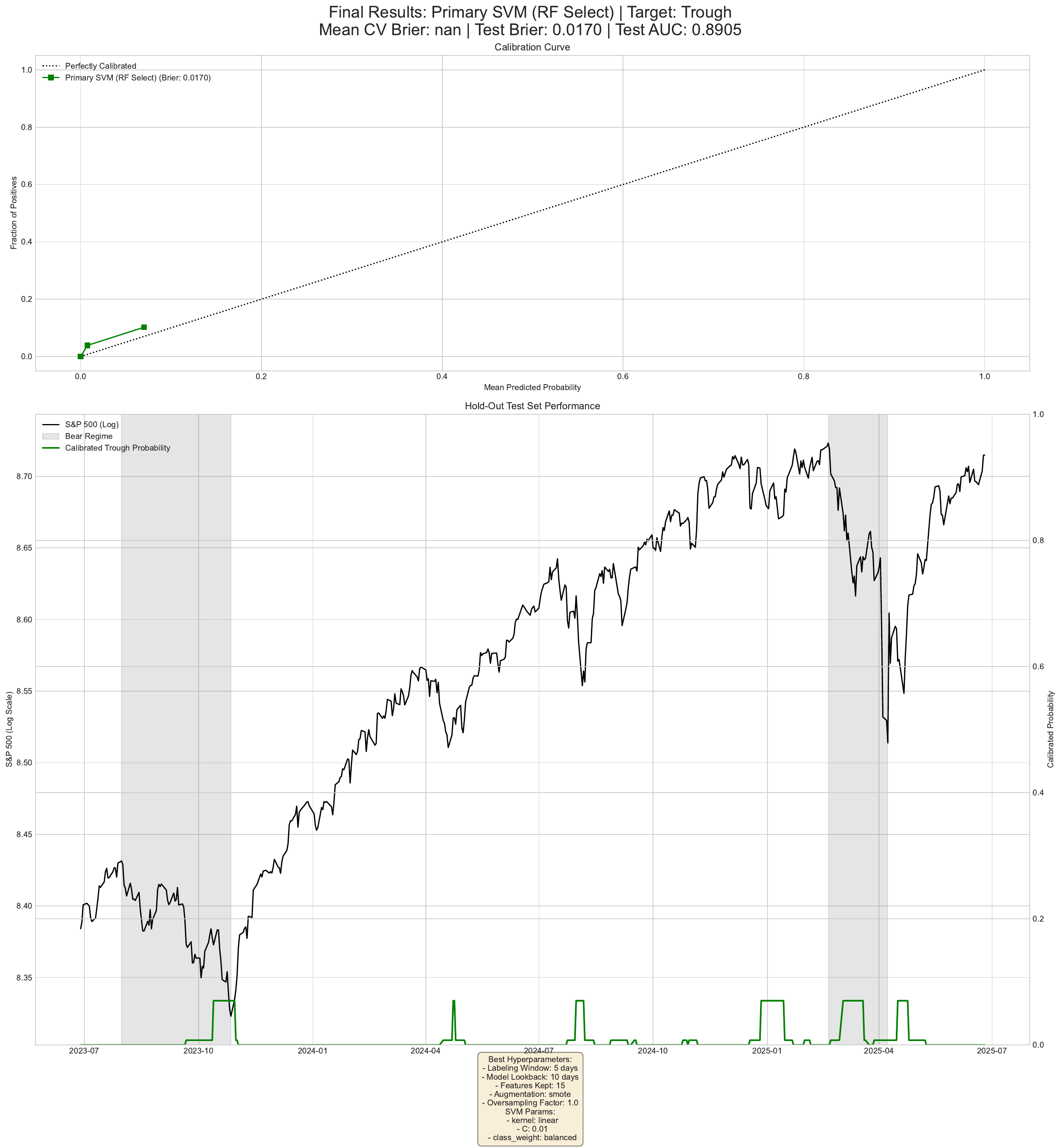}
    \caption{Out-of-Sample Predictive Performance and Trough Probabilities}
    \label{fig:main_results}
    \flushleft
    \small % Using \small for notes is common practice
    \textit{Notes:} This figure evaluates the primary model's performance on the hold-out test set from July 2023 to June 2025. The model is a Support Vector Machine (SVM) using 15 features selected via Random Forest Gini importance, with outputs calibrated post-hoc via Isotonic Regression (see Section 3.3 for details). 
    
    \textbf{Panel A (Top): Probability Calibration.} The panel plots the model's calibration curve. The dashed diagonal line represents perfect calibration. The solid line shows the model's reliability, achieving a Brier score of 0.0170. The model's overall discriminatory power, measured by the Area Under the ROC Curve (AUC), is 0.8905.
    
    \textbf{Panel B (Bottom): Predicted Trough Probability.} This panel plots the model's calibrated daily probability of being in a trough state (green line, right y-axis) against the S\&P 500 log price series (black line, left y-axis). The shaded vertical bars denote the actual market trough periods identified in Table \ref{tab:turning_points}. The hyperparameters for this model specification are listed in the embedded text box.
\end{figure}

We evaluate our primary model on the hold-out test set, present its performance against a number of benchmarks, and provide a detailed interpretation of the model predictions and drivers.

\subsection{Model Performance Evaluation}

\subsubsection{Decision on Performance Metrics}

Due to the extreme class imbalance in predicting rare market troughs, standard classification metrics are ill-posed for this problem. Precision, Recall, and F1-Score are all contingent on assigning a fixed decision threshold (e.g. 0.5) to a model's probabilistic output. When the positive class (a trough) is incredibly rare, we would expect a very well-calibrated model to assign a very low probability to this event on most days. That means the predicted probability is almost never going to cross the 0.5 mark, resulting in zero positive predictions. If the number of true positives is zero, Precision, Recall, and F1 all go to zero and the default threshold is not measuring the predictive power of the model itself, but the inappropriateness of using a default threshold. Thus, our assessment is based on two metrics that are threshold insensitive and directly assess the quality of the nowcasting market capitulation warning system:

\begin{itemize}

    \item \textbf{ROC AUC}: The area under the receiver operator characteristic curve quantifies how well a model \textit{ranks} the observations correctly. Specifically, if we are to randomly present both a positive instance and a negative instance, how often would the model rank the positive higher? A high AUC value indicates good discrimination.

    \item \textbf{Brier Score}: The Brier score measures the \textit{accuracy and calibration} of the probability forecast itself. It is the mean-squared error of each of the predicted probabilities against the truth of what happened (0 or 1). A lower Brier score means that the model's probability output is more trustworthy and closer to the true likelihood of the event.

\end{itemize}

We choose ROC AUC to measure discrimination, and Brier score to measure probabilistic reliability, because they provided the best representation of the practical value of the model.

\subsubsection{Model Performance and Benchmarks}

Our primary model shows strong capability with out-of-sample predictability and reliability on the hold-out test set. It attained a ROC AUC of \textbf{0.8905}, showing strong discriminatory power, and a Brier score of \textbf{0.0170}, indicating reliable probabilities. A visualized summary of the results is shown in Figure \ref{fig:main_results}. The top portion has the calibration curve, and the bottom shows practical utility of the model, showing that the green spikes in predicted trough probability act as the timely and accurate signal to actual market troughs. In non-capitulating stable market regimes, the model's green probability line remain around zero, demonstrating its ability to largely avoid false alarms.  

\FloatBarrier

We compare our nowcast model with a range of benchmarks in Table \ref{tab:model_performance}, which provides important context for our model's performance. The LassoCV model had the highest ROC AUC (0.9495), but the extremely poor Brier score (0.2528) underscores that the model's raw outputs are completely uncalibrated/invalid as probabilities; this highlights the merit of the post-hoc calibration done in our primary pipeline. The naive heuristic (VIX > 40) had low discriminatory power (AUC of 0.6656), confirming that our framework provides better utility. Lastly, the Gaussian Naive Bayes model performed worse than random guessing (AUC < 0.5), indicating that the core idea of conditional independence that governs the Naive Bayes model is violated. Thusly, our primary SVM model provided the best convergence of high discriminatory power combined with trustworthy probability nowcasting.

\begin{table}[htbp]
\centering
\caption{Out-of-Sample Performance Comparison on the Hold-Out Test Set}
\label{tab:model_performance}
\begin{tabular}{lcc}
\toprule
\textbf{Model} & \textbf{ROC AUC} & \textbf{Brier Score} \\
\midrule
\textbf{Primary SVM (RF Select)} & \textbf{0.8905} & \textbf{0.0170} \\
\midrule
\textit{Benchmark Models} \\
Vanilla SVM (All Features) & 0.9061 & 0.0176 \\
LassoCV & 0.9495 & 0.2528 \\
Heuristic (VIX > 40) & 0.6656 & 0.0140 \\
Gaussian Naive Bayes & 0.4878 & 0.0180 \\
\bottomrule
\end{tabular}
\flushleft
\small{\textit{Notes:} This table compares the out-of-sample performance of the primary SVM model against several benchmarks on the hold-out test set (July 2023 - June 2025). Performance is measured by the Area Under the ROC Curve (ROC AUC), which assesses discriminatory power (higher is better), and the Brier Score, which measures the accuracy of probability forecasts (lower is better). The "Primary SVM (RF Select)" model is the main model from Section 3, with features selected by a Random Forest. The benchmark models are trained and evaluated under an identical time-series cross-validation framework for a fair comparison. The "Heuristic (VIX > 40)" is a simple rule-based benchmark. The poor Brier score of the LassoCV model highlights its lack of probability calibration.}
\end{table}

\subsection{Interpretation of Model Predictions}

\subsubsection{Feature Importance with SHAP}
To understand which factors drive the model's predictions, we employ SHAP (SHapley Additive exPlanations) \citep{Lundberg2017}. Figure \ref{fig:shap_bar} provides a global summary by ranking features based on their mean absolute SHAP value, which represents their average impact on the model's output magnitude. The plot clearly identifies \seqsplit{gex\_oi\_roc63\_scaled\_std} (the standard deviation of the 63-day rate-of-change in Gamma Exposure) and \seqsplit{credit\_spread\_roc63\_scaled\_std} as the two most influential predictors.

To provide further insight into the characteristics of these key drivers, Table \ref{tab:desc_stats_agg} presents descriptive statistics for the five most important features identified by our SHAP analysis. The statistics show that our feature engineering and scaling process has successfully created well-behaved inputs for the model; compared to the raw parent indicators in Table \ref{tab:desc_stats_parent}, these final features have much lower skewness and kurtosis. Their high first-order autocorrelation, with $\rho(1)$ values exceeding 0.9 for most, confirms the persistent, trend-like nature of the signals the model has learned to rely on.

To understand the directionality and heterogeneity of these impacts, Figure \ref{fig:shap_beeswarm} visualizes the SHAP value for every individual prediction in our hold-out set. The interpretation reveals nuanced relationships. Examining the top feature, \seqsplit{gex\_oi\_roc63\_scaled\_std}, we observe a pattern that is consistent with the figure's caption: high values of this feature (red dots) are associated with negative SHAP values, meaning they push the prediction toward a lower probability of a trough. Conversely, low values of this feature (blue dots) have a neutral or positive impact. This suggests the model has learned that a trough is more probable not when GEX is changing chaotically, but when its rate-of-change is smoother and more persistent. For the second feature, \seqsplit{credit\_spread\_roc63\_scaled\_std}, high values (red dots) have positive SHAP values, confirming that rising volatility in credit spreads is a key indicator of market stress that contributes to the model's trough predictions.

\begin{table}[tbp]
\centering
\caption{Descriptive Statistics for Key Predictive Features}
\label{tab:desc_stats_agg}
\resizebox{\textwidth}{!}{
\begin{tabular}{lrrrrrrr}
\toprule
\textbf{Feature} & \textbf{Mean} & \textbf{Std. Dev.} & \textbf{Skewness} & \textbf{Kurtosis} & \textbf{Min} & \textbf{Max} & \textbf{$\rho(1)$} \\
\midrule
`gex\_oi\_roc63\_scaled\_std` & 0.306 & 0.146 & 0.457 & 0.020 & 0.012 & 0.914 & 0.927 \\
`credit\_spread\_roc63\_scaled\_std` & 0.130 & 0.099 & 1.662 & 4.337 & 0.002 & 0.707 & 0.954 \\
`realized\_volatility\_wave\_cA3\_scaled\_last` & -0.058 & 0.618 & 0.121 & -1.270 & -0.992 & 1.000 & 0.986 \\
`vix\_wave\_cA3\_scaled\_last` & -0.063 & 0.634 & 0.082 & -1.305 & -0.992 & 1.000 & 0.992 \\
`upg\_63d\_scaled\_last` & -0.027 & 0.597 & 0.050 & -1.212 & -0.992 & 1.000 & 0.950 \\
\bottomrule
\end{tabular}
}
\flushleft
\small{\textit{Notes:} This table presents summary statistics for the five most important final features used in the predictive model, as determined by the global SHAP analysis shown in Figure \ref{fig:shap_bar}. The statistics are calculated over the full sample period from April 2013 to June 2025, which comprises $N=3068$ daily observations. These "final aggregated features" are transformations (e.g., standard deviation, wavelet component) of the parent indicators from Table \ref{tab:desc_stats_parent} and have been scaled to the interval [-1, 1]. The final column, $\rho(1)$, is the first-order autocorrelation coefficient, indicating high persistence in these key predictive signals.}
\end{table}

\begin{figure}[htbp]
    \centering
    \includegraphics[width=\columnwidth]{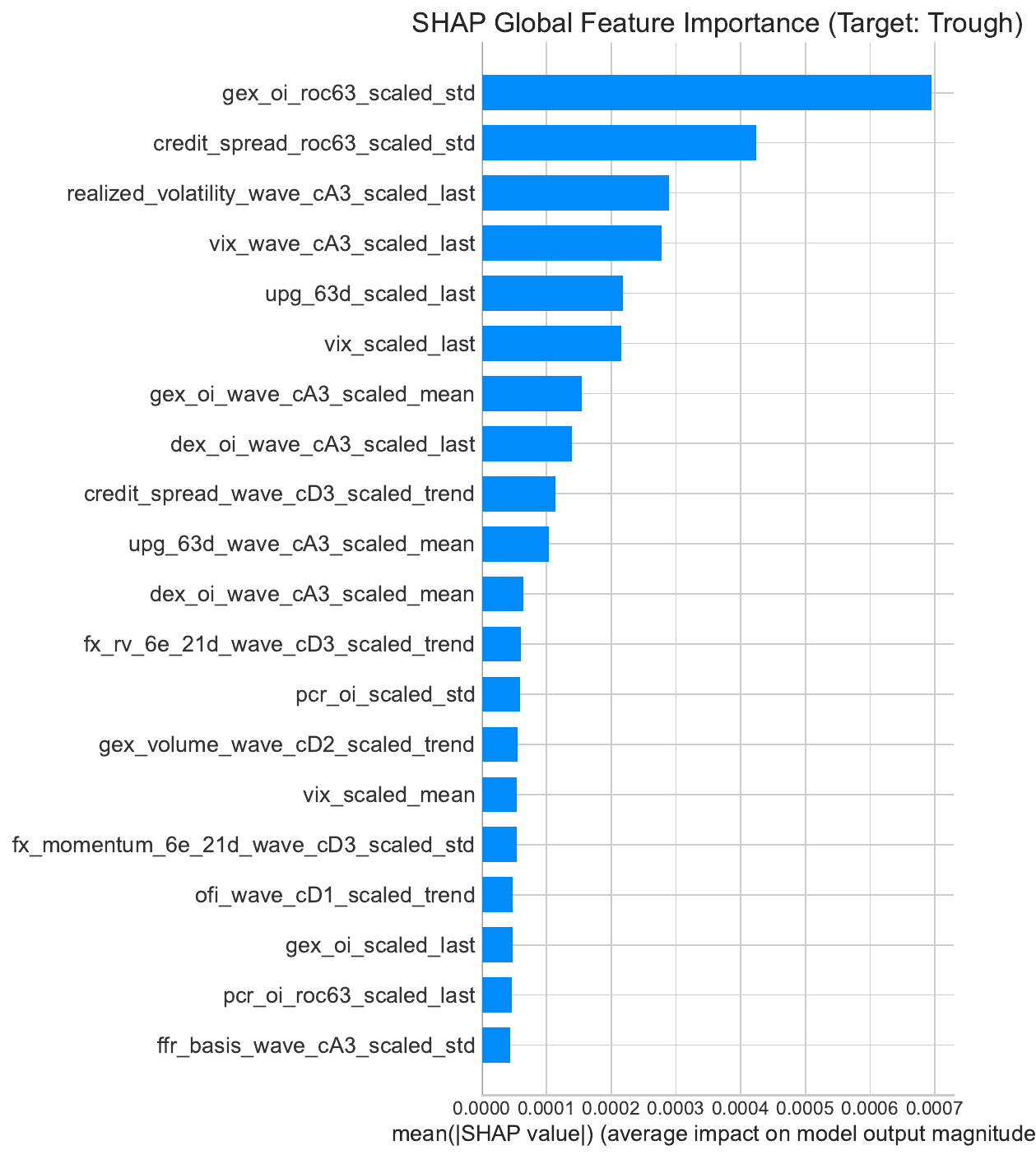}
    \caption[SHAP Global Feature Importance for the SVM Model]{SHAP Global Feature Importance for the SVM Model on the Hold-Out Test Set. \newline \textit{Notes:} The figure displays the mean absolute SHAP (SHapley Additive exPlanations) value for the top 20 features from our primary SVM classification model, evaluated on the hold-out test set (2022-2025). The x-axis represents the average magnitude of a feature's impact on the model's log-odds output for predicting a market trough. Features are ranked in descending order of importance. For instance, \seqsplit{gex\_oi\_roc63\_scaled\_std} has the largest average impact on the model's predictions.}
    \label{fig:shap_bar}
\end{figure}

\begin{figure}[htbp]
    \centering
    \includegraphics[width=0.8\columnwidth]{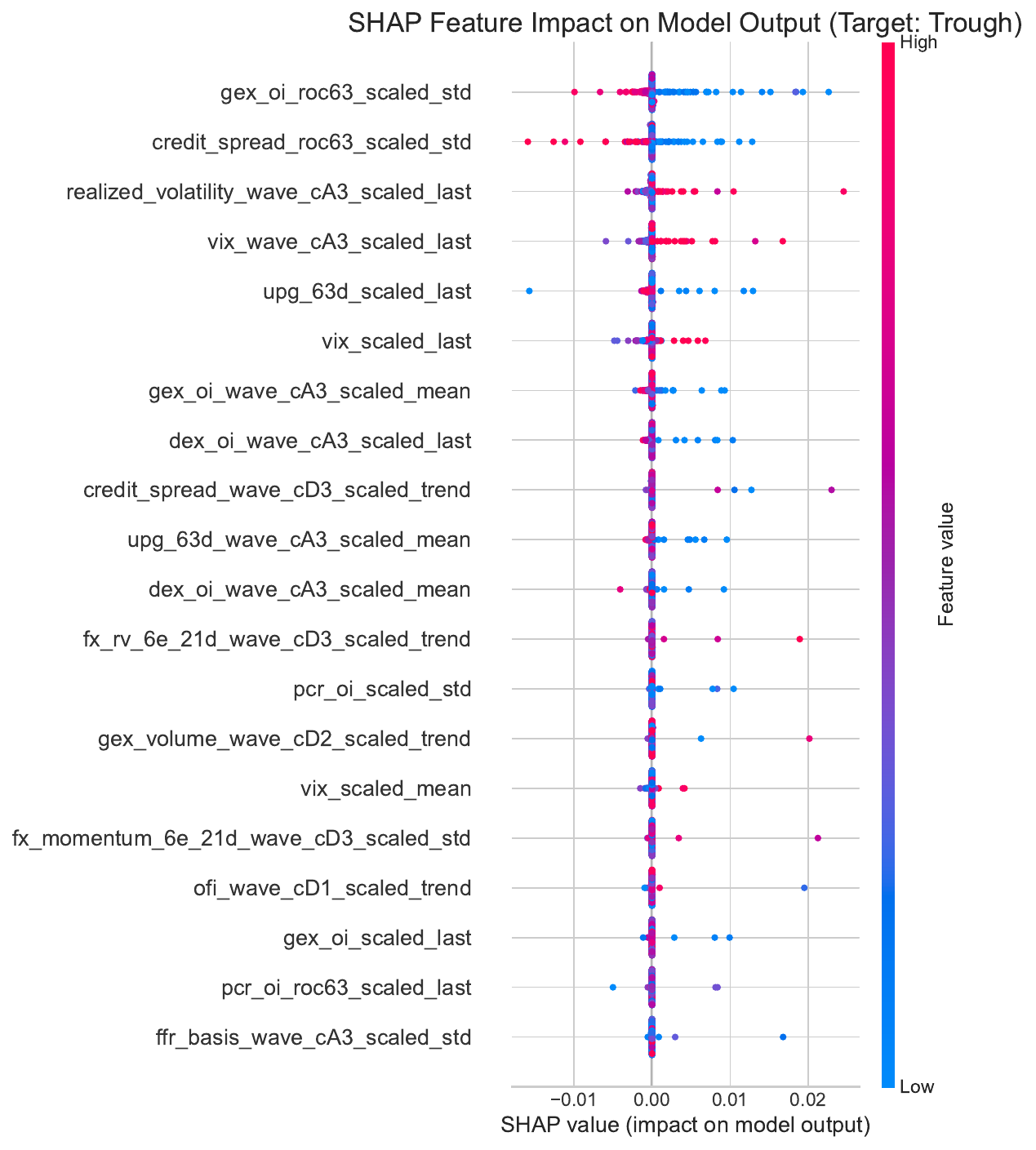}
    \caption[SHAP Feature Dependence Beeswarm Plot]{SHAP Feature Dependence Beeswarm Plot on the Hold-Out Test Set. \newline \textit{Notes:} The figure illustrates both the magnitude and direction of feature impacts on the SVM model's predictions for the hold-out test set. Each dot corresponds to a single daily observation for a given feature. The dot's horizontal position indicates its SHAP value—a positive value pushes the prediction towards a higher probability of a trough, while a negative value pushes it lower. The color represents the feature's normalized value for that day, from low (blue) to high (red). For the top feature, \seqsplit{gex\_oi\_roc63\_scaled\_std}, high values (red dots) are associated with negative SHAP values, indicating that high volatility in the rate-of-change of GEX makes a trough less likely. Conversely, low values (blue dots) are associated with neutral or positive SHAP values, suggesting a smooth, persistent change in GEX is more indicative of an approaching trough.}
    \label{fig:shap_beeswarm}
\end{figure}

\subsubsection{Feature Dependence and Interaction}
To move beyond global importance and explore nonlinear relationships, we examine SHAP dependence plots. These plots show how a feature's marginal contribution to the prediction (its SHAP value) changes across the range of its values. Figure \ref{fig:shap_dependence} illustrates these relationships for our most influential predictors, revealing key nonlinearities and interaction effects learned by the model.
\begin{itemize}
    \item Panel (a) shows the impact of the standard deviation of the rate-of-change in Gamma Exposure \seqsplit{gex\_oi\_roc63\_scaled\_std}. The model has learned a nuanced, nonlinear relationship. The feature's strongest positive impact on predicting a trough (i.e., the highest SHAP values) occurs when its value is in a low-to-moderate range. This effect is potently amplified by an interaction: the positive push towards a trough prediction happens almost exclusively when the level of GEX itself is low (indicated by the blue points for \seqsplit{gex\_oi\_scaled\_last}). This aligns with the economic intuition of a "negative gamma" regime, where dealer hedging amplifies downward moves. The model has learned that a trough is most probable not when GEX is changing chaotically (a high \seqsplit{\_std} value, which has a neutral impact), but rather when the market is in a low-gamma state and the change in GEX is exhibiting persistent, low-to-moderate volatility.
    \item Panel (b) reveals a powerful, non-monotonic interaction effect. The impact of credit spread volatility (\seqsplit{credit\_spread\_roc63\_scaled\_std}) on the prediction is entirely conditional on the state of the underlying market volatility trend (\seqsplit{realized\_volatility\_wave\_cA3\_scaled\_last}). The model has learned a "canary in the coal mine" signal where the feature's impact is positive but highly localized. The strongest push towards a trough (the highest positive SHAP values) occurs at very low levels of credit spread volatility (x-axis near 0.0-0.05), an effect that is present only when underlying market volatility is moderate (purple/magenta points). This positive impact then diminishes as credit spread volatility increases further. Conversely, if the market is already in a state of high underlying volatility (red points), then increased credit spread volatility consistently pushes the trough probability lower (negative SHAP values).
    \item In Panel (c), we observe a distinct threshold effect for \seqsplit{realized\_volatility\_wave\_cA3\_scaled\_last}. The feature has little impact on the prediction when its value is below approximately 0.6. However, beyond this point, its SHAP value increases sharply, indicating that very high levels of realized volatility are a strong signal of an impending trough. The coloring reveals a potent interaction: this effect is magnified when the trend in the Fed Funds basis (\seqsplit{ffr\_basis\_roc63\_scaled\_trend}) is low (blue points), suggesting that high market volatility is most dangerous when it coincides with deteriorating expectations for near-term funding conditions.
\end{itemize}

\begin{figure}[htbp]
    \centering
    \subfigure[GEX Open Interest Volatility]{
        \includegraphics[width=0.3\textwidth]{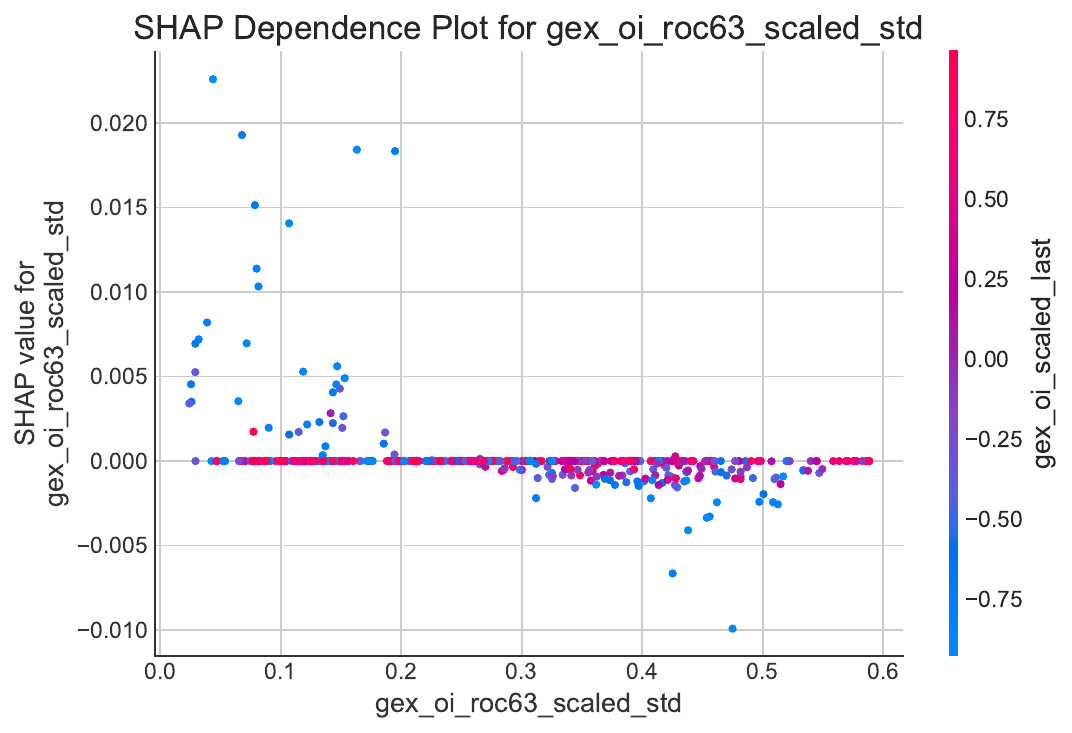}
        \label{fig:dep1}
    }
    \subfigure[Credit Spread Volatility]{
        \includegraphics[width=0.3\textwidth]{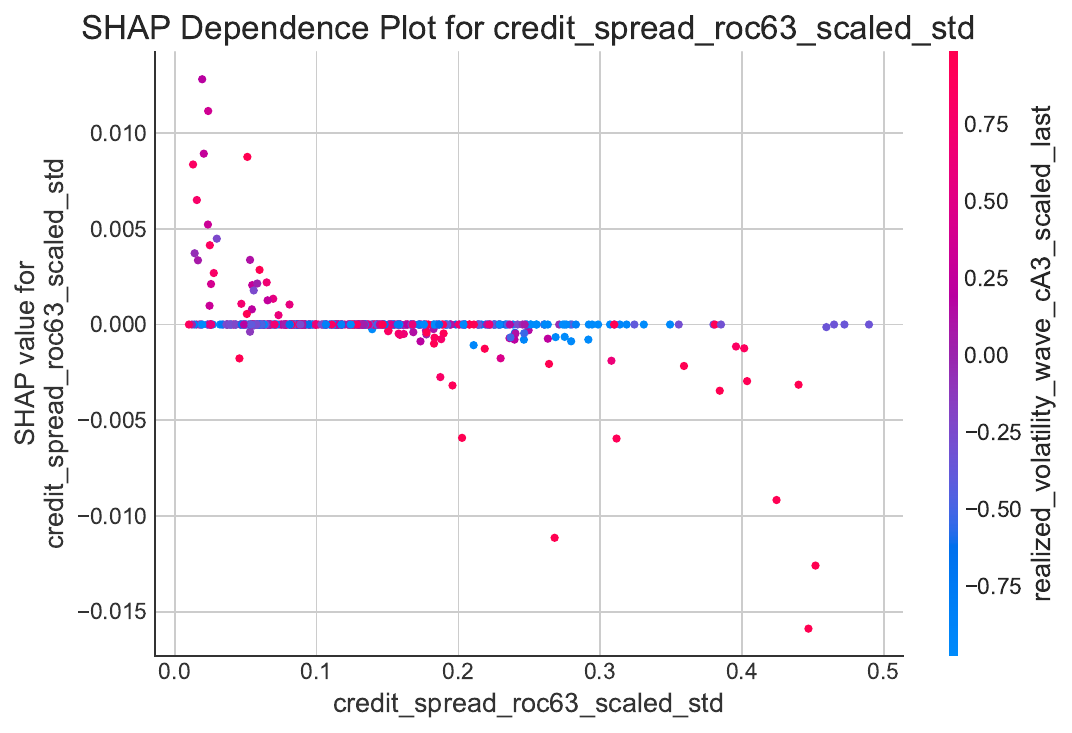}
        \label{fig:dep2}
    }
    \subfigure[Realized Volatility]{
        \includegraphics[width=0.3\textwidth]{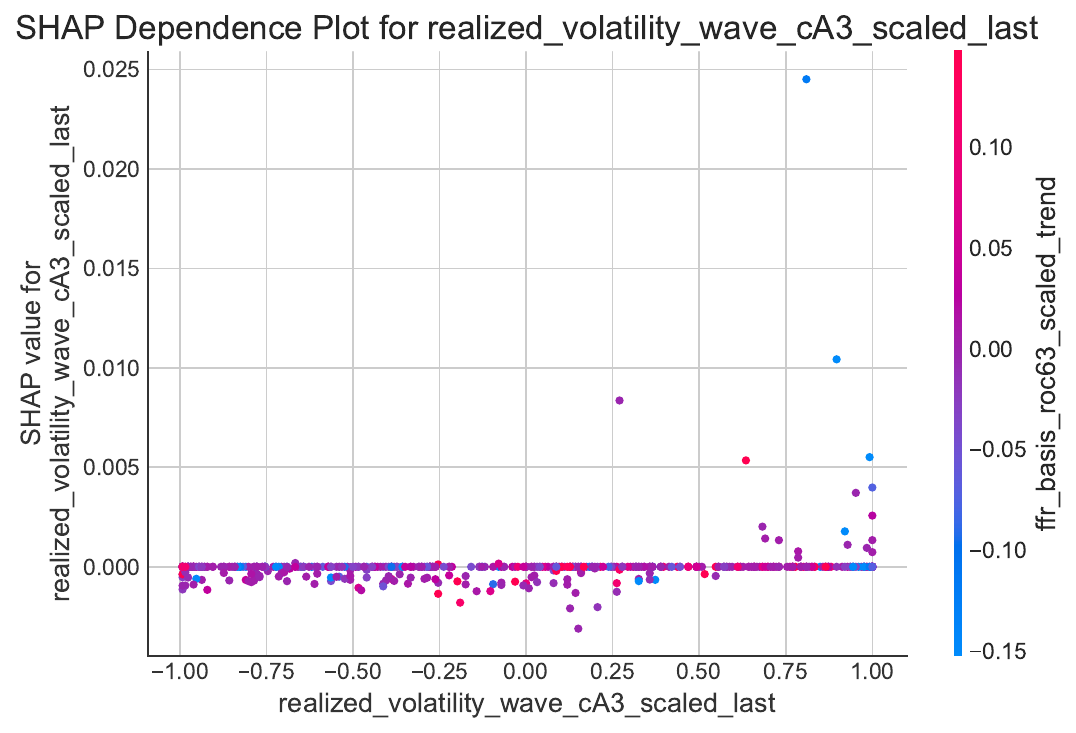}
        \label{fig:dep3}
    }
    \caption{SHAP Dependence Plots for Top Predictive Features}
    \label{fig:shap_dependence}
    \flushleft
    \textit{Notes:} The figure shows the relationship between a feature's value (x-axis) and its impact on the model's prediction in terms of its SHAP value (y-axis) for the primary SVM model. Each point represents a single observation from the hold-out test set. A positive SHAP value indicates the feature pushed the prediction towards a higher probability of a market trough. The points are colored by the value of a second feature, chosen automatically by the SHAP library to display the strongest interaction effects.
    (a) Plots the SHAP value for the standard deviation of the scaled GEX from open interest (`gex\_oi\_roc63\_scaled\_std`). The color corresponds to the last value of GEX (`gex\_oi\_scaled\_last`).
    (b) Plots the SHAP value for the standard deviation of the scaled credit spread (`credit\_spread\_roc63\_scaled\_std`). The color corresponds to the last value of wavelet-transformed realized volatility (`realized\_volatility\_wave\_cA3\_scaled\_last`).
    (c) 
    Plots the SHAP value for the last value of wavelet-transformed realized volatility (`realized\_volatility\_wave\_cA3\_scaled\_last`). The color corresponds to the trend in the Fed Funds basis (`ffr\_basis\_roc63\_scaled\_trend`).
\end{figure}

\section{Robustness and Stability Evaluation}

\label{sec:robustness}

One of the primary difficulties for any predictive model in finance is dealing with structural breaks: a fundamental change in the data-generating process in the market, which can disrupt relationships that are learned from historical data. Conventional econometric models (e.g., OLS, VAR) with fixed parameters are particularly susceptible to structural breaks, whereas our machine learning pipeline is more robust in being designed for flexibility. Our SVM model is non-parametric and implements adaptive feature engineering over rolling windows, further enhanced by the robust time-series cross-validation protocol, so it should be robust to evolving market environments.

Nonetheless, we conduct a series of diagnostic evaluations on the hold-out sample in order to justify the robustness of our model. These evaluations are used to identify common failure modes in machine learning models, such as degraded performance, covariate shift (when the distributions of input data change), and concept drift (when the relationship between inputs and outcome change).

\subsection{Stability of Model Performance Over Time}

\begin{itemize}

    \item \textbf{Rationale}: The simplest way to evaluate model robustness is to look at how performance changes over time. A stable model should be able to maintain its predictive power, while a model that has suffered from a structural break would demonstrate abrupt and sustained decline in performance. The Brier score is a useful metric for assessing performance, as it captures the accuracy of probabilistic forecasts.

    \item \textbf{Performance Evaluation}: We estimate the Brier score on the hold-out test set over a \textbf{63-day rolling window} (approximately one trading quarter), rather than estimating it as a single number. This allows us to assess the model's calibration and accuracy chronologically. \item \textbf{Results}: Our findings reveal that the model is highly stable. The rolling Brier score stays extremely low (near 0.00) for the vast majority of the test period, suggesting consistently measured probabilities that are accurate and well-calibrated during a stable market. The brier score shows two elevated periods that maps to the two actual market trough events identified by the BB algorithm. These spikes should not be interpreted as the failure of the model, but merely a reflection of the inherent difficulty and uncertainty of those specific moments. Importantly, the Brier score falls back to its low baseline quickly after the spikes, indicating that model performance is not predictably worse after a crisis event. This verifies the model is not "broken" by market capitulation; it recognizes it, and then stabilizes back to near zero, as illustrated in Figure \ref{fig:rolling_brier}.

\end{itemize}

\begin{figure}[htbp]
\centering
\includegraphics[width=\columnwidth]{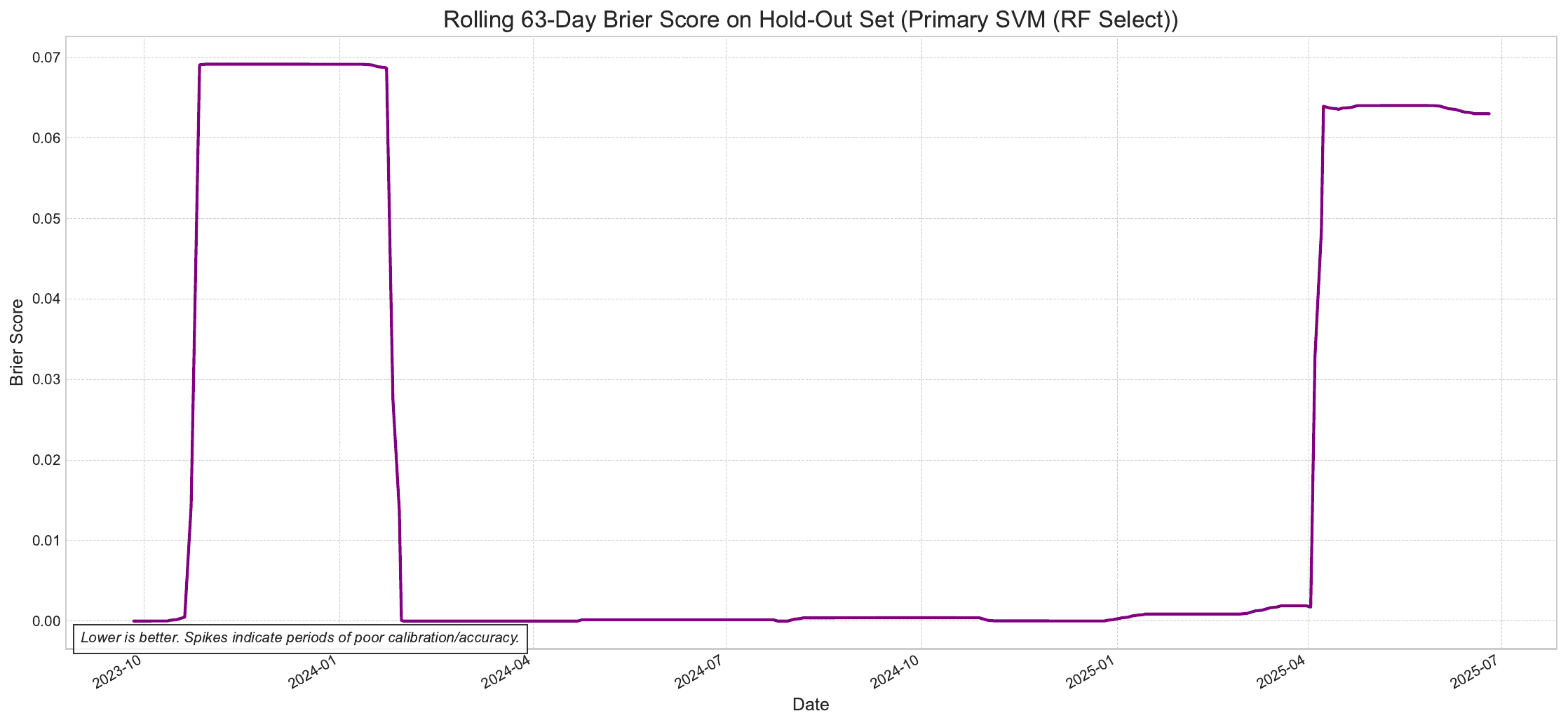}
\caption{Model Performance Stability on the Hold-Out Test Set}
\label{fig:rolling_brier}
\justify
\small{\textit{Notes:} This figure plots the Brier score of the primary SVM model's calibrated probability forecasts, calculated over a 63-day rolling window. The sample is the hold-out test set, covering the period from July 2023 to June 2025. The Brier score measures the mean squared error between predicted probabilities and actual outcomes; a lower score indicates better forecast accuracy and calibration. The sharp spikes in the score around October 2023 and April 2025 coincide with the actual market troughs identified in Table \ref{tab:turning_points}. The score's rapid return to a near-zero baseline following these events demonstrates that the model's performance is stable and does not persistently degrade after periods of market stress.}
\end{figure}

\subsection{Input Feature Stability: Covariate Shift Analysis}

\begin{itemize}

    \item \textbf{Rationale}: A model trained on data with a particular distribution perform poorly when it is asked to make predictions using data with a markedly different distribution---this is called covariate shift. To test for this, we compare the distributions of the most important input features selected by the SVM nowcasting model between the training and testing time periods.

    \item \textbf{Implementation}: we plot the kernel-density estimates (KDEs) for the five features with the highest Gini importance from the Random Forest selector (see Table \ref{tab:final_features}). We choose to focus only on the Gini-ranked features rather than the SHAP-ranked features discussed in Section 4.2.1, because we are checking specifically for distributional shifts in the direct inputs that the SVM classifier receives from the Random Forest feature selection stage. The SHAP analysis, in contrast, explains the output of the entire integrated predictive pipeline. The choice serves as a more direct visual comparison of the distributions of the features that the core SVM model was trained on, compared to the feature distributions it encounters in the hold-out period.

    \item \textbf{Results}: The results of this analysis are shown in Figure \ref{fig:covariate_shift}. The distributions for the features plotted overlap closely. The lack of significant covariate shift provides strong evidence that the statistical properties of the important predictors did not change qualitatively over the hold-out period. It therefore supports the hypothesis that the model is operating within a similar data regime, which provides further credibility of the test set performance.

\end{itemize}

\begin{figure}[tbp]
\centering
\includegraphics[width=0.65\textwidth]{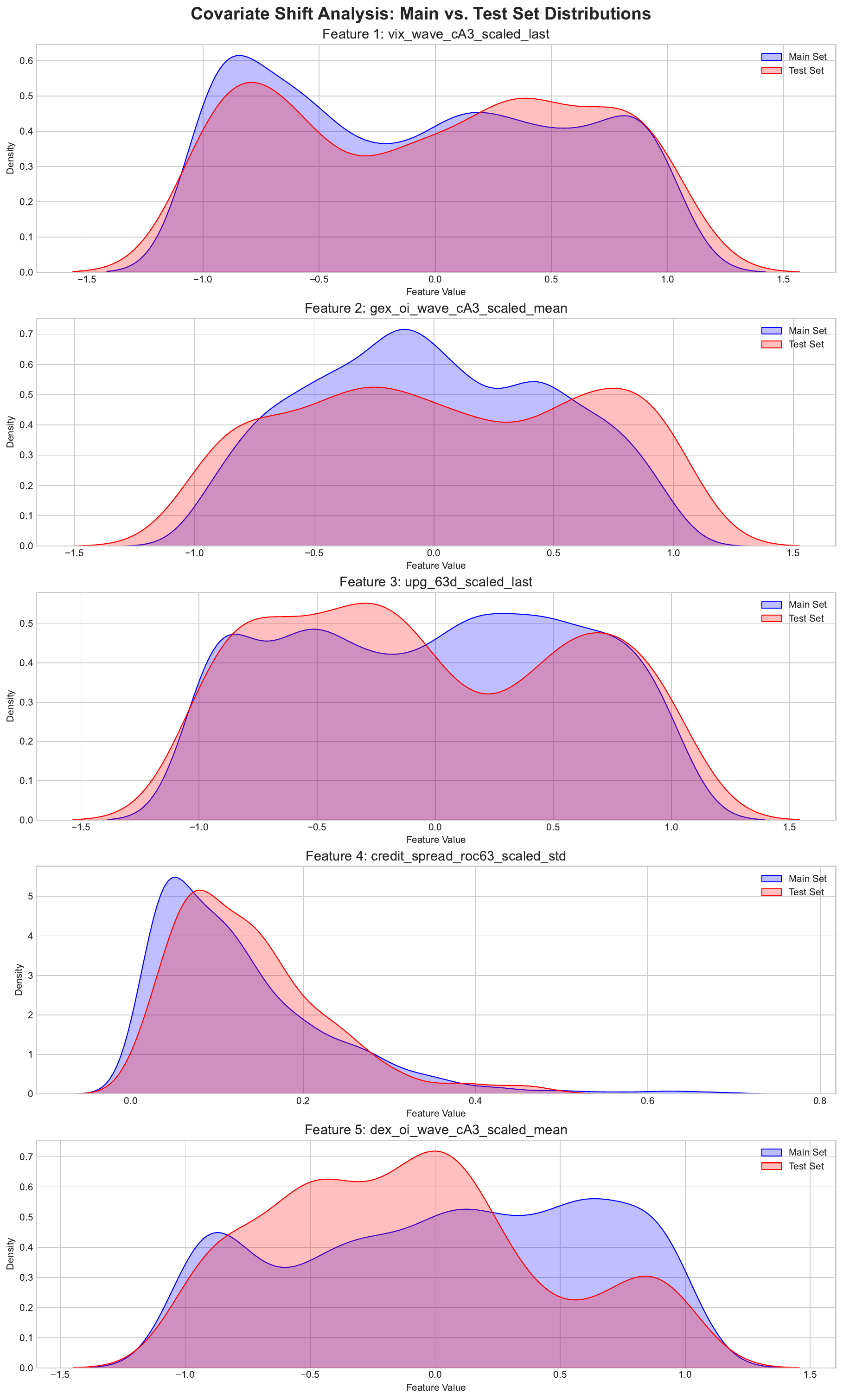}
\caption{Covariate Shift Analysis for Top Predictive Features}
\label{fig:covariate_shift}
\justify
\small{\textit{Notes:} This figure visually inspects for covariate shift by comparing the distributions of key input features between the main dataset and the hold-out test set. The plots display kernel density estimates (KDEs). The "Main Set" (blue) comprises the training and validation data from April 2013 to June 2023. The "Test Set" (orange) is the hold-out sample from July 2023 to June 2025. The five features shown are the top five predictors from the final 15-feature set, ranked by Gini importance from the Random Forest selector. The complete ranked list is available in Table \ref{tab:final_features}. The high degree of overlap between the distributions suggests the absence of significant covariate shift.}
\end{figure}

\subsection{Model Interpretation Stability: Concept Drift Analysis}

\begin{itemize}

    \item \textbf{Rationale}: The most subtle and important type of structural break is concept drift, when the distributional relationship between the features and the outcome has fundamentally changed. For example, an indicator that is previously important in nowcasting market capitulation is now insignificant for a different time period. We test this by looking at the stability of the model's own interpretation of feature importance over time.

    \item \textbf{Implementation}: We perform a SHAP stability analysis. We split the hold-out test set chronologically into 2 halves, and assess global SHAP feature importance bar plots for the first half and for the second half of the test dataset. Any considerable change in the rank or magnitude of importance of features, between these 2 plots, would signal concept drift.

    \item \textbf{Results}: As seen in Figure \ref{fig:shap_stability}, the SHAP importance plots are very much in line with each other across both halves of the hold-out. The highest ranked features in the first half (Panel (a)) remained the highest ranked features in the second half (Panel (b)), and their contributions are similar. In fact, the most SHAP-significant feature, \seqsplit{gex\_oi\_roc63\_scaled\_std}, has its mean SHAP values virtually identical. This stability provides strong evidence that the economic relationships underlying the model learned during training remained valid throughout the hold out period; the model did not need to "re-learn" what drives market troughs, and thus demonstrated its robustness against concept drift.

\end{itemize}

\begin{figure}[tbp]
\centering
\subfigure[First half of test set.]{
    \includegraphics[width=0.45\textwidth]{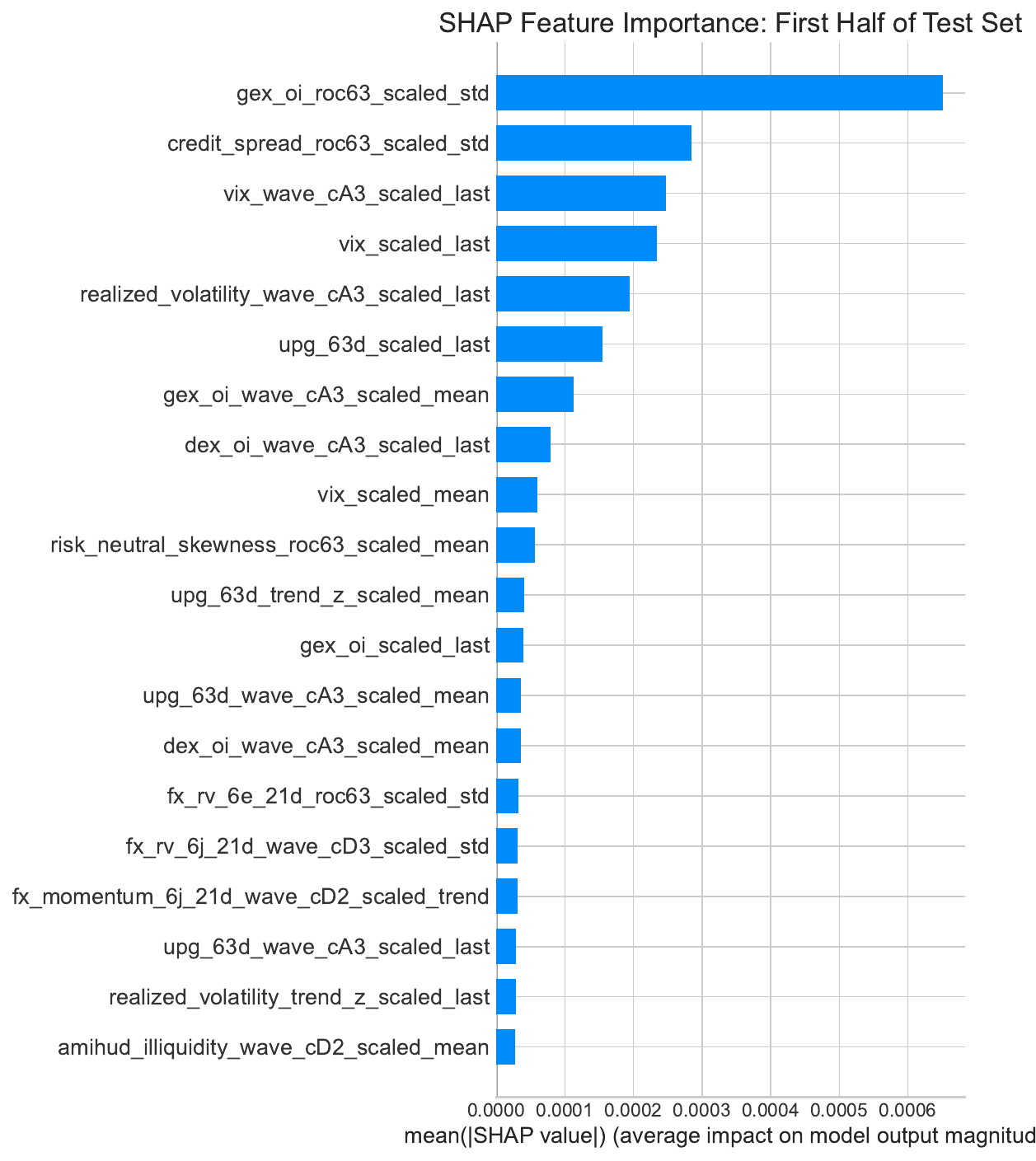}
    \label{fig:shap_stab1}
}
\subfigure[Second half of test set.]{
    \includegraphics[width=0.45\textwidth]{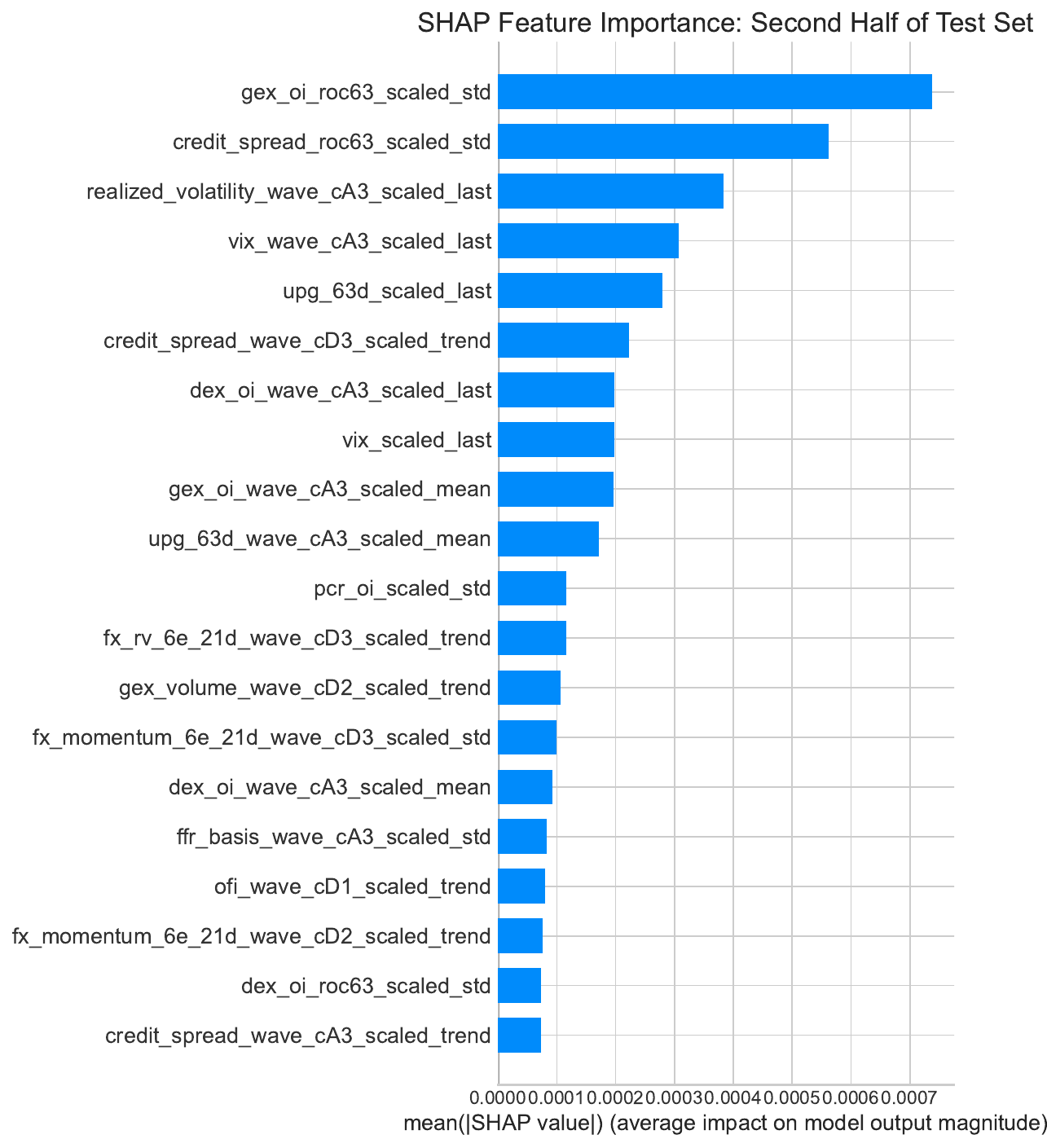}
    \label{fig:shap_stab2}
}
\caption{Stability of SHAP Feature Importance on the Hold-Out Test Set}
\label{fig:shap_stability}
\justify
\small{\textit{Notes:} This figure assesses the stability of the model's feature interpretations over time to test for concept drift. The panels display the mean absolute SHAP values for the top 20 features from the primary SVM model, calculated independently for two chronological sub-periods of the hold-out test set (July 2023 - June 2025). Panel (a) covers the first half (July 2023 - June 2024), and Panel (b) covers the second half (July 2024 - June 2025). The x-axis, `mean(|SHAP value|)`, quantifies the average magnitude of a feature's impact on the model's prediction. The high degree of consistency in the feature rankings and their relative magnitudes between the two periods indicates that the model's learned relationships are stable and robust against concept drift.}
\end{figure}

\section{Economic Importance and Signal Characteristics}

\label{sec:econ_sig}

While the statistics of Section 4 establish the predictive validity of the model, an important test is whether its forecasts have economically important signals that are robust to specific parameter choices. To go beyond a single point nowcast of market trough probability and investigate the economic properties of the signal provided by our model, we conduct a stylized backtest simulation on the hold-out test set, including a sensitivity analysis of the strategy holding period. We have no intention to propose a final, production-ready trading strategy. We use this backtest to diagnose and characterize the nature, strengths and weaknesses of the signal generated by the market capitulation nowcast model.

\subsection{Backtesting Methodology}

We simulate a simple trading strategy from the model's daily, out-of-sample calibrated trough probabilities. We perform our backtest using the E-mini S\&P 500 (ticker: ES), the most liquid equity index futures contract.

The actual simulations have the following rules:

\begin{enumerate}

    \item \textbf{Signal Generation}: A long position signal is created at the end of any day t where the calibrated trough probability exceeds the threshold of 5\% i.e., $P(\text{Trough})_t > 0.05$.

    \item \textbf{Trade Entry}: The long position is entered at the closing price of the ES contract on day t. All trades are size per contract, with a \$50 multiplier valued for each point movement.

    \item \textbf{Holding Period Sensitivity}: To test robustness, each position is held for periods of time from 5 to 20 trading days. We assess the performance across this spectrum.

    \item \textbf{Transaction Costs}: In order to take into account market frictions, a round-trip commission and slippage cost of \$5.00 is deducted from the profit or loss of each contract traded.

\end{enumerate}

To explore the performance characteristics of the model signals under different leverage and position-sizing rules, we evaluate two separate cases:

\begin{itemize}

    \item \textbf{Fixed-Size Strategy}: This is the baseline strategy, which enters one contract for each new signal, so that we can measure the economic value of the raw signal in the most direct way possible.

    \item \textbf{Pyramiding Strategy}: This approach is implemented to test the hypothesis that the model-based signals cluster at true reversals. It places position of $N$ size on the $N^{th}$ consecutive day the signal is active. This aggressively lever our position when the model has shown sustained conviction.

\end{itemize}

\subsection{Empirical Results and Interpretation}

The sensitivity analysis presented in Table \ref{tab:backtest_sensitivity_summary} provides a nuanced account of the economic value of the model. The results illustrate that the canonical signal is robust and economically valuable, while the pyramiding leverage presents limitation, providing a deep diagnosis of the nature of the signals it generate.

\begin{table}[htbp]
\centering
\caption{Economic Significance: Holding Period Sensitivity Analysis}
\label{tab:backtest_sensitivity_summary}
\resizebox{\textwidth}{!}{%
\begin{tabular}{llrrrrr}
\toprule
\textbf{Holding Period} & \textbf{Strategy} & \textbf{Total Net P\&L} & \textbf{Sharpe Ratio (Ann.)} & \textbf{Profit Factor} & \textbf{Max Drawdown} & \textbf{Max Drawdown (\%)} \\
\midrule
\multirow{2}{*}{5 Days (Baseline)} & Fixed-Size & \$31,247.50 & 0.38 & 1.22 & (\$52,682.50) & 55.66\% \\
& Pyramiding & \$797,222.50 & 1.62 & 2.77 & (\$176,712.50) & \textbf{186.71\%} \\
\midrule
\multirow{2}{*}{7 Days} & Fixed-Size & \$112,385.00 & 1.23 & 1.93 & (\$39,287.50) & 41.00\% \\
& Pyramiding & \$1,180,760.00 & 2.00 & 3.95 & (\$135,325.00) & \textbf{141.21\%} \\
\midrule
\multirow{2}{*}{10 Days} & Fixed-Size & \$200,985.00 & 2.01 & 3.00 & (\$25,000.00) & 10.76\% \\
& Pyramiding & \$1,404,622.50 & 2.18 & 4.42 & (\$239,230.00) & 15.74\% \\
\midrule
\multirow{2}{*}{12 Days} & Fixed-Size & \$235,210.00 & 2.03 & 3.34 & (\$56,052.50) & 18.37\% \\
& Pyramiding & \$1,165,810.00 & 1.21 & 2.50 & (\$694,205.00) & 40.40\% \\
\midrule
\multirow{2}{*}{20 Days} & Fixed-Size & \$217,385.00 & 1.23 & 1.95 & (\$229,240.00) & 57.59\% \\
& Pyramiding & \$735,522.50 & 0.63 & 1.45 & (\$1,634,867.50) & \textbf{80.06\%} \\
\bottomrule
\end{tabular}
}
\justify
\small{\textit{Notes:} This table summarizes the performance of two stylized trading strategies on the hold-out test set, evaluated across different holding periods. Both strategies trade E-mini S\&P 500 futures based on the model's out-of-sample trough probability forecasts. The "Fixed-Size" strategy trades one contract per signal. The "Pyramiding" strategy increases position size with each consecutive signal day. "Max Drawdown (\%)" exceeding 100\% (highlighted in bold) indicates a "risk of ruin" event, signifying a total loss of initial capital plus all accumulated profits. Complete trade logs for the 5-day baseline are available in \ref{sec:appendix_backtest_logs}.}
\end{table}

First, the performance on the Fixed-size strategy illustrates the strong economic edge of the raw signal. From the sensitivity analysis, there is a clear "sweet spot" for performance, as the annualized Sharpe Ratio peaks 2.01--2.03 for holding periods of 10 to 12 days. This is a strong finding because it indicates that the model's predictive power isn't simply some artifact of the first 5-day parameter selection, but pinpoints an actual market dynamic that plays out over a two- to three-week window following a capitulation signal.

Second, the Pyramiding strategy provides deep but cautionary insights. While the headline metrics look impressive, with a Sharpe Ratio of 2.18 at 10 days, they're completely swamped by the maximum drawdown numbers. As shown in Table \ref{tab:backtest_sensitivity_summary}, a maximum drawdown over 100\%, witnessed in the 5-day and 7-day periods, shows the "risk of ruin" event with a total loss of initial capital and all profits. Even the very large maximum drawdown for the other periods (e.g., 40.40\% at 12 days and 80.06\% at 20 days)—would represent a calamity for any type of real-world investment strategy. That makes the leveraged mechanical strategy completely uninvestable for its current design.

The drawdown failure does not suggest the model is problematic, as the backtest acts as a compelling diagnostic tool. The results show that our model is a good capitulation detector and a poor bear-to-bull trend-switching validator. The model detects moments of extreme panic that can lead to a sharp, V-shaped reversal, where the pyramiding strategy magnifies the return remarkably. However, it cannot consistently distinguish a prolonged market bottom from a "bear market rally," within a longer-duration downtrend. If the model falsely indicates a bottom in time that a market cannot recover from, the pyramiding logic creates a dangerously oversized position that loses its value when the market rolls again to a new low.

The economic significance of our model does not live simply in a straightforward trading rule, but in its predictive capability as a panic and capitulation detector. The sensitivity analysis has evaluated the model's signal robustness while effectively exposing the model's predictable failure mode. This understanding is essential because it implies for practical use cases, the model signal must not be used in isolation, but as a major signal component in the overall risk management analysis, likely in conjunction with additional longer-term regime filters to prevent prematurely market entries in prolonged downtrend.

\section{A Comparative Causal Analysis of Predictive Drivers}

\label{sec:causalanalysis}

Because our trough labels by the BB algorithm are retrospective, we must be cautious when interpreting the causal parameter $\theta$. $\theta$ does not represent a causal effect of a treatment on a future price path, but the effect on the contemporaneous state of the market, or more accurately, the effect on the probability that the market is in a state that would ultimately be determined as a trough in the future. This is an important distinction in order to interpret the policy implications of our findings (see Section 7.5).

Even though the predictive model in Section 4 establish strong out-of-sample nowcasting ability, the model does not give any causal analysis between the chosen features we identified and the market troughs. To advance towards robust causal interpretation rather than just statistical correlation, we implement the Double/Debiased Machine Learning (DML) framework. We conduct our analysis in two steps. As a foundational step, we first conduct and estimate the DML based Partially Linear Model (DML-PLR), which is a standard approach in the literature. Acknowledging its limitation, we conduct a more flexible and appropriate DML specification by estimating the Average Partial Effect (APE). APE accounts for binary outcomes, feature interactions, and non-linear treatment effects. With this comparative approach, we can identify robust causal drivers and demonstrate how model specification can impact our economic conclusions.

\subsection{Baseline Model: DML of the Partially Linear Model (DML-PLR)}

Our causal analysis starts with a base line: the DML approach for Partially Linear Regression (PLR) models, as outlined by \citet{Chernozhukov2018}. This specification provides a point of reference, with the assumption that the treatment effect is constant and additively separable. We write the structural form as:

$$\mathbf{Y} = \theta \mathbf{D} + g(\mathbf{X}) + \epsilon$$

where $\mathbf{Y}$ is the trough outcome, $\mathbf{D}$ is the treatment variable (a single indicator of interest), $\mathbf{X}$ is a high-dimensional vector of all other features that are potentially confounding, and $g(\cdot)$ is an unknown nonlinear function. The DML2 algorithm with cross-fitting provides a $\sqrt{N}$-consistent and asymptotically normal estimate for the constant treatment effect $\theta$ by flexibly modeling two nuisance functions: the outcome model $\hat{l}_0 (\mathbf{X}) = \mathbb{E}[\mathbf{Y} | \mathbf{X}]$, and the treatment model $\hat{m}_0 (\mathbf{X}) = \mathbb{E}[\mathbf{D} | \mathbf{X}]$.

In order to avoid the arbitrary choice of a single machine learning model for the nuisance functions, we use a data-driven selection process in each cross-fitting fold. In estimating the conditional mean of the treatment, $\hat{m}_0(\mathbf{X})$, we conduct a 'horse race'. A `GradientBoostingRegressor` and a `LassoCV` model are trained on the training portion of the fold. The model that exhibits better predictive performance, as assessed by out-of-sample R-squared on the validation portion of the fold, is dynamically selected for the predictions. This automatic selection improves the robustness of the DML procedure.

While the PLR specification is a standard benchmark, it has two important limitations for this setting. First, for our binary outcome $\mathbf{Y} \in \{0, 1\}$, PLR is a Linear Probability Model (LPM), which could generate predictive probabilities outside the logical [0, 1] range. Second, it imposes the restrictive assumption that the causal effect of the treatment $\theta$ is constant and additively separable from the effects of confounders. That assumption is not likely to hold in financial markets, where the signaling power of an indicator often depends on the market context.

\subsection{Main Model: DML for the Average Partial Effect (DML-APE)}

To circumvent the limitations of the PLR framework, we employ a more flexible DML estimator based on an interactive model. We define the conditional probability of a trough as a non-linear interactive function:

$$P(\textbf{Y}=1 | \textbf{D}=d, \textbf{X}=x) = l(d,x)$$

This specification is theoretically valid for a binary outcome and enables the treatment effect to vary with the state of the high dimensional confounders. The causal parameter that we are interested in is the Average Partial Effect (APE), $\theta_0$, defined as the expected gradient of the conditional probability function with respect to the treatment:

$$ \theta_0 = \mathbb{E}_{\textbf{D},\textbf{X}}\left[\frac{\partial l(\textbf{D},\textbf{X})}{\partial \textbf{D}}\right]$$

APE measures the average change in probability of a market trough for a one unit increase in treatment, averaged across the entire data distribution. In order to estimate APE reliably, this framework requires learning these three nuisance functions:

\begin{enumerate}

    \item The outcome model (classification): $l(d,x)=\mathbb{E}[\textbf{Y}|\textbf{D}=d, \textbf{X}=x]$.

    \item The treatment mean model (regression): $m(x)=\mathbb{E}[\textbf{D}|\textbf{X}=x]$. 

    \item The treatment conditional variance model (regression): $v(x) = \mathbb{E}[(\mathbf{D} - m(\mathbf{X}))^2 | \mathbf{X}=x]$.

\end{enumerate}

Similar to the PLR approach, we run both `GradientBoostingRegressor` and `LassoCV` models in each cross-fitting fold as a "horse race" to determine the best performing estimator of the conditional mean $\hat{m}_0(\mathbf{X})$ and conditional variance $\hat{v}_0(\mathbf{X})$, based on out-of-sample R-squared. It ensure that nuisance parameters are estimated from the most appropriate functional form for that slice of data.

Given a standard and flexible assumption that the treatment is a heteroskedastic Gaussian process conditional on confounders, the Neyman-orthogonal score function for APE is:

$$\psi(\mathbf{W}; \theta, \eta) = \underbrace{\frac{\partial l(\mathbf{D}, \mathbf{X})}{\partial \mathbf{D}} - \theta}_{\text{Naive Score}} + \underbrace{\frac{\mathbf{D} - m(\mathbf{X})}{v(\mathbf{X})} \left( \mathbf{Y} - l(\mathbf{D}, \mathbf{X}) \right)}_{\text{Bias Correction}}$$

where $\mathbf{W} = (\mathbf{Y}, \mathbf{D}, \mathbf{X})$ and $\eta = (l, m, v)$. A complete derivation of this score function is provided in \ref{sec:appendixapederiv}. For practical estimation of the score we need to compute each of its constituents. The partial derivative term $\partial l(\mathbf{D}, \mathbf{X}) / \partial \mathbf{D}$ is numerically computed with a standard finite difference approach in the fitted outcome model $\hat{l}$. The bias correction term, which contains the three nuisance functions, is important because it ensures the final estimate of $\theta$ is robust to first order estimation error in the machine learning models. In order to be robust against the estimation "noise" from the nuisance models, especially possible outliers when $\hat{v}(\mathbf{X})$ is close to zero, our final point estimate $\hat{\theta}$ is the median of the scores computed on the out-of-sample fold, and inference proceeds with the non-parametric bootstrap of those scores. A full treatment of justification for this approach can be found in \ref{sec:appendix_ape_median}. The nuisance functions are estimated by the same horse race approach with GradientBoostingRegressor and LassoCV learners as specified in the DML-PLR analysis (section 7.1).

\subsection{Model Specification and Endogeneity}

Any credible causal estimate depend on good model specification in order to limit endogeneity. To keep comparisons between DML-PLR and DML-APE fair, the methods adopted to mitigate bad controls and sensitivity to unobserved confounding are applied equally to both framework, as explained below.

\subsubsection{Bad Controls and Multicollinearity}

In order to eliminate spurious results from multicollinearity or "bad controls", we have an explicit exclusion map. When an aggregated variable (e.g. \seqsplit{vrp\_scaled\_mean}) is selected to be the treatment variable $\mathbf{D}$; we exclude all other aggregated features with the same parent indicator (i.e. \seqsplit{vrp\_scaled\_std}, \seqsplit{vrp\_scaled\_trend} from the set of potential confounders $\mathbf{X}$. We also used an exclusion map to remove any features that would be mechanistic components of the treatment. For example, because Variance Risk Premium (VRP) is  defined by VIX and Realized Volatility (RV), we eliminated all features based on VIX or RV from $\mathbf{X}$ when VRP related features are the treatment $\mathbf{D}$. This procedure is essential to estimate the total causal effect of VRP, rather than an effect partially-out by its own constituent parts, which would be misleading.

\subsubsection{Sensitivity to Unobserved Confounders}

Even though the DML framework accounts for the observed confounders in $\mathbf{X}$, its estimate could be biased by unobserved confounders. To address this, we subject all statistical significant DML estimates to a formal sensitivity analysis based on \citet{Cinelli2020}. It quantifies how large an unobserved confounder must be (expressed by its partial $R^2$ with the treatment and the outcome) to reject the causal claim. The sensitivity analysis enable formal elimination of hypotheses that are plausible under DML but not robust under unobserved confounding. Note that while our DML estimators are non-parametric, the sensitivity analysis framework is based on a linear model. We use it as a pragmatic and conservative guide to validate our causal claim against unobserved "worst-case" linear confounders.

\subsection{Causal Effect Estimates: A Comparative Analysis}

\label{sec:causalestimates}

Our comparative causal analysis shows that robust economic insight depends on model specification. While the DML-PLR framework provides a baseline, its linear assumption masks complex non-linear and interactive relationships, sometimes even misinterpreting causal effect. On the other hand, the DML-APE framework, powered by its flexibility and non-linear interactions, yields a richer and more plausible set of drivers, and in some cases reversing the sign of the estimates from the DML-PLR framework. Table \ref{tab:dml_comparison} displayed an organized comparison of the results of both frameworks. The full 27 robust estimates from the DML-PLR model and 48 robust estimates for the DML-APE model are listed in \ref{sec:appendix_results}. Overall, the comparisons yield three core insights.

First, a small group of core causal drivers are robust to model specification. For example, both models establish that trend in the Fed Funds futures slope (\seqsplit{ffr\_slope\_scaled\_trend}) has a statistically significant, negative causal effect on trough probability. The agreement affirms the hypothesis that market perceptions of future monetary easing establish an important stabilizing force, whether the force is linear or non-linear.

Second, the shift to a more flexible APE framework allows elimination of findings that might be spurious due to the linearity assumption in PLR. The standard deviation of the credit spread (\seqsplit{credit\_spread\_scaled\_std}) is a clear example. The PLR model estimated a significant negative coefficient on this variable, suggesting that spread volatility is stabilizing. However, its effect is statistically indistinguishable against zero in the APE specification. It shows that the negative PLR model estimate is likely an artifact of linearity, and APE successfully captures how credit spread volatility interacts with wider market context.

Third, and most importantly, the main contribution of the DML-APE model is its identification of new causal pathways and their plausible economic implications. The APE model discovers that the volatility of measures of options based risk appetite, (e.g., \seqsplit{gex\_oi\_trend\_z\_scaled\_std}, \seqsplit{risk\_neutral\_skewness\_scaled\_std}, and the VRP measure \seqsplit{vrp\_roc63\_scaled\_std}) are causal drivers of troughs, a group of causal drivers missed by the PLR framework. This indicates a more sophisticated market mechanism, where it is not only the level of fear, but also its rate of change and persistence, that causally drives market capitulation.

The DML-APE framework also reverses the sign of several causal estimates, which resolves counter-intuitives economic interpretation from the linear PLR. The volatility of the Amihud illiquidity trend (\seqsplit{amihud\_illiquidity\_trend\_z\_scaled\_std}) is a prominent example. The PLR model inferences a negative effect ($\hat{\theta}=-0.0608$), as shown in Table \ref{tab:dml_plr_full_results} suggests that volatility in the illiquidity trend is a stabilizing driver. The APE model then reverses the sign of the estimate, producing a robust positive effect ($\hat{\theta}=0.0160$ in Table \ref{tab:dml_ape_full_results}). This illustrates the risk of overly restrictive assumption made by the PLR model. The APE model correctly captures the market mechanism that, on average, rising instability in market liquidity is a causal precursor to a trough. Similarly, the effect of the volatility of Put / Call ratio (\seqsplit{pcr\_oi\_roc63\_scaled\_std}) flips from negative in PLR to the more intuitive positive in the APE framework, aligning with the economic intuition that a volatile and increasing demand for puts is a destabilizing force for the market.

To sum up, the comparative analysis supports the choice of DML-APE as the primary causal framework. We have moved beyond unrealistic linear assumptions to reveal that the causal drivers of market troughs are rooted firmly in the non-linear dynamic and interactions of market volatility, options implied risk appetite, and liquidity.

\begin{table}[tbp]
\centering
\caption{Comparative DML Causal Estimates: PLR vs. APE Models}
\label{tab:dml_comparison}
\resizebox{\textwidth}{!}{%
\begin{tabular}{@{}llrrrc@{}}
\toprule
\textbf{Theme} & \textbf{Treatment Variable (D)} & \textbf{Model} & \textbf{Coeff. ($\hat{\theta}$)} & \textbf{$p$-value} & \textbf{Robust?} \\
\midrule
\multicolumn{6}{@{}l}{\textit{Finding 1: Consistent Negative Effect of Easing Expectations}} \\
Monetary Policy & \seqsplit{ffr\_slope\_scaled\_trend} & PLR & -0.1436 & 0.0010 & Yes \\
& & APE & -0.0073 & $<$0.0001 & Yes \\
\midrule
\multicolumn{6}{@{}l}{\textit{Finding 2: Effect of Credit Spread Volatility Lost Robustness}} \\
Credit Conditions & \seqsplit{credit\_spread\_scaled\_std} & PLR & -0.0524 & $<$0.0001 & Yes \\
& & APE & - & - & No \\
\midrule
\multicolumn{6}{@{}l}{\textit{Finding 3: New Volatility-Based Drivers Gained Robustness}} \\
Options Risk Appetite & \seqsplit{gex\_oi\_trend\_z\_scaled\_std} & PLR & - & - & No \\
& & APE & 0.0773 & $<$0.0001 & Yes \\
Volatility Risk Premium & \seqsplit{vrp\_roc63\_scaled\_std} & PLR & - & - & No \\
& & APE & -0.0021 & 0.0099 & Yes \\
\midrule
\multicolumn{6}{@{}l}{\textit{Finding 4: Causal Sign Reversal for Liquidity and Sentiment}} \\
Market Liquidity & \seqsplit{amihud\_illiquidity\_trend\_z\_scaled\_std} & PLR & -0.0608 & 0.0001 & Yes \\
& & APE & 0.0160 & $<$0.0001 & Yes \\
Market Sentiment & \seqsplit{pcr\_oi\_roc63\_scaled\_std} & PLR & -0.0549 & 0.0057 & Yes \\
& & APE & 0.0241 & $<$0.0001 & Yes \\
\bottomrule
\end{tabular}%
}
\justify
\small{\textit{Notes:} This table compares robust causal estimates for selected variables from the DML-PLR (Partially Linear Regression) and DML-APE (Average Partial Effect) models. Robustness is determined by a formal sensitivity analysis to unobserved confounding \`a la \citet{Cinelli2020}. A "Yes" indicates the finding is statistically significant ($p<0.05$) and passed the sensitivity check. A "No" indicates the finding is either not statistically significant or not robust. The comparison shows how moving to the more flexible APE model changes the set of identified causal drivers, highlighting findings that remain consistent, lose robustness, gain robustness, or reverse in sign. Full results are in \ref{sec:appendix_results}.}
\end{table}

\subsubsection{Hypotheses Discarded: the VIX Trend Example}

The formal DML and sensitivity analysis allows us to formally discard plausible but non-robust causal hypotheses. A prime example is the trend of VIX. In the initial DML procedure, \seqsplit{vix\_roc63\_scaled\_trend} produced a statistically significant coefficient. However, the nuisance model $R^2$ is greater than 0.7, meaning that much of its variation is already explained by other controls like credit spreads. The formal sensitivity analysis shows that the finding is highly sensitive to unobserved confounding. Therefore, we cannot make robust causal claim about the VIX trend and discard it from our primary findings. This filtering procedure by sensitivity analysis is important for credible economic interpretations in the presence of unobserved confounding and high dimensional confounders.

\subsection{Causal Interpretation through Structural Economic Models}

\label{sec:causalinterpretation}

Before bridging our causal estimates by DML with structural models, we need to revisit how we interpret causal estimates. Because our estimand represents the effect on the contemporaneously measured market state, as opposed to the future market outcome, its interpretation and policy implications are nuanced. An intervention on a causal driver of market trough should not be interpreted as a tool to prevent future capitulation, but as a way to cure the current market dynamic underlying the market capitulation. Even though an effective policy that causally reduce market illiquidity might not halt a bear market, it can fix the "fire-sale" phenomenon of the market trough. This perspective reminds us that our findings identify causal triggers for the characteristics of a market bottom, not necessarily the trough event itself. 

The DML framework provides strong, reduced-form evidence about the high-frequency causal drivers of market troughs, but the broad question is how these short term treatment effects, on a timescale of a few days, bridge with canonical structural macro-finance models of financial crisis, which describes the accumulation of financial market vulnerabilities over the long term. We frame this connection as "state and trigger". Structural models like \citet{Bernanke1999}, \citet{Kiyotaki1997}, \citet{He2013} describe the gradual buildup of systemic fragility using state variables such as intermediary capital constraints or aggregate balance sheet health. Our empirical DML framework then uncovers the acute, observable market signals that "trigger" market trough in the fragile state. In this section, we utilize the comparative analyses between the baseline DML-PLR and DML-APE framework to achieve a more granular, structural-macrofinance model-based understanding of these triggers.

\subsubsection{The Financial Accelerator: Credit Conditions and Policy Channels}

Our DML model provides a high-frequency snapshot of the financial accelerator framework of \citet{Bernanke1999}. The main mechanism is the external finance premium, which is the wedge between the cost of external finance and risk free rate. It rises following the deterioration of a borrower's balance sheet. The model expresses this with an equilibrium condition:

$$\frac{\mathbb{E}_t[\mathbf{R}_{t+1}^k]}{R_{t+1}} = s\left(\frac{\mathbf{N}_{t+1}}{\mathbf{Q}_t \mathbf{K}_{t+1}}\right), \quad \text{with} \quad s'(\cdot) < 0$$

where the left-side of the equation represents the ratio of expected return to a firm’s capital ($\mathbb{E}_t[\mathbf{R}_{t+1}^k]$) to the risk-free rate ($R_{t+1}$), and the right-side represents the external finance premium. The premium is the function $s(\cdot)$ that relies on the ratio of the firm's net worth ($\mathbf{N}_{t+1}$) to the value of its capital assets ($\mathbf{Q}_{t} \mathbf{K}_{t+1}$), which resembles the firm's underlying balance sheet condition. The important property is $s'(\cdot) < 0$: as a firm's net worth declines relative to its asset value, the firm's balance-sheet degrades and the external finance premium $s(\cdot)$ increases. Our \seqsplit{credit\_spread} indicator is a market-based measure of this external finance premium.

The comparison between DML-PLR and DML-APE reveal a subtle but important difference. The DML-PLR model finds a strong, negative causal effect of the standard deviation of the credit spread (\seqsplit{credit\_spread\_scaled\_std}, $\hat{\theta}=-0.0524$), while that strong effect disappears in the less restricted DML-APE model. This suggests that the true relationship is non-linear; the causal effect of credit spread volatility is conditional on the market state, and the DML-APE model correctly captures this interaction that confounds the linear estimate in the DML-PLR model.

In contrast, the causal pathway via monetary policy expectations demonstrates strong coherence across both models. The trend in the Fed Funds futures slope (\seqsplit{ffr\_slope\_scaled\_trend}) has a robust and statistically significant negative causal effect on market capitulation for both DML-PLR ($\hat{\theta}=-0.1436$) and DML-APE ($\hat{\theta}=-0.0073$) models. This specification-invariant finding shows that the expectation of future policy easing has an immediate, causal stabilising influence, likely due to better expectation of future corporate net worth $\mathbb{E}_t[\mathbf{N}_{t+k}]$, which prevents an increase in the external finance premium.

\subsubsection{Leverage Cycles, Liquidity Spirals, and Fire Sales}

We have also empirically captured triggers for leverage cycles \citep{Geanakoplos2010} and liquidity spirals \citep{Brunnermeier2009}. In their structural frameworks, a damaging "scary news" can cause collateral values to drop below regulatory thresholds, forcing leveraged agents into fire sales. The fire sales are further aggravated by a feedback loop between market illiquidity and tightening funding conditions. The mechanics of a fire sale are described by the liquidity spiral in Brunnermeier's model:

$$|\Lambda^{j}_{1}|=m^{j}_{1}(\phi_{1}-1)$$

where $|\Lambda^{j}_{1}|$ represents market illiquidity—specifically how much the price moves for an order size. The illiquidity is the product of two mutually-reinforcing mechanisms: tightening margin requirements $m^{j}_{1}$, which shows a leveraged collapse, and funding illiquidity $\phi_{1}-1$, which demonstrates the difficulty of financing asset purchases when there is a fire sale. The equation is a vicious cycle, as the tightening of the margin forces sales, which are amplified by funding illiquidity, which then tightens market liquidity, creating more margin calls.

Our DML comparative analysis shows the empirical proxies for the triggers of the theoretical structural models. The most significant evidence is the causal channel for market liquidity. For the volatility of the Amihud illiquidity trend (\seqsplit{amihud\_illiquidity\_trend\_z\_scaled\_std}), which is a direct proxy for the instability of the market illiquidity $|\Lambda^{j}_{1}|$, the DML-PLR framework produces a robust but counter-intuitive negative coefficient ($\hat{\theta}=-0.0608$). The DML-APE framework then reversed the sign and found a robust and theoretically consistent positive coefficient ($\hat{\theta}=0.0160$). This sign flip is critical, as the APE model's finding aligns with the structural theory: as instability in market iliquidity increases, it is a causal precursor to a trough, confirmation the illiquidity spiral market dynamic.

\subsubsection{A Unifying Paradigm: Intermediary Asset Pricing}

We can unify these findings under the modern structural framework of intermediary asset pricing \citep{He2013}. The framework constructs that the financial system's risk-bearing capacity comes from the wealth of the economy's specialized, leveraged intermediary sector. If this sector is financially constrained, the market price for risk rises non-linearly. The relationship is expressed by the intermediary pricing kernel:

$$\mathbb{E}_t[d\mathbf{R}_t] - r_t dt = \alpha_t^I \text{Var}_t[d\mathbf{R}_t]$$

Here, the market's excess return on average (the equity risk premium on the left) is equal to the market variance times the intermediary sector's risk aversion ($\alpha_t^I$). The key is that $\alpha_t^I$ is not constant; it increase non-linearly as intermediaries' capital is depleted. A market capitulation is the outcome of an economy being entrenched in this period of financially constraint, with a high and volatile $\alpha_t^I$. Our comparative DML analysis provides a rich description of the trigger of troughs in this region.

\begin{itemize}

    \item \textbf{The Empirical Signature of a Constrained Regime.} The results of the DML-APE model demonstrates that the intermediary sector is constrained, with a high and unstable $\alpha_t^I$, around market capitulation. The causal effect of the volatility of options-implied risk measures, specifically \seqsplit{risk\_neutral\_skewness\_scaled\_std} ($\hat{\theta}=0.0359$) and \seqsplit{risk\_neutral\_kurtosis\_scaled\_std} ($\hat{\theta}=0.0957$), corroborates the non-linear, erratic market price of risk expected when constrained intermediaries cannot absorb shocks smoothly.

    \item \textbf{Binding Constraints, Deteriorating Market Liquidity.} A high $\alpha_t^I$ implies intermediaries withdraw due to risk aversion, damaging market liquidity. The sign switch for the Amihud illiquidity volatility (\seqsplit{amihud\_illiquidity\_trend\_z\_scaled\_std}) from negative in the DML-PLR model to the positive in DML-APE model, ($\hat{\theta}=0.0160$), is the empirical smoking gun. The APE model correctly identifies that the increasing instability in market illiquidity, directly driven by the intermediaries withdrawing capital, is a robust and significant causal driver of the market trough.

    \item \textbf{The Policy Stabilization Channel.} The robust, stabilizing causal channel of the Fed Funds futures slope (\seqsplit{ffr\_slope\_scaled\_trend}, with the APE estimate $\hat{\theta}=-0.0073$), fits perfectly in the structural model. When the market expects a dovish policy, such as a cut in the Fed Funds rate, the intermediaries' future funding conditions and franchise value are improved. That eases their capital constraints now, enhances the liquidity and risk-bearing of the financial market, and reduces their effective risk aversion $\alpha_t^I$, moving the market away from the capitulation region.

\end{itemize}

To sum up, our nowcasting model's successful prediction capabilities stem from its ability to learn the short-time empirical signatures of the core triggers and mechanisms, corroborated by the long-term structural macro-finance theories on financial crises. Our comparative causal analysis reveals that, to correctly understand the triggers for market troughs, the causal frameworks require specifications that account for non-linearity and interaction. DML-APE offers a more theoretically coherent account of how the long-term latent risks, delineated by the structural macro-finance models, manifest short-term observable market capitulation, providing a rich empirical validation of modern asset pricing theory.

\section{Conclusions}

In this paper, we tackled the market trough prediction challenge by a dual-track approach. First, we established a high-performance machine learning pipeline, culminating in a SVM classifier, which produced an ROC AUC of 0.89 out-of-sample, and whose predictions we rendered interpretable via SHAP analysis. A stylized backtest of a simple E-mini S\&P 500 futures strategy confirmed the economic significance of the nowcast signal as a market capitulation detector, which performed well in its ability to detect V-shaped price reversals, though it struggles in grinding bear markets. Second, and more importantly, we moved beyond statistical correlation and did a comparative causal analysis. We argued that while the DML-PLR model provided a good baseline, the more flexible DML-APE specification properly accounted for the binary outcome and non-linear interactions problem and provided more credible economic inferences that aligned with structural macrofinance theories. The DML-APE framework not only corrected the sign of some key causal effects, but also discovered new causal pathways related to market volatility.

Our primary causal findings, robust and statistically significant under DML-APE specification and formal sensitivity analysis, suggest that the triggers for market capitulations are fundamentally grounded in the non-linear dynamics of market instability. We found that it is not simply the level of fear, but the volatility of options-implied risk appetite (e.g., GEX, risk-neutral skewness) and the instability of market illiquidity (Amihud illiquidity) that causally drive market troughs. Our results provide high-frequency empirical validation for structural long term intermediary asset pricing theories, which conceptualizes market troughs as representing a phase transition into a constrained, non-linear regime whereby the market price of risk is erratic.

This research provides several channels for future work. Firstly, an extension of this research is to tackle the symmetric but distinct challenge of predicting market peaks. It likely requires a very different feature set to capture market euphoria before the peaks compared to the environment of capitulation. Secondly, a more extended dataset going back to pre-2013, including the 2008 global financial crisis (GFC),  would be an important out-of-sample test of the model's performance and robustness across distinct macroeconomic regimes, though it is hard to obtain high frequency option market data before 2013. Methodologically, while our framework enhances DML-PLR by estimating the Average Partial Effects, future research could extend to Conditional Average Partial Effects (CAPE), which would provide further insights on how these causal impacts differ contingent on the market states. Lastly, a promising frontier is to try to use Physics-Informed Machine Learning (PIML) to connect high frequency causal trigger of market trough revealed by DML-APE with long-term market dynamics explained by established structural macrofinance model through Hamilton–Jacobi–Bellman (HJB) equation, and then use deep reinforcement learning to make the framework tractable and create a generalized causal discovery for financial capitulation

\section*{Acknowledgements}

\clearpage
% 'aer' style is a close match for the RFS author-year format.
\bibliographystyle{aer}
\bibliography{riaibib} % <-- This should point to your .bib file

\clearpage
\appendix
\section{Full Bry-Boschan Algorithm Implementation}
\label{sec:appendix_bb_algo}

This appendix provides the full, unabridged pseudocode for the modified Bry-Boschan algorithm used to identify market turning points, as referenced in Section 2.2.

\captionof{algorithm}{Full Modified Bry-Boschan Algorithm}
\label{alg:bry_boschan_full}
\begin{algorithmic}[1]
\Statex \textbf{Require}: Log price series $P_t$, window \protect\seqsplit{order}, \protect\seqsplit{min\_phase}, \protect\seqsplit{min\_cycle}.
\Statex \textbf{Ensure}: A DataFrame \protect\seqsplit{turns} of significant peaks and troughs.
\Statex
\Procedure{IdentifyTurns}{$P_t$, \protect\seqsplit{order}, \protect\seqsplit{min\_phase}, \protect\seqsplit{min\_cycle}}
    \State Initialize \protect\seqsplit{turns} with all local peaks and troughs from $P_t$, sorted by date.
    \State \protect\seqsplit{turns} $\gets$ \Call{EnforceAlternation}{\protect\seqsplit{turns}}
    \State \protect\seqsplit{turns} $\gets$ \Call{CensorPhases}{\protect\seqsplit{turns}, \protect\seqsplit{min\_phase}}
    \State \protect\seqsplit{turns} $\gets$ \Call{CensorCycles}{\protect\seqsplit{turns}, \protect\seqsplit{min\_cycle}, $P_t$}
    \State \Return \protect\seqsplit{turns}
\EndProcedure
\Statex
\Function{EnforceAlternation}{\protect\seqsplit{turns}}
    \State Let \protect\seqsplit{processed\_turns} be an empty list.
    \For{each \protect\seqsplit{current\_turn} in \protect\seqsplit{turns}}
        \If{\protect\seqsplit{processed\_turns} is empty or \protect\seqsplit{current\_turn.type} $\neq$ \protect\seqsplit{last\_turn.type}}
            \State Add \protect\seqsplit{current\_turn} to \protect\seqsplit{processed\_turns}.
        \Else \Comment{If same type, keep the more extreme one.}
            \If{(\protect\seqsplit{current\_turn.type} is 'Peak' \textbf{and} \protect\seqsplit{current\_turn.value} > \protect\seqsplit{last\_turn.value}) \textbf{or} \\ \qquad (\protect\seqsplit{current\_turn.type} is 'Trough' \textbf{and} \protect\seqsplit{current\_turn.value} < \protect\seqsplit{last\_turn.value})}
                \State Replace last element of \protect\seqsplit{processed\_turns} with \protect\seqsplit{current\_turn}.
            \EndIf
        \EndIf
    \EndFor
    \State \Return \protect\seqsplit{processed\_turns}
\EndFunction
\Statex
\Function{CensorPhases}{\protect\seqsplit{turns}, \protect\seqsplit{min\_phase}}
    \Repeat
        \State \protect\seqsplit{has\_changed} $\gets$ false
        \For{\protect\seqsplit{i} from 0 to length of \protect\seqsplit{turns} - 2}
            \If{Duration(\protect\seqsplit{turns}[\protect\seqsplit{i}], \protect\seqsplit{turns}[\protect\seqsplit{i}+1]) < \protect\seqsplit{min\_phase}}
                \State Drop the less extreme of \protect\seqsplit{turns}[\protect\seqsplit{i}] and \protect\seqsplit{turns}[\protect\seqsplit{i}+1].
                \State \protect\seqsplit{turns} $\gets$ \Call{EnforceAlternation}{\protect\seqsplit{turns}}
                \State \protect\seqsplit{has\_changed} $\gets$ true; \textbf{break} \Comment{Restart scan.}
            \EndIf
        \EndFor
    \Until{\protect\seqsplit{has\_changed} is false}
    \State \Return \protect\seqsplit{turns}
\EndFunction
\Statex
\Function{CensorCycles}{\protect\seqsplit{turns}, \protect\seqsplit{min\_cycle}, $P_t$}
    \Repeat
        \State \protect\seqsplit{has\_changed} $\gets$ false
        \For{\protect\seqsplit{i} from 0 to length of \protect\seqsplit{turns} - 3}
            \If{Duration(\protect\seqsplit{turns}[\protect\seqsplit{i}], \protect\seqsplit{turns}[\protect\seqsplit{i}+2]) < \protect\seqsplit{min\_cycle}}
                \State Let (\protect\seqsplit{t1}, \protect\seqsplit{t2}, \protect\seqsplit{t3}) be (\protect\seqsplit{turns}[\protect\seqsplit{i}], \protect\seqsplit{turns}[\protect\seqsplit{i}+1], \protect\seqsplit{turns}[\protect\seqsplit{i}+2]).
                \If{\protect\seqsplit{t2} is the true extremum between \protect\seqsplit{t1} and \protect\seqsplit{t3}}
                     \State Drop \protect\seqsplit{t1} and \protect\seqsplit{t3}.
                \Else
                    \State Drop \protect\seqsplit{t2}.
                \EndIf
                \State \protect\seqsplit{turns} $\gets$ \Call{EnforceAlternation}{\protect\seqsplit{turns}}
                \State \protect\seqsplit{has\_changed} $\gets$ true; \textbf{break} \Comment{Restart scan.}
            \EndIf
        \EndFor
    \Until{\protect\seqsplit{has\_changed} is false}
    \State \Return \protect\seqsplit{turns}
\EndFunction
\end{algorithmic}

\section{Backtesting Details}

\subsection{Full Trade Logs for Baseline (5-Day Holding Period)}
\label{sec:appendix_backtest_logs}

This appendix provides the complete, unabridged trade logs for the two backtesting scenarios discussed in Section \ref{sec:econ_sig}. Table \ref{tab:backtest_log_fixed} details the 48 trades from the Fixed-Size strategy. Table \ref{tab:backtest_log_pyramid} details the trades from the Pyramiding strategy.

\begin{table}[htbp]
\centering
\caption{Full Trade Log: Fixed-Size Strategy}
\label{tab:backtest_log_fixed}
\resizebox{\textwidth}{!}{%
\begin{tabular}{rrrrr|rrrrr|rrrrr}
\toprule
\textbf{\#} & \textbf{Entry Date} & \textbf{Exit Date} & \textbf{Entry Price} & \textbf{P\&L (\$)} & \textbf{\#} & \textbf{Entry Date} & \textbf{Exit Date} & \textbf{Entry Price} & \textbf{P\&L (\$)} & \textbf{\#} & \textbf{Entry Date} & \textbf{Exit Date} & \textbf{Entry Price} & \textbf{P\&L (\$)} \\
\midrule
0 & 2023-10-13 & 2023-10-20 & 4351.50 & -5355.0 & 16 & 2024-08-02 & 2024-08-09 & 5358.00 & 345.0 & 32 & 2025-03-05 & 2025-03-12 & 5845.25 & -11242.5 \\
1 & 2023-10-16 & 2023-10-23 & 4402.25 & -7417.5 & 17 & 2024-08-05 & 2024-08-12 & 5273.75 & 4932.5 & 33 & 2025-03-06 & 2025-03-13 & 5764.75 & -8055.0 \\
2 & 2023-10-17 & 2023-10-24 & 4392.00 & -6192.5 & 18 & 2024-08-06 & 2024-08-13 & 5243.50 & 10707.5 & 34 & 2025-03-07 & 2025-03-14 & 5771.75 & -4542.5 \\
3 & 2023-10-18 & 2023-10-25 & 4344.00 & -7630.0 & 19 & 2024-12-27 & 2025-01-03 & 6029.25 & -2205.0 & 35 & 2025-03-10 & 2025-03-17 & 5602.00 & 6470.0 \\
4 & 2023-10-19 & 2023-10-26 & 4295.00 & -5755.0 & 20 & 2024-12-30 & 2025-01-06 & 5953.25 & 3620.0 & 36 & 2025-03-11 & 2025-03-18 & 5593.50 & 3945.0 \\
5 & 2023-10-20 & 2023-10-27 & 4244.50 & -5405.0 & 21 & 2024-12-31 & 2025-01-07 & 5938.25 & 995.0 & 37 & 2025-03-12 & 2025-03-19 & 5620.50 & 6357.5 \\
6 & 2023-10-23 & 2023-10-30 & 4254.00 & -3605.0 & 22 & 2025-01-02 & 2025-01-09 & 5914.75 & -2667.5 & 38 & 2025-03-13 & 2025-03-20 & 5603.75 & 5545.0 \\
7 & 2023-10-24 & 2023-10-31 & 4268.25 & -3292.5 & 23 & 2025-01-03 & 2025-01-10 & 5985.25 & -6192.5 & 39 & 2025-03-14 & 2025-03-21 & 5681.00 & 1945.0 \\
8 & 2023-10-25 & 2023-11-01 & 4191.50 & 3732.5 & 24 & 2025-01-06 & 2025-01-13 & 6025.75 & -6667.5 & 40 & 2025-03-17 & 2025-03-24 & 5731.50 & 3995.0 \\
9 & 2023-10-26 & 2023-11-02 & 4180.00 & 7382.5 & 25 & 2025-01-07 & 2025-01-14 & 5958.25 & -3592.5 & 41 & 2025-03-18 & 2025-03-25 & 5672.50 & 8007.5 \\
10 & 2023-10-27 & 2023-11-03 & 4136.50 & 12095.0 & 26 & 2025-01-08 & 2025-01-15 & 5949.25 & 2245.0 & 42 & 2025-03-19 & 2025-03-26 & 5747.75 & -30.0 \\
11 & 2023-10-30 & 2023-11-06 & 4182.00 & 9720.0 & 27 & 2025-01-10 & 2025-01-17 & 5861.50 & 8545.0 & 43 & 2025-04-16 & 2025-04-23 & 5307.75 & 5207.5 \\
12 & 2024-04-23 & 2024-04-30 & 5116.25 & -2930.0 & 28 & 2025-01-13 & 2025-01-20 & 5892.50 & 10195.0 & 44 & 2025-04-17 & 2025-04-24 & 5326.50 & 11295.0 \\
13 & 2024-04-24 & 2024-05-01 & 5075.00 & -417.5 & 29 & 2025-01-14 & 2025-01-21 & 5886.50 & 10495.0 & 45 & 2025-04-21 & 2025-04-28 & 5205.25 & 17495.0 \\
14 & 2024-07-31 & 2024-08-07 & 5587.50 & -19267.5 & 30 & 2025-03-03 & 2025-03-10 & 5870.25 & -13417.5 & 46 & 2025-04-22 & 2025-04-29 & 5412.50 & 7945.0 \\
15 & 2024-08-01 & 2024-08-08 & 5461.00 & -5167.5 & 31 & 2025-03-04 & 2025-03-11 & 5834.00 & -12030.0 & 47 & 2025-04-23 & 2025-04-30 & 5412.00 & 11107.5 \\
\bottomrule
\end{tabular}%
}
\justify
\small{\textit{Notes:} This table provides the complete trade log for the Fixed-Size backtesting strategy, evaluated on the hold-out test set (July 2023 - June 2025). The strategy trades the E-mini S\&P 500 futures contract (ES). A long position of a single contract is initiated at the close on any day $t$ where the model's calibrated trough probability exceeds 5\%. Each position is held for a fixed period of 5 trading days. The P\&L is reported in US dollars and accounts for a round-trip transaction cost of \$5.00 per contract. The strategy rules are detailed in Section \ref{sec:econ_sig}.}
\end{table}

\begin{table}[htbp]
\centering
\caption{Full Trade Log: Pyramiding Strategy}
\label{tab:backtest_log_pyramid}
\resizebox{\textwidth}{!}{%
\begin{tabular}{rrrrrr|rrrrrr}
\toprule
\textbf{\#} & \textbf{Entry Date} & \textbf{Exit Date} & \textbf{Entry Price} & \textbf{Size} & \textbf{P\&L (\$)} & \textbf{\#} & \textbf{Entry Date} & \textbf{Exit Date} & \textbf{Entry Price} & \textbf{Size} & \textbf{P\&L (\$)} \\
\midrule
0 & 2023-10-13 & 2023-10-20 & 4351.50 & 1 & -5355.0 & 24 & 2025-01-06 & 2025-01-13 & 6025.75 & 6 & -40005.0 \\
1 & 2023-10-16 & 2023-10-23 & 4402.25 & 2 & -14835.0 & 25 & 2025-01-07 & 2025-01-14 & 5958.25 & 7 & -25147.5 \\
2 & 2023-10-17 & 2023-10-24 & 4392.00 & 3 & -18577.5 & 26 & 2025-01-08 & 2025-01-15 & 5949.25 & 8 & 17960.0 \\
3 & 2023-10-18 & 2023-10-25 & 4344.00 & 4 & -30520.0 & 27 & 2025-01-10 & 2025-01-17 & 5861.50 & 9 & 76905.0 \\
4 & 2023-10-19 & 2023-10-26 & 4295.00 & 5 & -28775.0 & 28 & 2025-01-13 & 2025-01-20 & 5892.50 & 10 & 101950.0 \\
5 & 2023-10-20 & 2023-10-27 & 4244.50 & 6 & -32430.0 & 29 & 2025-01-14 & 2025-01-21 & 5886.50 & 11 & 115445.0 \\
6 & 2023-10-23 & 2023-10-30 & 4254.00 & 7 & -25235.0 & 30 & 2025-03-03 & 2025-03-10 & 5870.25 & 1 & -13417.5 \\
7 & 2023-10-24 & 2023-10-31 & 4268.25 & 8 & -26340.0 & 31 & 2025-03-04 & 2025-03-11 & 5834.00 & 2 & -24060.0 \\
8 & 2023-10-25 & 2023-11-01 & 4191.50 & 9 & 33592.5 & 32 & 2025-03-05 & 2025-03-12 & 5845.25 & 3 & -33727.5 \\
9 & 2023-10-26 & 2023-11-02 & 4180.00 & 10 & 73825.0 & 33 & 2025-03-06 & 2025-03-13 & 5764.75 & 4 & -32220.0 \\
10 & 2023-10-27 & 2023-11-03 & 4136.50 & 11 & 133045.0 & 34 & 2025-03-07 & 2025-03-14 & 5771.75 & 5 & -22712.5 \\
11 & 2023-10-30 & 2023-11-06 & 4182.00 & 12 & 116640.0 & 35 & 2025-03-10 & 2025-03-17 & 5602.00 & 6 & 38820.0 \\
12 & 2024-04-23 & 2024-04-30 & 5116.25 & 1 & -2930.0 & 36 & 2025-03-11 & 2025-03-18 & 5593.50 & 7 & 27615.0 \\
13 & 2024-04-24 & 2024-05-01 & 5075.00 & 2 & -835.0 & 37 & 2025-03-12 & 2025-03-19 & 5620.50 & 8 & 50860.0 \\
14 & 2024-07-31 & 2024-08-07 & 5587.50 & 1 & -19267.5 & 38 & 2025-03-13 & 2025-03-20 & 5603.75 & 9 & 49905.0 \\
15 & 2024-08-01 & 2024-08-08 & 5461.00 & 2 & -10335.0 & 39 & 2025-03-14 & 2025-03-21 & 5681.00 & 10 & 19450.0 \\
16 & 2024-08-02 & 2024-08-09 & 5358.00 & 3 & 1035.0 & 40 & 2025-03-17 & 2025-03-24 & 5731.50 & 11 & 43945.0 \\
17 & 2024-08-05 & 2024-08-12 & 5273.75 & 4 & 19730.0 & 41 & 2025-03-18 & 2025-03-25 & 5672.50 & 12 & 96090.0 \\
18 & 2024-08-06 & 2024-08-13 & 5243.50 & 5 & 53537.5 & 42 & 2025-03-19 & 2025-03-26 & 5747.75 & 13 & -390.0 \\
19 & 2024-12-27 & 2025-01-03 & 6029.25 & 1 & -2205.0 & 43 & 2025-04-16 & 2025-04-23 & 5307.75 & 1 & 5207.5 \\
20 & 2024-12-30 & 2025-01-06 & 5953.25 & 2 & 7240.0 & 44 & 2025-04-17 & 2025-04-24 & 5326.50 & 2 & 22590.0 \\
21 & 2024-12-31 & 2025-01-07 & 5938.25 & 3 & 2985.0 & 45 & 2025-04-21 & 2025-04-28 & 5205.25 & 3 & 52485.0 \\
22 & 2025-01-02 & 2025-01-09 & 5914.75 & 4 & -10670.0 & 46 & 2025-04-22 & 2025-04-29 & 5412.50 & 4 & 31780.0 \\
23 & 2025-01-03 & 2025-01-10 & 5985.25 & 5 & -30962.5 & 47 & 2025-04-23 & 2025-04-30 & 5412.00 & 5 & 55537.5 \\
\bottomrule
\end{tabular}%
}
\justify
\small{\textit{Notes:} This table provides the complete trade log for the Pyramiding backtesting strategy, evaluated on the hold-out test set (July 2023 - June 2025). The strategy trades the E-mini S\&P 500 futures contract (ES). A long position is initiated at the close on any day $t$ where the model's calibrated trough probability exceeds 5\%. The position size is increased with each consecutive day a signal is active; a trade of size $N$ is placed on the $N^{th}$ consecutive signal day, as shown in the 'Size' column. Each position is held for a fixed period of 5 trading days. The P\&L is reported in US dollars and accounts for a round-trip transaction cost of \$5.00 per contract. The strategy rules are detailed in Section \ref{sec:econ_sig}.}
\end{table}

\subsection{Complete Holding Period Sensitivity Analysis Results}
\label{sec:appendix_sensitivity_full}

This appendix provides the full, unabridged results of the holding period sensitivity analysis referenced in Section \ref{sec:econ_sig}. Table \ref{tab:backtest_sensitivity_full} details the performance metrics for all tested holding periods from 5 to 20 days.

\begin{table}[!htbp]
\centering
\caption{Complete Backtest Sensitivity Analysis Summary}
\label{tab:backtest_sensitivity_full}
\resizebox{\textwidth}{!}{%
\begin{tabular}{llrrrrr}
\toprule
\textbf{Holding Period} & \textbf{Strategy} & \textbf{Total Net P\&L} & \textbf{Sharpe Ratio (Ann.)} & \textbf{Profit Factor} & \textbf{Max Drawdown} & \textbf{Max Drawdown (\%)} \\
\midrule
\multirow{2}{*}{5 Days} & Fixed-Size & \$31,247.50 & 0.38 & 1.22 & (\$52,682.50) & 55.66\% \\
& Pyramiding & \$797,222.50 & 1.62 & 2.77 & (\$176,712.50) & 186.71\% \\
\midrule
\multirow{2}{*}{7 Days} & Fixed-Size & \$112,385.00 & 1.23 & 1.93 & (\$39,287.50) & 41.00\% \\
& Pyramiding & \$1,180,760.00 & 2.00 & 3.95 & (\$135,325.00) & 141.21\% \\
\midrule
\multirow{2}{*}{10 Days} & Fixed-Size & \$200,985.00 & 2.01 & 3.00 & (\$25,000.00) & 10.76\% \\
& Pyramiding & \$1,404,622.50 & 2.18 & 4.42 & (\$239,230.00) & 15.74\% \\
\midrule
\multirow{2}{*}{12 Days} & Fixed-Size & \$235,210.00 & 2.03 & 3.34 & (\$56,052.50) & 18.37\% \\
& Pyramiding & \$1,165,810.00 & 1.21 & 2.50 & (\$694,205.00) & 40.40\% \\
\midrule
\multirow{2}{*}{15 Days} & Fixed-Size & \$249,235.00 & 1.68 & 2.79 & (\$132,027.50) & 36.16\% \\
& Pyramiding & \$772,910.00 & 0.63 & 1.56 & (\$1,343,805.00) & 72.77\% \\
\midrule
\multirow{2}{*}{17 Days} & Fixed-Size & \$236,472.50 & 1.48 & 2.29 & (\$180,727.50) & 47.02\% \\
& Pyramiding & \$728,035.00 & 0.60 & 1.47 & (\$1,551,217.50) & 78.94\% \\
\midrule
\multirow{2}{*}{20 Days} & Fixed-Size & \$217,385.00 & 1.23 & 1.95 & (\$229,240.00) & 57.59\% \\
& Pyramiding & \$735,522.50 & 0.63 & 1.45 & (\$1,634,867.50) & 80.06\% \\
\bottomrule
\end{tabular}
}
\justify
\small{\textit{Notes:} This table presents the complete results of the holding period sensitivity analysis conducted on the hold-out test set. Performance metrics are shown for all tested holding periods. A curated version of this table highlighting the key findings is presented in the main text in Table \ref{tab:backtest_sensitivity_summary}.}
\end{table}

\section{Full DML Estimation and Sensitivity Analysis Results}
\label{sec:appendix_results}

This appendix contains the complete set of treatment variables for which the Double/Debiased Machine Learning analysis yielded a statistically significant causal estimate ($p < 0.05$) that is also robust to the formal sensitivity analysis of \citet{Cinelli2020}. Table \ref{tab:dml_plr_full_results} lists the robust findings from the baseline DML-PLR model. Table \ref{tab:dml_ape_full_results} lists the robust findings from our primary DML-APE model.

\begin{table}[!t]
\centering
\caption{Complete Robust Causal Estimates from the DML-PLR Model}
\label{tab:dml_plr_full_results}
\resizebox{\textwidth}{!}{%
\begin{tabular}{lrrrrrrr}
\toprule
\textbf{Treatment Variable} & \textbf{Coeff. ($\hat{\theta}$)} & \textbf{$p$-value} & \textbf{bias\_phi} & \textbf{Adj. 95\% CI Lower} & \textbf{Adj. 95\% CI Upper} & \textbf{Benchmark $R^2_Y$} & \textbf{Benchmark $R^2_D$} \\
\midrule
ffr\_slope\_trend\_z\_scaled\_last & -0.0157 & 0.0000 & 0.0000 & -0.0216 & -0.0098 & 0.0520 & 0.0000 \\
ffr\_slope\_roc63\_scaled\_last & -0.0228 & 0.0000 & 0.0048 & -0.0368 & -0.0087 & 0.0518 & 0.0159 \\
credit\_spread\_scaled\_std & -0.0524 & 0.0000 & 0.0000 & -0.0743 & -0.0306 & 0.0433 & 0.0000 \\
credit\_spread\_trend\_z\_scaled\_std & -0.0522 & 0.0000 & 0.0000 & -0.0747 & -0.0297 & 0.0424 & 0.0000 \\
fx\_rv\_6j\_21d\_trend\_z\_scaled\_mean & -0.0143 & 0.0000 & 0.0000 & -0.0207 & -0.0078 & 0.0664 & 0.0000 \\
ffr\_slope\_roc63\_scaled\_mean & -0.0122 & 0.0001 & 0.0000 & -0.0181 & -0.0062 & 0.0520 & 0.0000 \\
ffr\_slope\_scaled\_last & -0.0154 & 0.0001 & 0.0000 & -0.0230 & -0.0079 & 0.0518 & 0.0000 \\
amihud\_illiquidity\_trend\_z\_scaled\_std & -0.0608 & 0.0001 & 0.0000 & -0.0917 & -0.0299 & 0.0569 & 0.0000 \\
ffr\_slope\_scaled\_mean & -0.0145 & 0.0003 & 0.0000 & -0.0224 & -0.0066 & 0.0520 & 0.0000 \\
fx\_rv\_6j\_21d\_trend\_z\_scaled\_last & -0.0153 & 0.0005 & 0.0000 & -0.0238 & -0.0067 & 0.0656 & 0.0000 \\
ffr\_slope\_scaled\_trend & -0.1436 & 0.0010 & 0.0000 & -0.2293 & -0.0579 & 0.0515 & 0.0000 \\
fx\_rv\_6j\_21d\_scaled\_mean & -0.0128 & 0.0025 & 0.0038 & -0.0248 & -0.0007 & 0.0656 & 0.0069 \\
pcr\_oi\_roc63\_scaled\_std & -0.0549 & 0.0057 & 0.0000 & -0.0938 & -0.0160 & 0.0932 & 0.0000 \\
risk\_neutral\_skewness\_scaled\_trend & -0.1346 & 0.0069 & 0.0000 & -0.2321 & -0.0370 & 0.0448 & 0.0000 \\
risk\_neutral\_skewness\_trend\_z\_scaled\_trend & -0.1533 & 0.0082 & 0.0000 & -0.2669 & -0.0397 & 0.0446 & 0.0000 \\
flow\_concentration\_10d\_scaled\_std & 0.0636 & 0.0124 & 0.0000 & 0.0138 & 0.1134 & 0.0533 & 0.0000 \\
fx\_rv\_6j\_21d\_scaled\_last & -0.0096 & 0.0135 & 0.0000 & -0.0172 & -0.0020 & 0.0682 & 0.0000 \\
ffr\_basis\_roc63\_scaled\_mean & 0.0111 & 0.0177 & 0.0000 & 0.0019 & 0.0202 & 0.0520 & 0.0000 \\
risk\_neutral\_kurtosis\_trend\_z\_scaled\_mean & 0.0186 & 0.0196 & 0.0000 & 0.0030 & 0.0342 & 0.0448 & 0.0000 \\
flow\_concentration\_10d\_trend\_z\_scaled\_mean & -0.0093 & 0.0202 & 0.0000 & -0.0172 & -0.0015 & 0.0523 & 0.0000 \\
flow\_concentration\_10d\_trend\_z\_scaled\_std & 0.0404 & 0.0219 & 0.0000 & 0.0059 & 0.0750 & 0.0546 & 0.0000 \\
flow\_concentration\_10d\_roc63\_scaled\_std & 0.0515 & 0.0228 & 0.0000 & 0.0072 & 0.0958 & 0.0594 & 0.0000 \\
ffr\_basis\_roc63\_scaled\_last & 0.0081 & 0.0260 & 0.0000 & 0.0010 & 0.0152 & 0.0526 & 0.0000 \\
ffr\_slope\_trend\_z\_scaled\_trend & -0.0834 & 0.0306 & 0.0000 & -0.1590 & -0.0078 & 0.0520 & 0.0000 \\
risk\_neutral\_kurtosis\_scaled\_mean & 0.0173 & 0.0306 & 0.0000 & 0.0016 & 0.0329 & 0.0448 & 0.0000 \\
risk\_neutral\_skewness\_roc63\_scaled\_trend & -0.0946 & 0.0359 & 0.0000 & -0.1829 & -0.0062 & 0.0465 & 0.0000 \\
pcr\_oi\_trend\_z\_scaled\_std & -0.0337 & 0.0420 & 0.0000 & -0.0662 & -0.0012 & 0.1126 & 0.0000 \\
\bottomrule
\end{tabular}%
}
\justify
\small{\textit{Notes:} This table reports the complete set of statistically significant ($p < 0.05$) causal estimates from the Double/Debiased Machine Learning Partially Linear Regression (DML-PLR) model that are robust to unobserved confounding. The analysis uses daily data from April 2013 to June 2025 ($N=3068$). The dependent variable is a binary indicator for a market trough. For each treatment variable listed, the model includes all other engineered features as high-dimensional controls, subject to the exclusion protocol described in Section 7.3.1. Coefficients ($\hat{\theta}$) represent the estimated linear causal effect of a one-unit change in the treatment variable on the probability of a market trough. Adjusted 95\% confidence intervals and p-values are based on robust standard errors from the DML procedure. The `bias\_phi` column reports a measure of the estimation bias from the nuisance functions.

Robustness is assessed using the formal sensitivity analysis of \citet{Cinelli2020}. Benchmark $R^2_Y$ and Benchmark $R^2_D$ report the out-of-sample partial $R^2$ of the outcome and the treatment explained by the observed confounders, respectively. These values serve as a benchmark for the plausible strength of an unobserved confounder. The results are deemed robust if the adjusted 95\% confidence interval, which accounts for potential bias from a hypothetical confounder as strong as the observed ones, still excludes zero.}

\end{table}

\clearpage

\begin{table}[t]
\centering
\caption{Complete Robust Causal Estimates from the DML-APE Model}
\label{tab:dml_ape_full_results}
\resizebox{\textwidth}{!}{%
\begin{tabular}{lrrrrrrr}
\toprule
\textbf{Treatment Variable} & \textbf{Coeff. ($\hat{\theta}$)} & \textbf{$p$-value} & \textbf{bias\_phi} & \textbf{Adj. 95\% CI Lower} & \textbf{Adj. 95\% CI Upper} & \textbf{Benchmark $R^2_Y$} & \textbf{Benchmark $R^2_D$} \\
\midrule
\seqsplit{fx\_rv\_6j\_21d\_roc63\_scaled\_std} & 0.0057 & 0.0000 & 0.0000 & 0.0046 & 0.0069 & 0.1085 & 0.0000 \\
\seqsplit{risk\_neutral\_kurtosis\_trend\_z\_scaled\_std} & 0.0485 & 0.0000 & 0.0000 & 0.0413 & 0.0557 & 0.0366 & 0.0000 \\
\seqsplit{fx\_rv\_6j\_21d\_trend\_z\_scaled\_std} & 0.0072 & 0.0000 & 0.0000 & 0.0061 & 0.0083 & 0.1085 & 0.0000 \\
\seqsplit{risk\_neutral\_kurtosis\_roc63\_scaled\_std} & 0.0382 & 0.0000 & 0.0000 & 0.0339 & 0.0425 & 0.0366 & 0.0000 \\
\seqsplit{ffr\_slope\_trend\_z\_scaled\_std} & 0.0092 & 0.0000 & 0.0000 & 0.0083 & 0.0100 & 0.0427 & 0.0000 \\
\seqsplit{fx\_rv\_6e\_21d\_scaled\_std} & 0.0100 & 0.0000 & 0.0000 & 0.0089 & 0.0110 & 0.0913 & 0.0000 \\
\seqsplit{amihud\_illiquidity\_trend\_z\_scaled\_std} & 0.0160 & 0.0000 & 0.0000 & 0.0129 & 0.0191 & 0.0375 & 0.0000 \\
\seqsplit{risk\_neutral\_skewness\_trend\_z\_scaled\_std} & 0.0343 & 0.0000 & 0.0000 & 0.0301 & 0.0384 & 0.0366 & 0.0000 \\
\seqsplit{risk\_neutral\_skewness\_scaled\_std} & 0.0359 & 0.0000 & 0.0000 & 0.0320 & 0.0397 & 0.0366 & 0.0000 \\
\seqsplit{risk\_neutral\_skewness\_roc63\_scaled\_std} & 0.0399 & 0.0000 & 0.0000 & 0.0320 & 0.0479 & 0.0366 & 0.0000 \\
\seqsplit{pcr\_oi\_trend\_z\_scaled\_std} & 0.0213 & 0.0000 & 0.0000 & 0.0196 & 0.0231 & 0.1064 & 0.0000 \\
\seqsplit{fx\_rv\_6e\_21d\_roc63\_scaled\_mean} & -0.0016 & 0.0000 & 0.0000 & -0.0020 & -0.0012 & 0.0913 & 0.0000 \\
\seqsplit{ffr\_slope\_roc63\_scaled\_std} & 0.0063 & 0.0000 & 0.0000 & 0.0057 & 0.0070 & 0.0427 & 0.0000 \\
\seqsplit{pcr\_oi\_scaled\_std} & 0.0191 & 0.0000 & 0.0000 & 0.0178 & 0.0203 & 0.1064 & 0.0000 \\
\seqsplit{fx\_rv\_6e\_21d\_roc63\_scaled\_std} & 0.0061 & 0.0000 & 0.0000 & 0.0048 & 0.0074 & 0.0913 & 0.0000 \\
\seqsplit{pcr\_oi\_roc63\_scaled\_std} & 0.0241 & 0.0000 & 0.0000 & 0.0214 & 0.0268 & 0.1064 & 0.0000 \\
\seqsplit{ffr\_slope\_scaled\_std} & 0.0103 & 0.0000 & 0.0000 & 0.0099 & 0.0107 & 0.0427 & 0.0000 \\
\seqsplit{ffr\_slope\_scaled\_last} & 0.0177 & 0.0000 & 0.0000 & 0.0146 & 0.0209 & 0.0427 & 0.0000 \\
\seqsplit{fx\_rv\_6j\_21d\_scaled\_std} & 0.0051 & 0.0000 & 0.0000 & 0.0039 & 0.0063 & 0.1085 & 0.0000 \\
\seqsplit{risk\_neutral\_kurtosis\_scaled\_std} & 0.0957 & 0.0000 & 0.0000 & 0.0726 & 0.1189 & 0.0366 & 0.0000 \\
\seqsplit{fx\_momentum\_6e\_21d\_trend\_z\_scaled\_std} & 0.0047 & 0.0000 & 0.0000 & 0.0035 & 0.0058 & 0.0913 & 0.0000 \\
\seqsplit{ffr\_slope\_scaled\_trend} & -0.0073 & 0.0000 & 0.0000 & -0.0092 & -0.0054 & 0.0427 & 0.0000 \\
\seqsplit{gex\_oi\_trend\_z\_scaled\_std} & 0.0773 & 0.0000 & 0.0186 & 0.0381 & 0.1165 & 0.0331 & 0.0391 \\
\seqsplit{fx\_momentum\_6e\_21d\_scaled\_std} & 0.0038 & 0.0000 & 0.0000 & 0.0028 & 0.0049 & 0.0913 & 0.0000 \\
\seqsplit{fx\_rv\_6j\_21d\_trend\_z\_scaled\_last} & -0.0018 & 0.0000 & 0.0000 & -0.0023 & -0.0012 & 0.1085 & 0.0000 \\
\seqsplit{fx\_momentum\_6j\_21d\_trend\_z\_scaled\_std} & 0.0026 & 0.0000 & 0.0000 & 0.0018 & 0.0035 & 0.1085 & 0.0000 \\
\seqsplit{fx\_rv\_6e\_21d\_roc63\_scaled\_last} & -0.0012 & 0.0000 & 0.0000 & -0.0017 & -0.0008 & 0.0913 & 0.0000 \\
\seqsplit{ffr\_slope\_trend\_z\_scaled\_trend} & -0.0066 & 0.0000 & 0.0000 & -0.0090 & -0.0042 & 0.0427 & 0.0000 \\
\seqsplit{fx\_rv\_6e\_21d\_trend\_z\_scaled\_mean} & -0.0012 & 0.0000 & 0.0000 & -0.0017 & -0.0008 & 0.0913 & 0.0000 \\
\seqsplit{fx\_rv\_6e\_21d\_trend\_z\_scaled\_last} & -0.0009 & 0.0000 & 0.0000 & -0.0013 & -0.0006 & 0.0913 & 0.0000 \\
\seqsplit{fx\_rv\_6j\_21d\_trend\_z\_scaled\_trend} & 0.0037 & 0.0000 & 0.0000 & 0.0020 & 0.0054 & 0.1085 & 0.0000 \\
\seqsplit{fx\_rv\_6j\_21d\_trend\_z\_scaled\_mean} & -0.0017 & 0.0001 & 0.0000 & -0.0025 & -0.0009 & 0.1085 & 0.0000 \\
\seqsplit{ffr\_slope\_roc63\_scaled\_last} & -0.0006 & 0.0001 & 0.0000 & -0.0009 & -0.0003 & 0.0427 & 0.0000 \\
\seqsplit{fx\_rv\_6e\_21d\_scaled\_mean} & -0.0009 & 0.0003 & 0.0000 & -0.0014 & -0.0004 & 0.0913 & 0.0000 \\
\seqsplit{flow\_concentration\_10d\_trend\_z\_scaled\_mean} & -0.0013 & 0.0003 & 0.0000 & -0.0021 & -0.0006 & 0.0424 & 0.0000 \\
\seqsplit{risk\_neutral\_kurtosis\_scaled\_mean} & 0.0051 & 0.0008 & 0.0000 & 0.0021 & 0.0081 & 0.0366 & 0.0000 \\
\seqsplit{fx\_rv\_6e\_21d\_trend\_z\_scaled\_trend} & 0.0024 & 0.0009 & 0.0000 & 0.0010 & 0.0038 & 0.0913 & 0.0000 \\
\seqsplit{flow\_concentration\_10d\_scaled\_std} & 0.0021 & 0.0013 & 0.0000 & 0.0008 & 0.0034 & 0.0424 & 0.0000 \\
\seqsplit{ffr\_basis\_roc63\_scaled\_trend} & -0.0020 & 0.0034 & 0.0000 & -0.0033 & -0.0006 & 0.0427 & 0.0000 \\
\seqsplit{ffr\_slope\_scaled\_mean} & 0.0011 & 0.0085 & 0.0000 & 0.0003 & 0.0019 & 0.0427 & 0.0000 \\
\seqsplit{ffr\_basis\_scaled\_last} & 0.0007 & 0.0096 & 0.0000 & 0.0002 & 0.0012 & 0.0427 & 0.0000 \\
\seqsplit{vrp\_roc63\_scaled\_std} & -0.0021 & 0.0099 & 0.0000 & -0.0036 & -0.0005 & 0.0367 & 0.0000 \\
\seqsplit{flow\_concentration\_10d\_roc63\_scaled\_std} & 0.0018 & 0.0114 & 0.0000 & 0.0004 & 0.0031 & 0.0424 & 0.0000 \\
\seqsplit{risk\_neutral\_kurtosis\_trend\_z\_scaled\_trend} & -0.0060 & 0.0233 & 0.0000 & -0.0111 & -0.0008 & 0.0366 & 0.0000 \\
\seqsplit{ffr\_basis\_roc63\_scaled\_last} & -0.0003 & 0.0299 & 0.0000 & -0.0006 & 0.0000 & 0.0427 & 0.0000 \\
\seqsplit{risk\_neutral\_kurtosis\_scaled\_trend} & -0.0043 & 0.0313 & 0.0000 & -0.0082 & -0.0004 & 0.0366 & 0.0000 \\
\seqsplit{ffr\_basis\_scaled\_mean} & 0.0006 & 0.0345 & 0.0000 & 0.0000 & 0.0011 & 0.0427 & 0.0000 \\
\seqsplit{flow\_concentration\_10d\_trend\_z\_scaled\_std} & 0.0011 & 0.0360 & 0.0000 & 0.0001 & 0.0021 & 0.0424 & 0.0000 \\
\bottomrule
\end{tabular}%
}
\justify
\small{\textit{Notes:} This table reports the complete set of statistically significant ($p < 0.05$) causal estimates from the Double/Debiased Machine Learning Average Partial Effect (DML-APE) model that are robust to unobserved confounding. The analysis uses daily data from April 2013 to June 2025 ($N=3068$). The dependent variable is a binary indicator for a market trough. For each treatment variable, all other engineered features are included as high-dimensional controls, subject to the exclusion protocol in Section 7.3.1.

The coefficient ($\hat{\theta}$) is the Average Partial Effect (APE), representing the average change in trough probability for a one-unit change in the treatment, averaged over the data distribution. Following the procedure in Section 7.2, the point estimate is the median of the Neyman-orthogonal scores, and the 95\% confidence intervals and p-values are derived from a non-parametric bootstrap of these scores. This method is chosen for its robustness to outliers in the score function, as justified in \ref{sec:appendix_ape_median}.

All findings are validated using the sensitivity analysis framework of \citet{Cinelli2020}. Benchmark $R^2_Y$ and Benchmark $R^2_D$ report the out-of-sample partial $R^2$ of the outcome and the treatment explained by the observed confounders, respectively. These values serve as a benchmark for the plausible strength of an unobserved confounder. The results are deemed robust if the adjusted 95\% confidence interval, which accounts for potential bias from a hypothetical confounder as strong as the observed ones, still excludes zero.}

\end{table}

\section{Deriving the Neyman Orthogonal Score for the APE}

\label{sec:appendixapederiv}

In this appendix we provide a formal derivation of the Neyman-orthogonal score function for the Average Partial Effect (APE) in a general non-parametric interactive model. We state the general form of the score, then provide a rigorous proof of its Neyman-orthogonality property with respect to both the outcome and treatment models, clarifying all necessary assumptions. Finally, we derive the practical and implementable score function that results from a semi-parametric Gaussian assumption for the treatment model, providing a complete theoretical basis for its use in debiased machine learning.

\subsection{Model Setup, Assumptions, and Parameter of Interest}

We consider a structural model where an outcome variable $\mathbf{Y}$ is determined by a continuous scalar treatment $\mathbf{D}$ and a vector of confounding variables $\mathbf{X}$ through a general, non-separable function $g_0$:
$$
\mathbf{Y} = g_0(\mathbf{D}, \mathbf{X}) + \mathbf{U}
$$
The core causal assumption is unconfoundedness, which posits that the treatment assignment is independent of the potential outcomes, conditional on the covariates $\mathbf{X}$. This implies that $\mathbf{U}$ is mean-independent of $\mathbf{D}$ conditional on $\mathbf{X}$, which gives the key moment condition $\mathbb{E}[\mathbf{U} | \mathbf{D}, \mathbf{X}] = 0$. This justifies defining the conditional expectation function as $l_0(\mathbf{D}, \mathbf{X}) = \mathbb{E}[\mathbf{Y} | \mathbf{D}, \mathbf{X}]$. Let $p_0(\mathbf{D}|\mathbf{X})$ denote the true conditional probability density function (PDF) of the treatment $\mathbf{D}$ given the confounders $\mathbf{X}$.

For the derivation that follows, we make the following assumptions:
\begin{itemize}
    \item \textbf{Regularity Conditions}:
    \begin{itemize}
        \item The conditional expectation function $l_0(d, \mathbf{X})$ is differentiable with respect to its first argument $d$ for all $(\mathbf{D}, \mathbf{X})$ in the support of the data distribution.
        \item All expectations presented in this document exist and are finite.
        \item For any valid perturbation function $r_l(d, \mathbf{X})$ in the Gateaux derivative path, the boundary condition $\lim_{d \to \pm\infty} r_l(d, \mathbf{X}) p_0(d|\mathbf{X}) = 0$ holds. This ensures the validity of the integration by parts used in the proof of orthogonality.
        \item The perturbed density paths are sufficiently smooth to allow for the interchange of partial derivatives (Clairaut's Theorem).
    \end{itemize}
\end{itemize}
The causal parameter of interest is the \textbf{Average Partial Effect (APE)}, denoted $\theta_0$, which is the expected partial derivative of the outcome model with respect to the treatment:
$$
\theta_0 = \mathbb{E}\left[ \frac{\partial l_0(\mathbf{D}, \mathbf{X})}{\partial \mathbf{D}} \right]
$$
The expectation is taken over the joint distribution of $(\mathbf{D}, \mathbf{X})$.

\subsection{The General Neyman-Orthogonal Score}

To achieve a $\sqrt{n}$-consistent and asymptotically normal estimator for $\theta_0$, we use a debiased estimation approach centered on a Neyman-orthogonal score function. This score corrects for the bias introduced by regularization in high-dimensional or non-parametric estimation of the nuisance functions.

\begin{prop}
The Neyman-orthogonal score function for the APE parameter $\theta_0$ is given by:
$$
\psi(\mathbf{W}; \theta, \eta) = \frac{\partial l(\mathbf{D}, \mathbf{X})}{\partial \mathbf{D}} - \theta - \frac{1}{p(\mathbf{D}|\mathbf{X})}\frac{\partial p(\mathbf{D}|\mathbf{X})}{\partial \mathbf{D}} \left( \mathbf{Y} - l(\mathbf{D}, \mathbf{X}) \right)
$$
where $\mathbf{W} = (\mathbf{Y}, \mathbf{D}, \mathbf{X})$ is the observable data and $\eta = (l, p)$ is the set of nuisance functions.
\end{prop}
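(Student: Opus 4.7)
The plan is to establish the proposition in two standard stages from the debiased machine learning literature: first verify the identification equation $\mathbb{E}[\psi(\mathbf{W}; \theta_0, \eta_0)] = 0$, and then verify Neyman orthogonality by showing that the Gateaux derivative of $\mathbb{E}[\psi(\mathbf{W}; \theta_0, \eta)]$ at $\eta_0 = (l_0, p_0)$ vanishes along any admissible perturbation direction. A final short step will specialize the general form to the semi-parametric Gaussian treatment model to recover the implementable score used in Section 7.2.

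Identification is essentially bookkeeping. The bias-correction piece has the form $c(\mathbf{D}, \mathbf{X})(\mathbf{Y} - l_0(\mathbf{D}, \mathbf{X}))$, and by the definition of $l_0$ together with unconfoundedness we have $\mathbb{E}[\mathbf{Y} - l_0(\mathbf{D}, \mathbf{X}) \mid \mathbf{D}, \mathbf{X}] = 0$, so iterated expectations kill it. What remains, $\mathbb{E}[\partial l_0/\partial \mathbf{D}] - \theta_0$, is zero by the definition of the APE.

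The core of the proof is orthogonality with respect to $l$. Perturbing $l_0 \mapsto l_0 + t r_l$ and differentiating at $t=0$ leaves two surviving terms, $\mathbb{E}[\partial r_l/\partial \mathbf{D}]$ and $\mathbb{E}[(\partial \log p_0/\partial \mathbf{D})\, r_l(\mathbf{D}, \mathbf{X})]$. The critical step is to integrate by parts along the $d$-coordinate in the first term, writing it as $\int\!\int (\partial r_l/\partial d)\, p_0(d|\mathbf{x})\, dd\, dF(\mathbf{x})$ and using the stated boundary condition $\lim_{d \to \pm\infty} r_l(d, \mathbf{X}) p_0(d|\mathbf{X}) = 0$ to discard the boundary. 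The interior term then collapses, after rewriting $\partial p_0/\partial d$ as $p_0 \cdot \partial \log p_0/\partial d$, to exactly the negative of the second term, giving the desired cancellation; this is precisely why the Riesz-representer-like factor $\partial \log p_0/\partial \mathbf{D}$ appears in the score. Orthogonality in the $p$-direction is comparatively easy: the only $p$-dependent factor multiplies the mean-zero residual $\mathbf{Y} - l_0$ and is measurable in $(\mathbf{D}, \mathbf{X})$, so the Gateaux derivative in any admissible direction $r_p$ (restricted to $\int r_p(d|\mathbf{X}) dd = 0$ to preserve the density constraint) is zero by iterated expectations on $(\mathbf{D}, \mathbf{X})$.

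Finally, to obtain the implementable form, I would impose $\mathbf{D} \mid \mathbf{X} \sim \mathcal{N}(m_0(\mathbf{X}), v_0(\mathbf{X}))$; differentiating the log Gaussian density yields $\partial \log p_0(\mathbf{D}|\mathbf{X})/\partial \mathbf{D} = -(\mathbf{D} - m_0(\mathbf{X}))/v_0(\mathbf{X})$, and substitution into the general score immediately recovers the expression $\psi = \partial l(\mathbf{D}, \mathbf{X})/\partial \mathbf{D} - \theta + [(\mathbf{D} - m(\mathbf{X}))/v(\mathbf{X})](\mathbf{Y} - l(\mathbf{D}, \mathbf{X}))$ reported in the main text. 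The main obstacle I anticipate is the careful justification of the integration-by-parts step: one must interchange differentiation and integration, invoke the boundary condition, and ensure that the class of admissible perturbations $r_l$ is rich enough that pointwise orthogonality along these paths delivers the functional orthogonality needed for the $\sqrt{n}$ DML guarantees. The regularity conditions stated in the model setup are precisely what is required to make this step rigorous, but they must be invoked explicitly.
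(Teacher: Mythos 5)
Your proposal is correct and follows essentially the same route as the paper's proof: Gateaux differentiation in each nuisance direction, integration by parts against the conditional density with the stated boundary condition to handle the $l$-perturbation, and iterated expectations against the mean-zero residual $\mathbf{Y}-l_0$ to handle the $p$-perturbation, followed by the Gaussian specialization (which is the paper's Proposition 2). The only cosmetic difference is that you integrate by parts on the $\mathbb{E}[\partial r_l/\partial \mathbf{D}]$ term whereas the paper transforms the $\mathbb{E}[(p'_{0,\mathbf{D}}/p_0)\,r_l]$ term; the two are equivalent.
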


\begin{proof}[Proof of Neyman-Orthogonality]
A score function is Neyman-orthogonal if its Gateaux derivative with respect to the nuisance functions, evaluated at the true parameters, is zero. Let $\eta_0 = (l_0, p_0)$ and let $\theta_0$ be the true parameter values. We must show that for valid directional perturbations $[r_l]$ and $[r_p]$, the following conditions hold:
$$
\left. \frac{\partial}{\partial t} \mathbb{E}\left[\psi(\mathbf{W}; \theta_0, l_0+t[r_l], p_0)\right] \right|_{t=0} = 0
$$
$$
\left. \frac{\partial}{\partial t} \mathbb{E}\left[\psi(\mathbf{W}; \theta_0, l_0, p_0+t[r_p])\right] \right|_{t=0} = 0
$$

\begin{itemize}
    \item \textbf{1. Orthogonality with respect to the outcome model $l$}: Differentiating the expected score with respect to the path parameter $t$ and evaluating at $t=0$ yields:
    $$
    \mathbb{E}\left[ \frac{\partial r_l(\mathbf{D}, \mathbf{X})}{\partial \mathbf{D}} - \frac{1}{p_0(\mathbf{D}|\mathbf{X})}\frac{\partial p_0(\mathbf{D}|\mathbf{X})}{\partial \mathbf{D}} \left( -r_l(\mathbf{D}, \mathbf{X}) \right) \right] = \mathbb{E}\left[ \frac{\partial r_l}{\partial \mathbf{D}} \right] + \mathbb{E}\left[ \frac{p'_{0,\mathbf{D}}}{p_0} r_l \right]
    $$
    We analyze the second term by conditioning on $\mathbf{X}$ and using integration by parts over the domain of $\mathbf{D}$. Let $d$ be a realization of $\mathbf{D}$.
    $$
    \mathbb{E}_{\mathbf{D}|\mathbf{X}}\left[ \frac{p'_{0,\mathbf{D}}(d|\mathbf{X})}{p_0(d|\mathbf{X})} r_l(d, \mathbf{X}) \right] = \int \frac{p'_{0,\mathbf{D}}(d|\mathbf{X})}{p_0(d|\mathbf{X})} r_l(d, \mathbf{X}) p_0(d|\mathbf{X}) dd = \int p'_{0,\mathbf{D}}(d|\mathbf{X}) r_l(d, \mathbf{X}) dd
    $$
    Applying integration by parts, $\int u \, dv = uv - \int v \, du$, with $u = r_l(d, \mathbf{X})$ and $dv = p'_{0,\mathbf{D}}(d|\mathbf{X}) dd$, gives:
    $$
    \left[ r_l(d, \mathbf{X}) p_0(d|\mathbf{X}) \right]_{-\infty}^{\infty} - \int p_0(d|\mathbf{X}) \frac{\partial r_l(d, \mathbf{X})}{\partial d} dd
    $$
    Per our regularity conditions, the boundary term is zero. The expression simplifies to:
    $$
    - \int p_0(d|\mathbf{X}) \frac{\partial r_l}{\partial d} dd = - \mathbb{E}_{\mathbf{D}|\mathbf{X}}\left[\frac{\partial r_l}{\partial \mathbf{D}}\right]
    $$
    Substituting this back into the full expectation, the Gateaux derivative is:
    $$
    \mathbb{E}\left[ \frac{\partial r_l}{\partial \mathbf{D}} \right] - \mathbb{E}\left[ \mathbb{E}_{\mathbf{D}|\mathbf{X}}\left[\frac{\partial r_l}{\partial \mathbf{D}}\right] \right] = \mathbb{E}\left[ \frac{\partial r_l}{\partial \mathbf{D}} \right] - \mathbb{E}\left[ \frac{\partial r_l}{\partial \mathbf{D}} \right] = 0
    $$
    The score is therefore orthogonal to perturbations in $l$.

    \item \textbf{2. Orthogonality with respect to the density model $p$}: The Gateaux derivative must be computed by taking the derivative with respect to $t$ inside the expectation, as the expectation is over the true, fixed data distribution. Let $p_t = p_0 + t[r_p]$.
    $$
    \mathcal{D}_p = \left. \frac{\partial}{\partial t} \mathbb{E}_{\mathbf{W}}\left[\psi(\mathbf{W}; \theta_0, l_0, p_t)\right] \right|_{t=0}
    $$
    The only term in $\psi$ that depends on the path $p_t$ is the correction term.
    $$
    \mathcal{D}_p = \mathbb{E}_{\mathbf{W}}\left[ \left. \frac{\partial}{\partial t} \left( - \frac{\partial \log p_t(\mathbf{D}|\mathbf{X})}{\partial \mathbf{D}} \left( \mathbf{Y} - l_0(\mathbf{D}, \mathbf{X}) \right) \right) \right|_{t=0} \right]
    $$
    Assuming the perturbed path is sufficiently smooth, we can interchange the derivatives with respect to $t$ and $\mathbf{D}$ (Clairaut's Theorem):
    $$
    \frac{\partial}{\partial t} \left(\frac{\partial \log p_t}{\partial \mathbf{D}}\right) = \frac{\partial}{\partial \mathbf{D}} \left(\frac{\partial \log p_t}{\partial t}\right) = \frac{\partial}{\partial \mathbf{D}} \left(\frac{r_p}{p_t}\right)
    $$
    Evaluating this at $t=0$ (where $p_t=p_0$):
    $$
    \left. \frac{\partial}{\partial t} \left(\frac{\partial \log p_t}{\partial \mathbf{D}}\right) \right|_{t=0} = \frac{\partial}{\partial \mathbf{D}} \left(\frac{r_p(\mathbf{D}|\mathbf{X})}{p_0(\mathbf{D}|\mathbf{X})}\right)
    $$
    Substituting back into the expectation for the Gateaux derivative gives:
    $$
    \mathcal{D}_p = \mathbb{E}_{\mathbf{W}}\left[ - \frac{\partial}{\partial \mathbf{D}} \left(\frac{r_p(\mathbf{D}|\mathbf{X})}{p_0(\mathbf{D}|\mathbf{X})}\right) \left( \mathbf{Y} - l_0(\mathbf{D}, \mathbf{X}) \right) \right]
    $$
    We now apply the Law of Iterated Expectations, conditioning on $(\mathbf{D}, \mathbf{X})$:
    $$
    \mathcal{D}_p = \mathbb{E}_{(\mathbf{D}, \mathbf{X})}\left[ - \frac{\partial}{\partial \mathbf{D}} \left(\frac{r_p}{p_0}\right) \mathbb{E}[\mathbf{Y} - l_0(\mathbf{D}, \mathbf{X}) | \mathbf{D}, \mathbf{X}] \right]
    $$
    From the model setup, $\mathbb{E}[\mathbf{Y} - l_0(\mathbf{D}, \mathbf{X}) | \mathbf{D}, \mathbf{X}] = \mathbb{E}[\mathbf{U}|\mathbf{D}, \mathbf{X}] = 0$. Thus:
    $$
    \mathcal{D}_p = \mathbb{E}_{(\mathbf{D}, \mathbf{X})}\left[ - \frac{\partial}{\partial \mathbf{D}} \left(\frac{r_p}{p_0}\right) \cdot 0 \right] = 0
    $$
    The score is therefore orthogonal to perturbations in $p$.
\end{itemize}
\end{proof}

\subsection{A Practical Score via a Semi-Parametric Assumption}

The general score in Proposition 1 is difficult to implement, as it requires non-parametric estimation of a conditional density and its derivative. A practical score is obtainable by imposing a semi-parametric structure on the treatment model.

\begin{assumption}[Heteroskedastic Gaussian Treatment Model]
The conditional distribution of the treatment $\mathbf{D}$ given the confounders $\mathbf{X}$ is Gaussian, with mean function $m_0(\mathbf{X})$ and variance function $v_0(\mathbf{X})$:
$$
\mathbf{D} | \mathbf{X} \sim \mathcal{N}(m_0(\mathbf{X}), v_0(\mathbf{X}))
$$
\end{assumption}

\begin{prop}
Under Assumption 1, the general Neyman-orthogonal score from Proposition 1 simplifies to:
$$
\psi(\mathbf{W}; \theta, \eta) = \frac{\partial l(\mathbf{D}, \mathbf{X})}{\partial \mathbf{D}} - \theta + \frac{(\mathbf{D} - m(\mathbf{X}))}{v(\mathbf{X})} \left( \mathbf{Y} - l(\mathbf{D}, \mathbf{X}) \right)
$$
where the set of nuisance functions is now $\eta = (l, m, v)$.
\end{prop}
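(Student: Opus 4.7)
The plan is to obtain this simplification through direct substitution into the general score of Proposition 1, since Assumption 1 specifies a parametric form for $p(\mathbf{D}|\mathbf{X})$ that makes the density-score term explicit. First, I would write down the Gaussian conditional density under Assumption 1, namely
$$p(\mathbf{D}|\mathbf{X}) = \bigl(2\pi v(\mathbf{X})\bigr)^{-1/2} \exp\!\left\{-\frac{(\mathbf{D} - m(\mathbf{X}))^2}{2v(\mathbf{X})}\right\},$$
and note that the only object from Proposition 1 that depends on the specific form of $p$ is the log-density score $\frac{1}{p}\frac{\partial p}{\partial \mathbf{D}} = \frac{\partial \log p}{\partial \mathbf{D}}$.

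Next, I would compute this log-density score. Taking logs gives
$$\log p(\mathbf{D}|\mathbf{X}) = -\tfrac{1}{2}\log\bigl(2\pi v(\mathbf{X})\bigr) - \frac{(\mathbf{D} - m(\mathbf{X}))^2}{2v(\mathbf{X})},$$
and because $m(\mathbf{X})$ and $v(\mathbf{X})$ do not depend on the first argument, differentiating with respect to $\mathbf{D}$ eliminates the normalizing constant and yields $\partial_{\mathbf{D}} \log p(\mathbf{D}|\mathbf{X}) = -(\mathbf{D} - m(\mathbf{X}))/v(\mathbf{X})$. Substituting this into the Proposition~1 score and distributing the sign on the bias-correction term produces
$$\psi(\mathbf{W};\theta,\eta) = \frac{\partial l(\mathbf{D},\mathbf{X})}{\partial \mathbf{D}} - \theta + \frac{\mathbf{D} - m(\mathbf{X})}{v(\mathbf{X})}\bigl(\mathbf{Y} - l(\mathbf{D},\mathbf{X})\bigr),$$
which is exactly the claimed form. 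To finish, I would observe that the nuisance vector collapses from $\eta=(l,p)$ to $\eta=(l,m,v)$, because under Assumption 1 the density $p$ is fully parametrized by its first two conditional moments, making separate nonparametric estimation of $p$ unnecessary.

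There is no genuinely hard step here: the substantive analytic content—the Neyman orthogonality of $\psi$—was already established at the fully nonparametric level in Proposition 1, and imposing a parametric family for $p_0$ does not affect orthogonality, only the implementable form of the correction term. The single subtlety worth flagging is that the regularity conditions invoked for Proposition 1, in particular the boundary condition $\lim_{d\to\pm\infty} r_l(d,\mathbf{X})\,p_0(d|\mathbf{X}) = 0$ used in the integration by parts, continue to hold under a Gaussian $p_0$, since Gaussian tails decay faster than any polynomial and therefore dominate any perturbation $r_l$ of polynomial growth. Consequently, the score displayed in Proposition 2 inherits Neyman orthogonality with respect to $(l,m,v)$ from the general result, delivering a score that can be operationalized using standard regression estimators for the conditional mean and variance of $\mathbf{D}$ given $\mathbf{X}$, as employed in Section 7.2.
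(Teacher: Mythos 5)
Your proof is correct and follows essentially the same route as the paper: compute the Gaussian log-density score $\partial_{\mathbf{D}}\log p(\mathbf{D}|\mathbf{X}) = -(\mathbf{D}-m(\mathbf{X}))/v(\mathbf{X})$ and substitute it into the general score from Proposition 1. One minor remark: where you assert that orthogonality with respect to $(l,m,v)$ is simply ``inherited,'' the paper gives a short direct Gateaux-derivative argument for perturbations of $m$ and $v$ (since these are not arbitrary density perturbations but specific parametric ones), though this is ancillary to the simplification claim itself.
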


\begin{proof}[Proof of Score Simplification]
We need to evaluate the term $\frac{1}{p(\mathbf{D}|\mathbf{X})}\frac{\partial p(\mathbf{D}|\mathbf{X})}{\partial \mathbf{D}}$, which is the score of the log-likelihood, $\frac{\partial \log p(\mathbf{D}|\mathbf{X})}{\partial \mathbf{D}}$. Under Assumption 1, the conditional log-density is:
$$
\log p(\mathbf{D}|\mathbf{X}) = -\frac{1}{2}\log(2\pi v(\mathbf{X})) - \frac{(\mathbf{D}-m(\mathbf{X}))^2}{2v(\mathbf{X})}
$$
Taking the partial derivative with respect to $\mathbf{D}$ (treating $m(\mathbf{X})$ and $v(\mathbf{X})$ as constant with respect to $\mathbf{D}$):
$$
\frac{\partial \log p(\mathbf{D}|\mathbf{X})}{\partial \mathbf{D}} = - \frac{2(\mathbf{D}-m(\mathbf{X}))}{2v(\mathbf{X})} = - \frac{\mathbf{D}-m(\mathbf{X})}{v(\mathbf{X})}
$$
Substituting this result into the general score function from Proposition 1 yields the desired practical score:
$$
\psi(\dots) = \frac{\partial l}{\partial \mathbf{D}} - \theta - \left( - \frac{\mathbf{D}-m(\mathbf{X})}{v(\mathbf{X})} \right) (\mathbf{Y} - l(\mathbf{D}, \mathbf{X})) = \frac{\partial l}{\partial \mathbf{D}} - \theta + \frac{\mathbf{D}-m(\mathbf{X})}{v(\mathbf{X})} (\mathbf{Y} - l(\mathbf{D}, \mathbf{X}))
$$
\end{proof}

\begin{proof}[Proof of Orthogonality for the Practical Score]
The score in Proposition 2 is orthogonal to perturbations in $l$ by the same argument as in Proposition 1. We must also show it is orthogonal to perturbations in the new nuisance functions $m$ and $v$. The argument is as follows:

\begin{itemize}
    \item \textbf{Orthogonality w.r.t. $m$}: Let $m_t = m_0 + t[r_m]$ be a perturbed path for the true function $m_0$. The Gateaux derivative of the expected score is:
    $$
    \left. \frac{\partial}{\partial t} \mathbb{E}\left[ \psi(\mathbf{W}; \theta_0, l_0, m_t, v_0) \right] \right|_{t=0} = \mathbb{E}\left[ \left.\frac{\partial}{\partial t} \left( \frac{\mathbf{D} - m_t(\mathbf{X})}{v_0(\mathbf{X})} \right)\right|_{t=0} (\mathbf{Y} - l_0(\mathbf{D}, \mathbf{X})) \right] = \mathbb{E}\left[ \frac{-r_m(\mathbf{X})}{v_0(\mathbf{X})} (\mathbf{Y} - l_0) \right]
    $$
    By the Law of Iterated Expectations, conditioning on $\mathbf{X}$:
    $$
    \mathbb{E}_{\mathbf{X}}\left[ \frac{-r_m(\mathbf{X})}{v_0(\mathbf{X})} \mathbb{E}[\mathbf{Y} - l_0(\mathbf{D}, \mathbf{X}) | \mathbf{X}] \right]
    $$
    Since $\mathbb{E}[\mathbf{Y} - l_0(\mathbf{D}, \mathbf{X}) | \mathbf{X}] = \mathbb{E}[\mathbb{E}[\mathbf{Y} - l_0 | \mathbf{D}, \mathbf{X}] | \mathbf{X}] = \mathbb{E}[0 | \mathbf{X}] = 0$, the entire expression is zero.

    \item \textbf{Orthogonality w.r.t. $v$}: Similarly, for a path $v_t = v_0 + t[r_v]$, the Gateaux derivative is:
    $$
    \left. \frac{\partial}{\partial t} \mathbb{E}\left[ \psi(\mathbf{W}; \theta_0, l_0, m_0, v_t) \right] \right|_{t=0} = \mathbb{E}\left[ \left.\frac{\partial}{\partial t} \left( \frac{\mathbf{D} - m_0(\mathbf{X})}{v_t(\mathbf{X})} \right)\right|_{t=0} (\mathbf{Y} - l_0) \right]
    $$
    $$
    = \mathbb{E}\left[ -\frac{\mathbf{D} - m_0(\mathbf{X})}{v_0(\mathbf{X})^2}r_v(\mathbf{X}) (\mathbf{Y} - l_0) \right]
    $$
    Again, conditioning on $\mathbf{X}$:
    $$
    \mathbb{E}_{\mathbf{X}}\left[ -\frac{r_v(\mathbf{X})}{v_0(\mathbf{X})^2} \mathbb{E}[(\mathbf{D} - m_0(\mathbf{X}))(\mathbf{Y} - l_0(\mathbf{D}, \mathbf{X})) | \mathbf{X}] \right]
    $$
    The inner expectation is zero because $\mathbb{E}[(\mathbf{D} - m_0)(\mathbf{Y} - l_0) | \mathbf{X}] = \mathbb{E}_{\mathbf{D}|\mathbf{X}}[(\mathbf{D} - m_0) \mathbb{E}[\mathbf{Y}-l_0|\mathbf{D},\mathbf{X}] | \mathbf{X}] = \mathbb{E}_{\mathbf{D}|\mathbf{X}}[(\mathbf{D} - m_0) \cdot 0 | \mathbf{X}] = 0$. Thus, the score is orthogonal to perturbations in both $m$ and $v$.
\end{itemize}
\end{proof}

\section{Robustness of the Median Estimator for the APE}

\label{sec:appendix_ape_median}

In this appendix, we describe the justification for using a median-of-scores estimator and non-parametric bootstrap for inference, instead of the sample mean of the scores and its analytical variance. It is essential to make this choice in order ensure our causal estimates are robust against the estimation noise from the machine learning nuisance functions, which could cause score distributions to have fat tails and high skewness.

\subsection{The Problem: Unreliable Estimation with Heavy-Tailed Scores}

The practical score function for APE has a bias correction term that is inversely proportional to the conditional variance of treatment: $\hat{v}(\mathbf{X})$ (this treatment nuisance function is fitted with a machine learning algorithm). When $\hat{v}(\mathbf{X})$ is predicted to be small or close to zero for some observations, the bias correction term produces extreme outliers in the scores distribution $\hat{\psi}i$.

When these outliers are present, the sample mean is no longer a reliable way to estimate the central tendency of the distribution. In Figure \ref{fig:ape_score_dist}, we demonstrate this problem for the treatment variable \seqsplit{amihud\_illiquidity\_trend\_z\_scaled\_std}. The distribution is shown to be skewed and heavy-tailed. There are a few extreme positive outliers that almost cancel the statistical weight of the much more numerous scores with small negative values. Thus, we get a sample mean ($0.0010$) that is deceivingly close to zero. A naive interpretation of the mean would have resulted in a mischaracterization of a null causal effect.

\begin{figure}[htbp]
\centering
\includegraphics[width=0.9\textwidth]{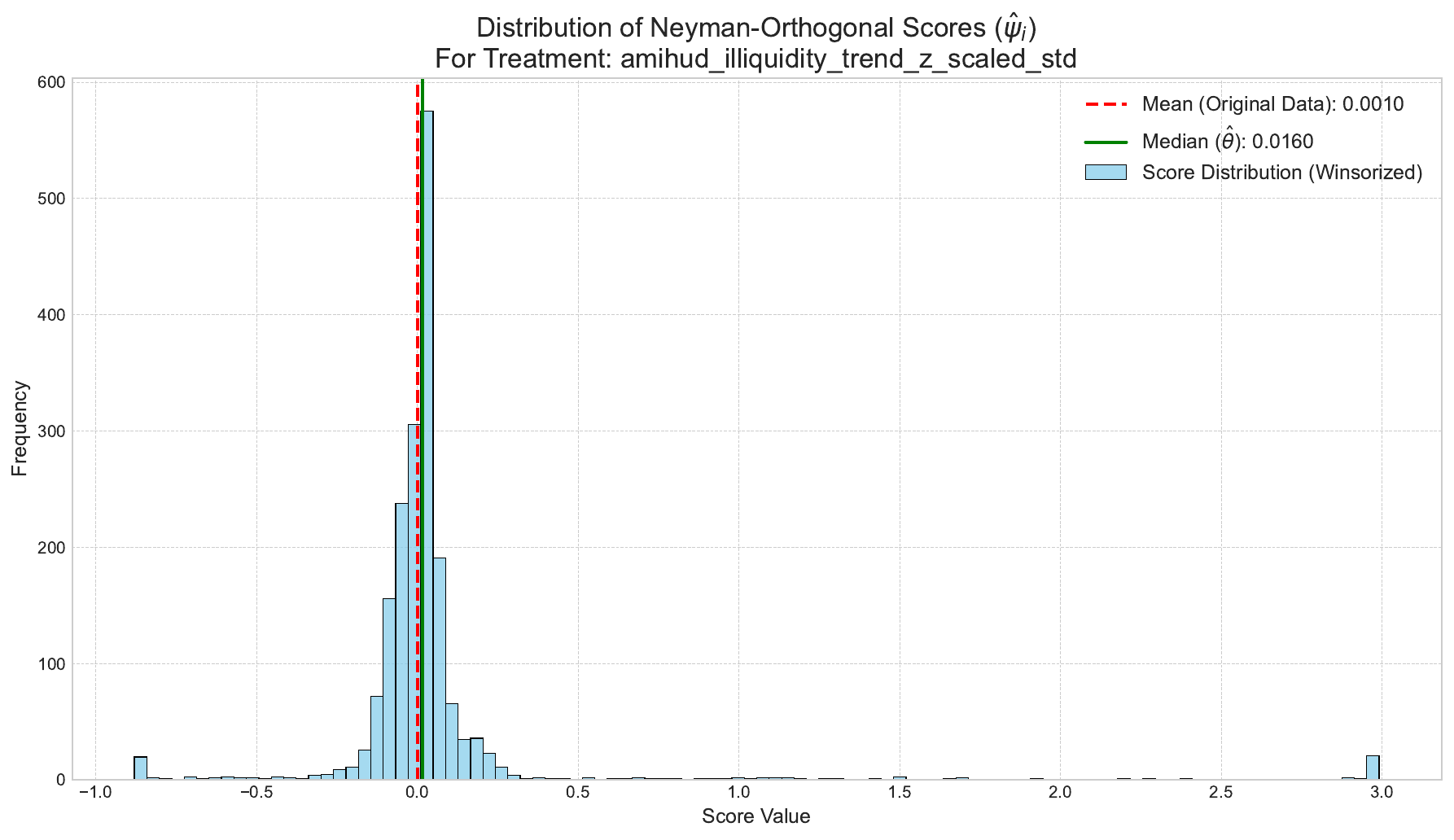} 
\caption{Distribution of Neyman-Orthogonal Scores for the APE}
\label{fig:ape_score_dist}
\justify
\small{\textit{Notes:} This figure shows the histogram of the calculated Neyman-orthogonal scores ($\hat{\psi}_i$) for the DML-APE model where the treatment variable is \seqsplit{amihud\_illiquidity\_trend\_z\_scaled\_std}. The distribution is heavy-tailed and skewed. The red dashed line indicates the sample mean of the original scores (0.0010). The green solid line marks the sample median (0.0160), which is used as the robust point estimate for the APE ($\hat{\theta}$). This plot demonstrates how the sample mean can be a misleading measure of central tendency; in this case, it is driven toward zero by the cancelling effects of outliers, while the median robustly captures the positive shift in the core of the distribution. For visualization, the scores have been winsorized at the 1st and 99th percentiles.}
\end{figure}

\subsection{The Solution: Median Estimator \& Bootstrap Inference}

Instead of calculating the mean, we can use the sample median as a robust point estimator. The median is the 50th percentile, and its actual value only depends on all of the scores at the center of the distribution, immune to the outliers are in the tails. Therefore, by using the median of the $\hat{\psi}_i$ scores as a point estimate for $\theta_0$, we have an estimator that is robust to noisy outputs from the nuisance models.

Since analytical formulas for the standard error are also not suitable for heavy-tailed distribution, we use a non-parametric bootstrap to estimate the standard-error for the median. We resample our calculated scores repeatedly and compute the median from each bootstrap sample, producing an empirical sampling distribution for our estimator. This bootstrapping allows us to produce a more credible standard-error.

\subsection{Significance}

This procedure is beneficial for credible inference. As shown in Figure \ref{fig:ape_score_dist}, if we calculated the sample mean, we would get a point estimate around zero and would erroneously concluded a null causal effect (Type II error). The sample median, on the other hand, identifies the positive and statically significant signal. The median/bootstrap median framework helps defend our analysis against the influence of outliers, enhancing the ability to identify credible economic conclusions, and systematically minimize the likelihood of erroneous null findings.

\section*{Online Appendix}
A supplementary online appendix, containing the complete results of the Double / Debiased Machine Learning (DML) estimation and the full sensitivity analysis for all variables tested, is available at the following persistent URL: \href{https://github.com/jackraorpl/market-trough-prediction-appendix/blob/89aaba27f707f9b223f0afbdc9f951d0f8155db2/ONLINE_APPENDIX.md}{github.com/jackraorpl/market-trough-prediction-appendix}.

\clearpage

\end{document}